\theoremstyle{definition}
\newtheorem{definition}{Definition}
\theoremstyle{plain}
\newtheorem{theorem}{Theorem}
\newtheorem{lemma}{Lemma}
\newtheorem{corollary}{Corollary}
\newtheorem{conjecture}{Conjecture}
\begin{document}
\title{Dominated Minimal Separators are Tame \\ {(Nearly All Others are Feral)}}

\author{
Peter Gartland\thanks{University of California, Santa Barbara, USA. Emails: \texttt{petergartland@ucsb.edu}, \texttt{daniello@ucsb.edu}}
\and Daniel Lokshtanov\footnotemark[1]
}
\maketitle

\begin{abstract}
A vertex set $S$ in a graph $G$ is a {\em minimal separator} if there exist vertices $u$ and $v$ that are in distinct connected components of $G - S$, but in the same connected component of $G - S'$ for every $S' \subset S$. A class ${\cal F}$ of graphs is called {\em tame} if there exists a constant $k$ so that every graph in ${\cal F}$ on $n$ vertices contains at most $O(n^k)$ minimal separators. If there exists a constant $k$ so that every graph in ${\cal F}$ on $n$ vertices contains at most $O(n^{k \log n})$ minimal separators the class is {\em strongly-quasi-tame}. If there exists a constant $c > 1$ so that ${\cal F}$ contains $n$-vertex graphs with at least $c^n$ minimal separators for arbitrarily large $n$ then ${\cal F}$ is called {\em feral}. The classification of graph classes into tame or feral has numerous algorithmic consequences, and has recently received considerable attention.

A key graph-theoretic object in the quest for such a classification is the notion of a $k$-{\em creature}. 
A $k$-creature consists of $4$ disjoint vertex sets $A,B,X = \{x_1,\ldots,x_k\}, Y = \{y_1, \ldots y_k\}$ such that: (a) $A$ and $B$ are connected, (b) there are no edges from $A$ to $Y \cup B$ and no edges from $B$ to $X \cup A$, (c) $A$ dominates $X$ (every vertex in $X$ has a neighbor in $A$) and $B$ dominates $Y$ and (d) $x_iy_j$ is an edge if and only if $i = j$. It is easy to verify that a $k$-creature contains at least $2^k$ minimal separators. On the other hand, in a recent manuscript [Abrishami et al., Arxiv 2020] conjecture that every  hereditary class ${\cal F}$ that excludes $k$-creatures for some fixed constant $k$ is tame.

Our main result is a proof of a weaker form of the conjecture of Abrishami et al. More concretely, we prove that a hereditary class ${\cal F}$ is strongly quasi-tame if it excludes $k$-creatures for some fixed constant $k$ and additionally every minimal separator can be dominated by another fixed constant $k'$ number of vertices. The tools developed on the way lead to a number of additional results of independent interest. 

{\bf (i)} We obtain a complete classification of all hereditary graph classes defined by a finite set of forbidden induced subgraphs into strongly quasi-tame or feral. This substantially generalizes a recent result of Milani\v{c} and Piva\v{c} [WG'19], who classified all hereditary graph classes defined by a finite set of forbidden induced subgraphs on at most $4$ vertices into tame or feral.
{\bf (ii)} We show that every hereditary class that excludes $k$-creatures and additionally excludes all cycles of length at least $c$, for some constant $c$, is tame. This generalizes the result of [Chudnovsky et al., Arxiv 2019] who obtained the same statement for $c = 5$.
{\bf (iii)} We show that every hereditary class that excludes $k$-creatures and additionally excludes a complete graph on $c$ vertices for some fixed constant $c$ is tame.
{\bf (iv)} Finally we show that the domination requirement in our main result can not be dropped. Specifically we give an example of a feral family that excludes 100-creatures, disproving the aforementioned conjecture of Abrishami et al.

\end{abstract}

\thispagestyle{empty}

\newpage

\setcounter{page}{1}

\section{Introduction}\label{sec:intro}
Let $G$ be a graph and $u$ and $v$ be distinct vertices in $G$. A vertex set $S$ is a $u$-$v$-{\em separator} if $u$ and $v$ are in distinct components of $G-S$. The set $S$ is a {\em minimal} $u$-$v$-{\em separator} if $S$ is a $u$-$v$-separator, but no proper subset of $S$ is a $u$-$v$-separator. Finally, $S$ is a {\em minimal separator} if $S$ is a minimal $u$-$v$-separator for some pair of vertices $u$ and $v$.
Minimal separators have a tremendous role in the design of graph algorithms, both directly, such as in the structural characterization of chordal graphs~\cite{brandstadt1999graph} but also indirectly in optimization algorithms for graph separation and routing problems (for example~\cite{menger1927allgemeinen,Marx06,RobertsonS95b}). The theory of potential maximal cliques, developed by Bouchitt{\'{e}} and Todinca~\cite{BouchitteT01} implies that a several fundamental graph problems, such as computing the {\em treewidth} and {\em minimum fill in} of a graph $G$ can be done in time polynomial in the number of vertices of $G$ and the number of minimal separators in $G$.
Lokshtanov~\cite{Lokshtanov10} showed that the same result holds for computing the {\em tree-length} of the graph $G$, while Fomin et al.~\cite{FominTV15} proved a general result that showed that a whole class of problems (including e.g. {\em maximum independent set} and {\em minimum feedback vertex set}) can be solved in time polynomial in the number of vertices and minimal separators of the graph. All of these algorithms require a list of all the minimal separators of $G$ to be provided as input. However, the listing algorithms for minimal separators of Kloks and Kratsch~\cite{KloksK94} or Berry et al.~\cite{BerryBC00} can be used to compute such a list in time polynomial in the number of vertices times a factor linear in the number of minimal separators of $G$.

This brings to the forefront the main question asked in this paper - {\em which classes of graphs have polynomially many minimal separators?} We will say that a graph class ${\cal F}$ is {\em tame} if there exists an integer $k$ so that every graph in ${\cal F}$ on $n$ vertices has at most $O(n^k)$ minimal separators. 
A number of important graph classes have been shown to be tame, such Chordal~\cite{brandstadt1999graph} (and more generally Weakly Chordal~\cite{BouchitteT01}), Permutation (and, more generally $d$-Trapezoid~\cite{kratsch1996structure}), Circular Arc~\cite{KloksKW98} and Polygon Circle graphs~\cite{suchan2003minimal}. Most of these results date back to the late 1990s and early 2000s.
Much more recently~\cite{abrishami2020graphs,Chudnovsky2019MaximumIS,chudnovsky2019maximum,milani2019minimal}, research has started to focus on a more systematic classification of which graph classes are tame and which are not. Indeed the term {\em tame} for graph classes with polynomially many minimal separators was defined by Milani\v{c} and Piva\v{c}~\cite{milani2019minimal}, who classified all hereditary (closed under vertex deletion) classes defined by a set of forbidden induced subgraphs, all of which have at most $4$ vertices, as tame or not tame. 

Building on the terminology of Milani\v{c} and Piva\v{c}~\cite{milani2019minimal}, we will say that a class of graphs ${\cal F}$ is {\em quasi-tame} if there exist constants $k, c$ such that every $n$-vertex graph in the family contains at most $O(n^{k \log^c n})$ minimal separators. Further, ${\cal F}$ is {\em strongly quasi-tame} if it is quasi-tame with $c \leq 1$. On the opposite side of the spectrum, we will say that  ${\cal F}$ is {\em feral} if there exists a constant $c$ such that for every $N \geq 0$ there exists an $n \geq N$ such that ${\cal F}$ contains an $n$-vertex graph with at least $c^n$ minimal separators.

Abrishami et al~\cite{abrishami2020graphs} define a structure, called a $k$-creature, the presence of which appears to control, to a large extent, whether a graph has many or few minimal separators. 
A $k$-{\em creature} in a graph $G$ is a four-tuple $(A, B$, $X = \{x_1, x_2, \ldots, x_k\}$, $Y = \{y_1, y_2, \ldots, y_k\})$ of mutually disjoint vertex subsets of $V(G)$, satisfying the following conditions (see Figure \ref{k-creature}). 
\begin{enumerate}\setlength\itemsep{-.6mm}
    \item $A$ and $B$ are connected,
    \item $A$ and $Y \cup B$ are {\em anti-complete} (i.e. no vertex in $A$ is adjacent to a vertex in $B \cup Y$) $B$ is anti-complete with $X \cup Y$.
    \item $A$ dominates $X$ (every vertex in $X$ has a neighbor in $A$) and $B$ dominates $Y$, and
    \item $x_iy_j$ is an edge if and only if $i = j$.
\end{enumerate}


    
    

\begin{figure}
\centerline{\includegraphics[scale=.6]{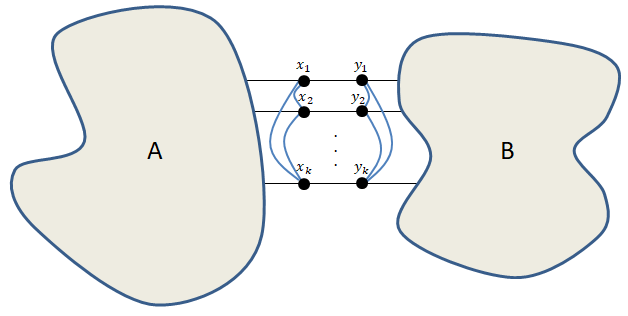}}
\caption{\em A graph induced by the vertices of a k-creature. The blue edges indicate that $x_i$ ($y_i$) may or may not be neighbors with $x_j$ ($y_j$)}
\label{k-creature}
\end{figure}




A graph $G$ is $k$-{\em creature-free} if there does not exists a tuple of vertex sets of $V(G)$ that form a $k$-creature. It is easy to see that a $k$-creature contains at least $2^k$ minimal separators (select precisely one of $\{x_i, y_i\}$ for every $i \leq k$). Because deleting a vertex can not increase the number of minimal separators, a graph $G$ that contains a $k$-creature contains at least $2^k$ minimal separators. Thus, a graph family ${\cal F}$ that contains $n$-vertex graphs with $k$-creatures for arbitrarily large $n$ and with $k = \Omega(k)$ is feral. For ${\cal F}$ to not be tame it is sufficient for $k$ to grow super-logarithmically with $n$ (i.e $n = 2^{o(\log k)}$). A sort of converse to this observation was conjectured in \cite{abrishami2020graphs}. 

\begin{conjecture}\cite{abrishami2020graphs}\label{false conjecture} 
For every fixed natural number $k$, the family of graphs that are $k$-creature-free is tame.
\end{conjecture}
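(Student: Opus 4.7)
The plan is to attack the contrapositive: given a graph $G$ with super-polynomially many minimal separators, produce a $k$-creature inside it for every fixed $k$. The natural starting data is that for each minimal separator $S$ there are at least two \emph{full components} $C, D$ of $G - S$ in which every vertex of $S$ has a neighbor. This associates a canonical ``component signature'' to each $S$, and a pigeonhole argument on the super-polynomial collection of minimal separators should force a super-polynomial subfamily whose members all share the same two full components $C$ and $D$.

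Having restricted attention to a large family of minimal separators between a fixed pair of full components $C, D$, I would try to pick two separators $S_1, S_2$ whose symmetric difference $S_1 \triangle S_2$ is large. The sets $A \subseteq C$ and $B \subseteq D$ of a putative $k$-creature would be built as connected subgraphs (using shortest paths between well-chosen anchors) guaranteeing that $A$ is anti-complete to $B \cup (S_2 \setminus S_1)$ and that $B$ is anti-complete to $A \cup (S_1 \setminus S_2)$. The vertices in $S_1 \setminus S_2$ become candidates for $X$, and those in $S_2 \setminus S_1$ become candidates for $Y$; minimality of the separators $S_1, S_2$ in $G$ should yield that every vertex in $X$ has a neighbor in $A$ and every vertex in $Y$ has a neighbor in $B$, giving conditions (1)--(3) of the creature definition.

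The main obstacle, by a wide margin, is condition (4): extracting an induced perfect matching $x_i y_j$ iff $i = j$ of linear size from the bipartite graph between the candidate sets $X$ and $Y$. The generic tool would be a bipartite Ramsey theorem: any sufficiently large bipartite graph contains either a large induced matching, a large complete bipartite subgraph, or a large independent bipartite pair. The independent-pair case can likely be ruled out by rerouting through the full components, but the complete-bipartite case is dangerous: if many pairs in $X \times Y$ are edges, the matching structure is destroyed. To handle it, I would attempt to iterate the construction, replacing $X$ and $Y$ by vertices coming from many different pairs of minimal separators and averaging, hoping to show that avoiding induced matchings forces a polynomial cap on the number of separators.

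I expect this last step is exactly where the argument breaks down. Without an extra hypothesis that tames the interaction between separators and their neighborhoods, there is no evident reason why the bipartite graph between $X$ and $Y$ should be sparse enough to yield an induced matching rather than a near-bipartite-complete structure. This is consistent with the paper's item~(iv), which asserts a feral family free of $100$-creatures; I would in fact try to reverse the failed proof to construct such a family by engineering many minimal separators whose pairwise differences always induce large complete bipartite graphs, thereby blocking any large induced matching and hence any large creature.
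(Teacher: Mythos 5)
You correctly identify that Conjecture~\ref{false conjecture} is false; the paper's contribution here is a \emph{disproof}, not a proof, and the counterexample is the family of $k$-twisted ladders constructed in Section~\ref{counter example section}. Your attempted proof of the contrapositive is a sensible place to start, and your diagnosis that condition~(4) of the creature definition (extracting a linear-sized induced matching between the candidate sets $X$ and $Y$) is where the argument must collapse is a reasonable intuition in the abstract.

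However, the mechanism you propose for a counterexample --- engineering minimal separators whose pairwise symmetric differences always induce large complete bipartite graphs, so that the putative $X$--$Y$ bipartite graph is too dense to contain a large induced matching --- is \emph{not} how the paper's $k$-twisted ladder works. In the twisted ladder, the symmetric difference of two of its $2^k$ minimal separators is a sparse, spread-out set of vertices arising from independent per-block binary choices, and it certainly does not induce a dense bipartite structure. What actually blocks large creatures (Lemma~\ref{counter example}) is geometric rigidity: if $(A, B, X, Y)$ were a creature, connectedness of $A$ and $B$ plus the required anti-completeness forces $A$ and $B$ to be two disjoint paths threaded along the two rails of the ladder; no vertex of the creature may lie in $N[A] \cap N[B]$, and this rules out placing any $x_i$ or $y_i$ in the interior blocks, confining all of $X \cup Y$ to the two constant-size end blocks. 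So the failure occurs already at conditions~(1)--(3) --- one cannot attach more than $O(1)$ candidate $x_i$'s and $y_i$'s to $A$ and $B$ at all --- rather than at the matching condition~(4). Your density-based construction idea is a genuinely different route; it would need independent verification, and it is not at all clear that forcing such dense bipartite interactions does not itself create $k$-creatures (or, e.g., $k$-prisms) somewhere else in the graph.
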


%
%



Even if Conjecture~\ref{false conjecture} were to be true, it would still not give a complete characterization of hereditary graph classes into tame or non-tame. In particular Abrishami et al~\cite{abrishami2020graphs} give an example of a tame hereditary class ${\cal F}$ that contains arbitrarily large $k$-creatures. Their example can also be slightly modified to show that there exist hereditary families that are neither tame nor feral. This makes it appear that, at least for hereditary classes in their full generality, the boundary between tame and non-tame graph classes is so ``strange-looking'' that a complete dichotomy may be out of reach, and that we therefore have to settle for sufficient conditions for tameness / non-tameness, and possibly complete characterizations for more well-behaved sub-classes of hereditary families. 
For an example, Conjecture~\ref{false conjecture}, if true, would have yielded a complete dichotomy into tame or feral for all classes of graphs closed under {\em induced minors} (i.e closed under vertex deletion and edge contraction). 

Unfortunately it turns out that {\bf Conjecture~\ref{false conjecture} is false}. In particular we give (in Section~\ref{counter example section}) an example of a feral family ${\cal F}$ that excludes $100$-creatures. The family ${\cal F}$ consists of all $k$-{\em twisted ladders} (see Section~\ref{counter example section}). Our main result is nevertheless that Conjecture~\ref{false conjecture} is true ``in spirit'', in the sense that for large classes of hereditary families, excluding $k$-creatures does imply few minimal separators. To state Theorem~\ref{theorem 1} we need to define $k$-{\em skinny ladders}. A $k$-{\em skinny-ladder} is a graph $G$ consisting of two paths $P_l = \ell_1\ell_2 \ldots \ell_k$ and $P_r = r_1r_2 \ldots r_k$ and a set $\{s_1, s_2,\ldots, s_k\}$ of vertices such that for every $i$, $s_i$ is adjacent to $\ell_i$ and $r_i$ and to no other vertices. 



    

\begin{theorem}\label{theorem 1}
For every natural number $k$, the family of graphs that are $k$-creature free and do not contain a $k$-skinny-ladder as an induced minor is strongly-quasi-tame.
\end{theorem}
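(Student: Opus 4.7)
The plan is to derive Theorem~\ref{theorem 1} from the main result announced in the abstract, namely that every hereditary class that is $k$-creature-free and in which every minimal separator can be dominated by at most $k'$ vertices is strongly-quasi-tame. So it suffices to prove the following auxiliary statement: there is a function $f$ such that if $G$ is $k$-creature-free and contains no $k$-skinny-ladder as an induced minor, then every minimal separator of $G$ is dominated by at most $f(k)$ vertices. Intuitively, a separator that is hard to dominate spreads out into many ``private'' pieces, and these pieces plus witnessing paths on each side of the separator are precisely what a skinny ladder looks like.

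To establish the auxiliary statement, fix a minimal $u$-$v$-separator $S$ and let $C_u, C_v$ denote the full components of $G-S$ containing $u$ and $v$; by minimality every vertex of $S$ has a neighbor in each of $C_u$ and $C_v$. Suppose for contradiction that $S$ cannot be dominated by $f(k)$ vertices. I would greedily pick a long sequence $d_1,\dots,d_m \in C_u \cup C_v$ together with ``private'' separator vertices $s_1,\dots,s_m \in S$ such that $d_i$ is adjacent to $s_i$ but not to any $s_j$ with $j<i$. After pigeonholing we may assume that all $d_i$ lie on the same side, say in $C_u$, and a round of Ramsey-style cleaning lets us further assume $d_i s_j \notin E(G)$ for all $j\neq i$, yielding a perfect matching between the $d_i$ and the $s_i$.

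I would then repeat the construction on the $C_v$ side, picking for each remaining $s_i$ a neighbor $e_i \in C_v$ (guaranteed by minimality) and thinning to a matching $e_i \leftrightarrow s_i$. This is the step where $k$-creature-freeness kicks in: taking $A$ to be a connected subgraph of $C_u$ through the chosen $d_i$ and $B$ an analogous subgraph of $C_v$ through the chosen $e_i$, any failure of the pruning manifests itself as the matching-edge pattern $x_i y_j$ of a nascent $k$-creature, which we have excluded. Finally I would contract spanning trees of $C_u$ and $C_v$ through the chosen $d_i$ and $e_i$ down to two paths $P_l, P_r$, while keeping the $s_i$ as singleton branch sets; the resulting induced minor is a $k$-skinny-ladder if $m$ is large enough compared to $k$, contradicting our hypothesis.

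The main obstacle will be controlling \emph{induced}-minor structure rather than plain minor structure: one must ensure that after all contractions no spurious edge remains between any $s_i$ and the contracted branch sets of $\ell_j$ or $r_j$ for $j\neq i$, and simultaneously that $P_l$ and $P_r$ are induced paths that are anti-complete to each other. Both properties are delicate and interact, so the extraction will need several rounds of Ramsey pruning, each paid for either by reducing $m$ by a constant factor or by producing a $k$-creature and terminating. Once the auxiliary statement is in hand, Theorem~\ref{theorem 1} follows immediately by feeding its bound into the paper's main theorem.
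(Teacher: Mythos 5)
Your high-level plan matches the paper's: prove that $k$-creature-freeness plus $k$-skinny-ladder exclusion forces every minimal separator to be dominated by $f(k)$ vertices, and then feed this into a counting theorem for dominated separators. This two-part structure is exactly the paper's, where the counting half is Corollary~\ref{vc corollary} (a VC-dimension argument showing the neighborhood of a fixed vertex can meet minimal separators in only $n^{k+1}$ ways) combined with Lemma~\ref{quasi-poly minimal separators} (a branching bound giving $(c+n^k)^{O_k(\log n)}$), and the domination half is Lemma~\ref{domination or creature/ladder}.

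However, there is a genuine gap in the proposed proof of the domination lemma, and it is precisely at the step you flag as the ``main obstacle.'' Contracting spanning trees of $C_u$ and $C_v$ ``down to two paths'' through the chosen $d_i$, $e_i$ does not produce an induced minor with the required structure: a tree contracted onto a path generally carries extraneous chords and collapses distinct branch vertices, destroying the disjointness of the $d_i$'s. The paper avoids this by first invoking the non-obvious fact (Lemma~\ref{path domination}) that in a $k$-creature-free graph $S$ is already dominated by fewer than $k$ \emph{induced paths} inside each full component; $k$-creature-freeness is what bounds the number of leaves of a minimal dominating tree. Pigeonholing onto a single induced path $L$ on the $C_u$ side and a single induced path $R$ on the $C_v$ side is then the real starting point, and all subsequent cleaning happens within this two-path picture. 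Your Ramsey-style pruning to get a matching $d_i\leftrightarrow s_i$ also needs to be made compatible with the path structure: the paper's Lemmas~\ref{path domination helper} and~\ref{path domination helper 2} use $k$-creature-freeness to peel away $O(k^2)$ vertices at a time so that no vertex of $L$ or $R$ sees more than one surviving $s_i$, and then an auxiliary directed graph plus Lemma~\ref{directed independent set} extracts an ``almost-skinny-ladder'' in which the $s_i$'s have private intervals on $L$ and on $R$ separately.

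Even after that, one more nontrivial step remains that your sketch does not mention: the private intervals along $L$ and along $R$ need not be ordered consistently with one another, and the $k$-skinny-ladder induced minor requires a monotone matching of the two orders. The paper handles this with an Erd\H{o}s--Szekeres argument (Lemma~\ref{almost skinny implies skinny}), at the cost of going from $k^2$ to $k$ surviving indices. Finally, a third ingredient the sketch omits is that the $L$-side cleaning process can run for unboundedly many rounds unless something stops it; the paper bounds the number of rounds by observing that each round produces a fresh dominating subpath, and $k$ pairwise anti-complete dominating subpaths on the same vertex set yield a $k$-theta (Lemma~\ref{many dominating sets}), which itself contains a $k$-creature. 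So the approach is correct in outline but the ``several rounds of Ramsey pruning, each paid for either by reducing $m$ by a constant factor or by producing a $k$-creature'' hides the actual mechanism: it is not one Ramsey application but a path-domination reduction, a bounded-round directed-graph cleaning process that terminates by producing a $k$-theta, and an Erd\H{o}s--Szekeres reordering, each of which plays a distinct, irreplaceable role.
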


Theorem \ref{theorem 1} suggests that other counterexamples to Conjecture \ref{false conjecture} should resemble the counterexample we provide in Section \ref{counter example section}. We do not have any examples of classes that are strongly quasi-tame according to Theorem~\ref{theorem 1}, and conjecture that the statement of Theorem~\ref{theorem 1} remains true even if  strongly quasi-tame is replaced by tame.  

Excluding the $k$-skinny ladder is closely tied to {\em domination} of minimal separators. A vertex set $X$ {\em dominates} $S$ of every vertex in $S$ is either in $X$ or has a neighbor in $X$.
An important ingredient in the proof of Theorem~\ref{theorem 1} (see Lemma~\ref{domination or creature/ladder}) is that for every $k$ there exists a $k'$ so that if $G$ excludes $k$-creatures and $k$-skinny ladders as induced minors then every minimal separator $S$ in $G$ is dominated by a set $X$ on at most $k'$ vertices. 
In fact, because a $k$-skinny ladder is itself $5$-creature-free and contains a minimal separator (namely $S$) which can not be dominated by $k-1$ vertices, among the hereditary classes ${\cal F}$ that exclude $k$-creatures, the presence or absence of $k$-skinny ladders precisely characterizes whether every minimal separator of every graph in ${\cal F}$ can be dominated by a constant size set of vertices.

To demonstrate the power of Theorem~\ref{theorem 1} we show that it gives, as a pretty direct consequence, a complete classification of all hereditary graph classes defined by a finite set of forbidden induced subgraphs into strongly quasi-tame or feral. Indeed, it is an easy exercise to show that if a family ${\cal F}$ is defined by a finite set of forbidden induced subgraphs and contains skinny $k$-ladders for arbitrarily large $k$, then there exists a constant $f$ such that ${\cal F}$ either contains all $f$-subdivisions of $3$-regular graphs (an $f$-{\em subdivision} of $G$ is the graph obtained from $G$ by replacing each edge of $G$ by a path on $f+1$ edges) or all line graphs~(see \cite{DiestelBook} for a definition) of $f$-subdivisions of $3$-regular graphs. In this case ${\cal F}$ is feral. Therefore, Theorem~\ref{theorem 1} proves Conjecture~\ref{false conjecture} for hereditary graph classes defined by a finite set of forbidden induced subgraphs, albeit with strongly quasi-tame instead of tame. We obtain Theorem~\ref{theorem 2} by extracting a small set of graphs that themselves contain large $k$-creatures, such that graphs that contain $k$-creatures contain one of them as an induced subgraph. We refer to Figure~\ref{forbidden graphs} as well as to Section~\ref{sec:prelim} for the definitions of the graphs used in Theorem~\ref{theorem 2}

\begin{theorem}\label{theorem 2}

Let ${\cal F}$ be a graph family defined by a finite number of forbidden induced subgraphs. If there exists a natural number $k$ such that ${\cal F}$ forbids all $k$-theta, $k$-prism, $k$-pyramid, $k$-ladder-theta, $k$-ladder-prism, $k$-claw, and $k$-paw graphs, then ${\cal F}$ is strongly-quasi-tame. Otherwise ${\cal F}$ is feral.
\end{theorem}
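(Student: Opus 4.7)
The plan is to handle the two directions separately. The forward (strongly quasi-tame) direction reduces entirely to Theorem~\ref{theorem 1} after two structural reductions, while the backward (feral) direction is a combinatorial verification driven by a pigeonhole argument over the seven listed structure types.

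For the forward direction, assume that $\mathcal{F}$ forbids $k$-theta, $k$-prism, $k$-pyramid, $k$-ladder-theta, $k$-ladder-prism, $k$-claw, and $k$-paw for some fixed $k$. To invoke Theorem~\ref{theorem 1} I must produce constants $K = K(k)$ and $K' = K'(k)$ such that every $G \in \mathcal{F}$ is $K$-creature-free and contains no $K'$-skinny-ladder as an induced minor. First I would prove a Ramsey-style structural lemma: there exists $K(k)$ such that every $K(k)$-creature $(A, B, X, Y)$ in any graph contains one of the seven named graphs at parameter $k$ as an induced subgraph. The argument is a case analysis on the geometry of $A$ together with its attachment to $X$ (symmetrically for $B$ and $Y$): a long induced path in $A$ paired with the matching $\{x_iy_i\}$ yields a theta, prism, pyramid, or one of the ladder variants; a high-degree vertex or a shared near-dominator in $A$ yields a claw or a paw, possibly after pruning $X$ via Ramsey. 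Second, to exclude large skinny-ladder induced minors I would invoke the remark preceding Theorem~\ref{theorem 1}: any hereditary family defined by finitely many forbidden induced subgraphs that contains $k$-skinny-ladders as induced minors for arbitrarily large $k$ must contain, for some constant $f$, either every $f$-subdivision of a $3$-regular graph or every line graph of such; both classes easily host all seven named structures at parameter $k$, contradicting the hypothesis. Feeding both conclusions into Theorem~\ref{theorem 1} yields strong quasi-tameness.

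For the backward direction, suppose that for every $k$ at least one of the seven $k$-graphs lies in $\mathcal{F}$. A pigeonhole over seven types yields a fixed type $T$ and an infinite sequence $k_1 < k_2 < \cdots$ with $k_i$-$T \in \mathcal{F}$. Each of the seven named shapes is, by construction, essentially a $k$-creature with a prescribed overlay of paths and matching edges; in particular each $k$-$T$ has $O(k)$ vertices and yet harbors an $\Omega(k)$-creature, hence $2^{\Omega(k)}$ minimal separators. Letting $n = |V(k_i\text{-}T)| = O(k_i)$, this furnishes at least $c^n$ minimal separators for some constant $c > 1$ independent of $i$, which is the definition of feral.

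The main obstacle is the structural lemma in the forward direction, namely the \emph{completeness} of the seven-shape list: showing that every sufficiently large creature must reveal one of exactly these shapes as an induced subgraph. The case analysis must cover every way in which $A$ and $B$ can be internally connected while dominating $X$ and $Y$, and the matching $\{x_iy_i\}$ must be aligned with whichever paths or stars get extracted from $A$ and $B$. Every other ingredient is either an invocation of Theorem~\ref{theorem 1}, an instantiation of the easy exercise already stated in the text, or a per-shape verification for the seven named structures.
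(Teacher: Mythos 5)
Your forward direction is on the right track and essentially matches the paper: Lemma~\ref{k creature implies theta} does exactly the job of your proposed ``Ramsey-style structural lemma'' (getting one of theta, prism, pyramid, ladder-theta, ladder-prism, or ladder out of a sufficiently large creature), and the paper then observes, as you do, that forbidding $k$-claw/$k$-paw also forbids large ladders and large skinny-ladder induced minors before invoking Theorem~\ref{theorem 1}. The only thing to check, and you flag it yourself, is that the extraction argument genuinely covers all cases; this is where the paper does the most technical work (Lemmas~\ref{high degree} through \ref{one side domination or creature/ladder}).

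The backward direction, however, has two genuine gaps. First, the $k$-theta, $k$-prism, $k$-pyramid, $k$-ladder-theta, and $k$-ladder-prism defined in this paper have paths $P_i$ of \emph{arbitrary} length, so a $k$-$T$ being in $\mathcal{F}$ does \emph{not} give you a member with $O(k)$ vertices, and your inference ``$n = O(k_i)$ so at least $c^n$ separators'' collapses. The paper fixes this with Lemmas~\ref{short paths} and \ref{small neighborhoods}: since $\mathcal{F}$ is determined by forbidden subgraphs of size at most some $h$, any sufficiently long degree-$2$ path segment (and any long stretch of the backbone in a half-ladder) can be contracted while the result stays in $\mathcal{F}$, yielding a witness with $O_h(k)$ vertices. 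Second, and more fatally, your claim that each of the seven shapes ``harbors an $\Omega(k)$-creature, hence $2^{\Omega(k)}$ minimal separators'' is simply false for $k$-claws and $k$-paws: a $k$-claw is a disjoint union of subdivided claws, i.e. a forest, which has only polynomially many minimal separators (and a $k$-paw similarly has few). The feral conclusion for these two types cannot come from the graph itself; the paper instead gives a dedicated construction (Lemma~\ref{k-claws are untame}) that glues together two binary-tree-like arrangements of long-claws (resp.\ long-paws) with cross-edges between corresponding leaves, shows this gadget has $2^{2^c}$ minimal separators on roughly $3h\cdot 2^{c+1}$ vertices, and then argues the gadget belongs to $\mathcal{F}$ because every induced subgraph on at most $h$ vertices is a forest of paths and subdivided claws/paws. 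Without this construction (or an equivalent one), the ``otherwise feral'' half of the theorem is not established.
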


Note that some of the graphs of Figure \ref{forbidden graphs} share a name with graphs that appear in the work of Abrishami et al.~\cite{abrishami2020graphs}, but the definitions given here are slightly different. In particular, in some of the places where they require single edges we allow arbitrarily long paths. Abrishami et al.~\cite{abrishami2020graphs} prove that the family of (what they define to be) theta-free, pyramid-free, prism-free, and turtle-free graphs is tame. We remark that our results are incomparable to theirs, in the sense that there are classes of graphs whose tameness follows from their work, but not ours, and vice versa. 

%
%
%
%
%


Theorem~\ref{theorem 2} substantially generalizes the main result of Milani\v{c} and Piva\v{c}~\cite{milani2019minimal}, who obtained a complete classification into tame or feral of hereditary graph classes characterized by forbidden induced subgraphs on at most $4$ vertices. The generalization comes at a price - as our upper bounds on the number of minimal separators are quasi-polynomial instead of polynomial.

Finally, we explore for which classes we are able to improve our quasi-polynomial upper bounds to polynomial ones. Here, again, domination plays a crucial role. We show that for every pair $k$, $k'$ of integers every class of graphs that excludes $k$-creatures and additionally has the property that every minimal separator $S$ is dominated by a set $X$ of size $k'$ {\em disjoint from} $S$ is tame. We then proceed to show that graphs that exclude $k$-creatures and all cycles of length at least $t$ for some positive integer $t$, leading to the following result.


\begin{figure}
\centerline{\includegraphics[width=0.95\textwidth]{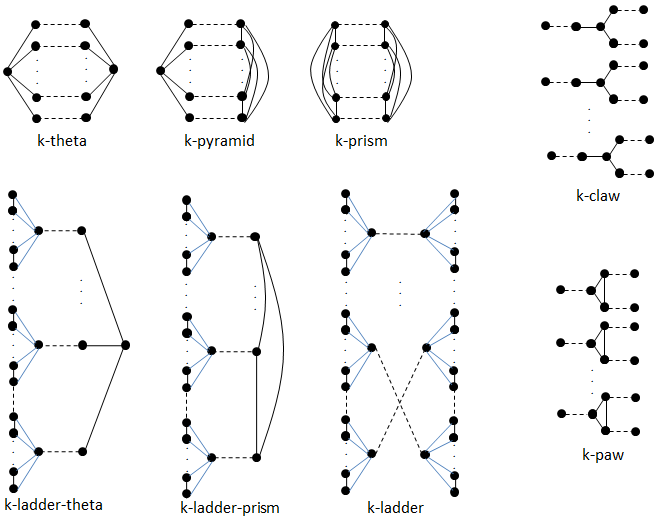}}
\caption{\em Note that for $k$-ladders we only depict one of the many possibilities of how the ordering of the neighborhoods of the $a_i$'s and $b_i$'s may be arranged in $L$ and $R$. Dashed lines represent the option of having an arbitrary length path or just an edge (except for $k$-claw and $k$-paw graphs which the dotted line is always a path of length $k$.) The blue lines used in the $k$-ladder-theta, $k$-ladder-prism, and $k$-ladder graphs represents the option of either having or not having that edge, but for each vertex adjacent to more than one of the blue edges, at least one of those blue edges must belong to the graph. 
}
\label{forbidden graphs}
\end{figure}



\begin{theorem}\label{theorem 3}
For every pair of natural numbers $k$ and $r$, the family of graphs that are $C_{\geq k}$-free, $k$-theta-free, $k$-prism-free, and $k$-pyramid-free is tame.
\end{theorem}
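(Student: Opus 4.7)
The strategy is to apply the tameness criterion that the paper states immediately before Theorem~\ref{theorem 3}: any hereditary class that excludes $k''$-creatures and in which every minimal separator $S$ is dominated by a bounded-size set \emph{disjoint from} $S$ is tame. Under the hypotheses of Theorem~\ref{theorem 3}, I would verify both ingredients for some constant $k'' = k''(k)$, and then quote the criterion.

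First I would show that a $C_{\geq k}$-free, $k$-theta-free, $k$-prism-free, and $k$-pyramid-free graph is $k''$-creature-free. Suppose for contradiction that $G$ contains an $N$-creature $(A,B,X,Y)$ with $N$ huge. Pick three indices $i_1,i_2,i_3$ and extract inside $A$ a minimal connected subgraph $A^\star$ that reaches $x_{i_1},x_{i_2},x_{i_3}$ through $A$'s neighborhoods; extract $B^\star$ symmetrically inside $B$. A minimally connected subgraph attached to three specified external vertices is, up to subdivisions, either a path with two attachments at its endpoints or a subdivided~$K_{1,3}$. Pairing up the two possibilities for $A^\star$ and $B^\star$ and closing up through the edges $x_{i_j}y_{i_j}$ yields a structure that is a theta, prism, or pyramid on three internal paths. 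A Ramsey-type pigeonhole over the many possible triples $(i_1,i_2,i_3)$ together with a cleanup of the induced structure forces all three internal paths to be simultaneously of length at least $k$; this produces a forbidden $k$-theta, $k$-prism, $k$-pyramid, or an induced cycle of length at least $k$. Each possibility contradicts the hypotheses, so $N$ is bounded by some $k''(k)$.

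Next I would establish the ``disjoint domination'' property of minimal separators. By Lemma~\ref{domination or creature/ladder}, combined with the $k''$-creature-freeness just obtained, it suffices to show $G$ contains no $k_0$-skinny ladder as an induced minor for some $k_0 = k_0(k)$. A $k_0$-skinny ladder has an induced cycle of length $2k_0+2$, so any induced-minor model of it in $G$ can be tightened (by taking shortest induced paths inside each branch set and a single cross-edge between adjacent branch sets) to a realization whose outer perimeter is either an induced cycle of length $\geq k$ in $G$ (forbidden) or, after chord reduction between adjacent branch sets, fits one of the $k$-theta, $k$-prism, or $k$-pyramid templates when $k_0$ is large enough. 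Hence every minimal separator $S$ is dominated by a set $X_0$ of bounded size. To push $X_0$ off of $S$, replace each $v \in X_0 \cap S$ by an arbitrary neighbor of $v$ in one of the two components of $G-S$ that touch $v$ (which exists because $S$ is minimal); this inflates $|X_0|$ by at most a factor of two and preserves domination, yielding a bounded-size dominating set disjoint from $S$.

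The main obstacle is the skinny-ladder elimination step. An induced minor model of a skinny ladder may have arbitrarily complicated branch sets, and the naive ``lift'' of its long induced cycle to $G$ need not remain induced because of chords inside or between branch sets. The delicate part is to argue that any such chord configuration still implies one of the forbidden patterns — essentially running the same theta/prism/pyramid/long-cycle dichotomy used in the creature step, now parameterized by the chord structure of a minimal realization of the ladder. The creature step is conceptually standard; getting the skinny-ladder step to cleanly reduce to the four forbidden families is where the technical care will be concentrated.
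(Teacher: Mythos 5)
Your overall plan---derive bounded creature-freeness, derive domination of every minimal separator $S$ by a bounded-size set disjoint from $S$, and feed both into Corollary~\ref{outside domination}---matches the paper's template, but both verification steps you sketch have concrete gaps. For creature-freeness, the three-terminal Steiner argument does not produce a forbidden configuration. A minimal connected $A^\star\subseteq A$ attaching to $x_{i_1},x_{i_2},x_{i_3}$, glued to a mirror $B^\star$ through the three edges $x_{i_j}y_{i_j}$, yields a theta, prism, or pyramid with exactly \emph{three} branches; but a $k$-theta (resp.\ $k$-prism, $k$-pyramid) is defined to have $k$ branches, not three long ones, and one cannot force the three branches to be long: take $A$ and $B$ to be single vertices, so every triple yields a theta with branches of length $4$, which is not excluded, while the \emph{entire} creature is an $N$-theta with short branches. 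What a large creature forces is \emph{many} branches, and converting a large creature into one of the six named structures is precisely the content of Lemma~\ref{k creature implies theta}, a substantial multi-lemma construction (Lemmas~\ref{high degree}--\ref{one side domination or creature/ladder}). The efficient route, and the one the paper uses, is to invoke Lemma~\ref{k creature implies theta} as a black box and observe that $k$-ladder-theta, $k$-ladder-prism, and $k$-ladder graphs contain induced cycles whose length grows with $k$, so $C_{\geq k}$-freeness already rules out the three ladder cases and the hypotheses rule out the rest.

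For domination, the push-off trick fails. If $X_0$ dominates $S$ and $v\in X_0\cap S$ is the only vertex of $X_0$ adjacent to some $w\in S$, replacing or supplementing $v$ by a neighbor $v'$ of $v$ in a full component of $G-S$ gives no guarantee that $v'$ is adjacent to $w$, so domination can be lost. The paper avoids this route entirely: Lemma~\ref{cycle free domination} proves \emph{outside} domination directly. Its argument is self-contained and uses long-cycle exclusion where your plan uses skinny-ladder exclusion plus a post-processing step: take an $S$-full component $H$, use Lemma~\ref{path domination} to dominate $S$ by fewer than $k$ induced paths inside $H$, and show that if some such path $P$ has $N(P)\cap S$ not dominated by $r$ vertices of $P$, then two vertices of $S$ straddle a long subpath of $P$, closing up through the other full component into an induced cycle of length at least $r$. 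Your detour through skinny-ladder induced minors and then pushing the dominating set off $S$ does not recover this.
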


Here a graph $G$ is $C_{\geq r}$-{\em free} if it contains no induced cycles of length at least $r$. Theorem~\ref{theorem 3} is optimal in the sense that $k$-theta, $k$-prism, and $k$-pyramid graphs have at least $2^{k-2}$ minimal separators and therefore can have exponentially many minimal separators. Further, it substantially strengthens the results of Chudnovsky et al.~\cite{chudnovsky2019maximum}, who prove the same statement but only for $r = 5$. 

Finally we show that also graph classes that exclude $k$-creatures, $k$-skinny ladders, as well as at least one clique satisfy the property that every minimal separator $S$ can be dominated by a constant size set $X$ disjoint from $S$. This implies that also this family of graphs is tame. 




\begin{theorem}\label{theorem 4}

For any fixed natural number $k$, the family of graphs that are $k$-creature-free, contain no $k$-skinny-ladder as an induced minor, and contain no minimal separator that has a clique of size $k$ is tame.
\end{theorem}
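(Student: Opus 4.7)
The plan is to reduce Theorem~\ref{theorem 4} to the meta-result stated in the paragraph preceding Theorem~\ref{theorem 3}: that every $k$-creature-free class in which every minimal separator $S$ can be dominated by a bounded-size set of vertices disjoint from $S$ is tame. Thus it suffices to establish the following statement. Under the hypotheses of Theorem~\ref{theorem 4} there is a constant $k''$ depending only on $k$ such that every minimal separator $S$ in every graph of the class is dominated by some $X' \subseteq V(G) \setminus S$ with $|X'| \leq k''$; the meta-result then finishes the argument.

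The starting point is Lemma~\ref{domination or creature/ladder}: from $k$-creature-freeness and the absence of a $k$-skinny-ladder induced minor, it produces a constant $k'$ and a dominating set $X$ for $S$ with $|X| \leq k'$; however $X$ may intersect $S$. I would reduce the whole statement to the following key sub-claim: for every minimal separator $S$ and every $v \in S$, the set $T_v := N[v] \cap S$ can be dominated by at most $f(k)$ vertices of $V(G) \setminus S$. Granting the sub-claim, replacing each $v \in X \cap S$ by its $f(k)$-sized outside dominator and taking a union with $X \setminus S$ yields the desired $X' \subseteq V(G) \setminus S$ of size at most $k'(1+f(k))$.

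To prove the sub-claim I would exploit two structural facts about $T_v$. First, $T_v \subseteq S$ and by hypothesis $S$ contains no $K_k$, so $\omega(G[T_v]) < k$. Second, since $S$ is a minimal separator there are two full components $C_A, C_B$ of $G - S$ such that every $t \in T_v$ has a neighbor in each. Suppose for contradiction that $T_v$ cannot be dominated from outside $S$ by $f(k)$ vertices; pigeonholing over components of $G - S$ we may assume that even $m$ vertices of $C_A$ do not suffice, for arbitrarily large $m$. A greedy set-cover argument then produces $u_1, \ldots, u_m \in C_A$ and $t_1, \ldots, t_m \in T_v$ with $t_i u_i \in E$ and, after a bipartite Ramsey cleanup, with $t_i u_j \in E$ iff $i = j$. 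Using $\omega(G[T_v]) < k$ together with Ramsey's theorem applied to $G[\{t_1, \ldots, t_m\}]$, I would pass to a further sub-sequence where the $t_i$'s are pairwise non-adjacent, keeping $m$ as large as desired.

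With $X_{\mathrm{cr}} := \{t_i\}$ and $Y_{\mathrm{cr}} := \{u_i\}$ one now tries to extract a $k$-creature. For $A_{\mathrm{cr}}$, take a minimal connected subgraph of $C_B$ dominating $X_{\mathrm{cr}}$, which exists because each $t_i$ has a neighbor in the connected set $C_B$; its anti-completeness with $Y_{\mathrm{cr}} \subseteq C_A$ is automatic. The main obstacle of the whole proof is the construction of $B_{\mathrm{cr}}$: a connected subgraph of $C_A$ that dominates $Y_{\mathrm{cr}}$ while having no edge to $X_{\mathrm{cr}}$. If such a $B_{\mathrm{cr}}$ exists, we obtain a $k$-creature, contradicting the first hypothesis. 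Otherwise, the obstruction—namely, that every connection in $C_A$ among the $u_i$'s is forced through $N(X_{\mathrm{cr}}) \cap C_A$—when combined with the symmetric picture in $C_B$, can be processed by further Ramsey-style cleanup of induced paths and cross-edges into a $k$-skinny-ladder induced minor, contradicting the second hypothesis. This last extraction is the technically delicate step, and it is where the interplay between the three hypotheses of Theorem~\ref{theorem 4} is used most essentially.
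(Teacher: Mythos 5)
Your high-level strategy — reduce to outside domination and invoke Corollary~\ref{outside domination} — is the same as the paper's, and your reduction of Theorem~\ref{theorem 4} to the sub-claim (for each $v\in X\cap S$, dominate $T_v := N[v]\cap S$ from outside $S$) is sound. The sub-claim itself is true: it is a special case of the paper's Lemma~\ref{bounded clique outside domination}. But your proposed proof of the sub-claim has a real gap, and the way you use the bounded-clique hypothesis is symptomatic of it.

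The paper does not prove the sub-claim by a one-shot extraction. Instead, Lemma~\ref{bounded clique outside domination} runs a recursive algorithm: apply Lemma~\ref{domination or creature/ladder} to get $A\cup B$ dominating $S'$ with $A$ outside and $B\subseteq S'$, dominate $B$ from outside by a set $B'$ of the same size, and recurse on $(G'-(S'-N(b)\cap S'),\, N(b)\cap S')$ for each $b\in B$. The crux is that the sequence of $b$'s along any branch of the recursion tree forms a clique inside $S$, so the $K_k$-freeness of $S$ bounds the recursion depth by $k$. That is the \emph{essential} use of the third hypothesis. In your sketch the bounded-clique hypothesis only feeds a Ramsey step that makes the $t_i$'s pairwise non-adjacent — a step that is not even required for building a $k$-creature (the $x_i$'s in a creature may be mutually adjacent) and is far weaker than the structural role the hypothesis plays in the paper. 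This mismatch is a strong hint that the one-shot argument is missing something.

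The concrete gap is the last paragraph. When no $B_{\mathrm{cr}}\subseteq C_A - N[\{t_1,\dots,t_m\}]$ exists that is connected and dominates $\{u_1,\dots,u_m\}$, you assert that a ``Ramsey-style cleanup of induced paths and cross-edges'' yields a $k$-skinny-ladder induced minor. This is not a small technicality to be filled in: producing a $k$-skinny-ladder from this obstruction, in both $C_A$ and $C_B$ simultaneously, is exactly the content of Lemmas~\ref{path domination}, \ref{path domination helper}, \ref{path domination helper 2}, \ref{path domination of nice set}, \ref{almost skinny implies skinny}, and the cleanup inside Lemma~\ref{domination or creature/ladder}. Nothing in your sketch shows that the $u_i$'s and the corresponding $C_B$-side vertices $w_i$ line up on two induced paths, that each $t_i$ can be made to have its $L$- and $R$-neighborhoods separated from those of $t_j$, or that the required almost-skinny-ladder can then be cleaned by Erd\H{o}s–Szekeres. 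Also, the ``greedy set-cover then bipartite Ramsey'' step to obtain $t_iu_j\in E$ iff $i=j$ only gives the half-matching $t_iu_j\notin E$ for $j<i$ directly; upgrading to the full anti-matching is plausible but also left unargued. In short, the reduction is fine, but proving the sub-claim directly (rather than by the paper's depth-$\le k$ recursion) would require reproving most of Section~4 in a slightly different setting, and your sketch does not do that.
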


\begin{theorem}\label{theorem 5}

Let ${\cal F}$ be a graph family defined by a finite number of forbidden induced subgraphs. If there exists a natural number $k$ such that ${\cal F}$ forbids all $k$-clique, $k$-theta, $k$-ladder-theta, $k$-claw, and $k$-paw graphs then ${\cal F}$ is tame. Otherwise, ${\cal F}$ contains all cliques or ${\cal F}$ is feral.
\end{theorem}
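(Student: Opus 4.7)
The plan is to derive Theorem~\ref{theorem 5} from Theorem~\ref{theorem 4} in essentially the same way that Theorem~\ref{theorem 2} is derived from Theorem~\ref{theorem 1}, using the tame conclusion of Theorem~\ref{theorem 4} in place of the strongly quasi-tame conclusion of Theorem~\ref{theorem 1}. For the forward direction, I would assume that $\mathcal{F}$ forbids all $k$-clique, $k$-theta, $k$-ladder-theta, $k$-claw, and $k$-paw graphs and verify the three hypotheses of Theorem~\ref{theorem 4} with a larger constant $k''$ depending only on $k$ and on the finite list of forbidden induced subgraphs defining $\mathcal{F}$. The condition that no minimal separator of any graph in $\mathcal{F}$ contains a clique of size $k$ is immediate, since $K_k$ is forbidden as an induced subgraph globally.

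To verify $k''$-creature-freeness I would invoke the same Ramsey-type extraction lemma that underlies Theorem~\ref{theorem 2}: every sufficiently large creature must induce, with large parameter, one of the seven structures theta, prism, pyramid, ladder-theta, ladder-prism, claw, or paw. The key observation is that each of $k$-prism, $k$-pyramid, and $k$-ladder-prism contains a copy of $K_k$ as a subgraph (as read off from Figure~\ref{forbidden graphs}), so these three are automatically excluded once $k$-clique is forbidden; the remaining four structures are precisely those listed in the hypothesis of Theorem~\ref{theorem 5}. Hence no graph in $\mathcal{F}$ can contain a sufficiently large creature.

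For the absence of $k''$-skinny-ladders as induced minors I would reuse the observation from the introduction: for a hereditary class defined by a finite forbidden list, containing arbitrarily large skinny ladders forces the class to contain all $f$-subdivisions of $3$-regular graphs or all line graphs of such, for some constant $f$ depending on the forbidden list. Both of these families in turn produce arbitrarily large $k$-thetas, $k$-ladder-thetas, $k$-claws, or $k$-paws, contradicting the hypothesis of Theorem~\ref{theorem 5}. With all three hypotheses of Theorem~\ref{theorem 4} verified, tameness of $\mathcal{F}$ follows.

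For the converse direction, suppose no such $k$ exists. Then by pigeonhole, one of clique, theta, ladder-theta, claw, paw appears in $\mathcal{F}$ with arbitrarily large parameter. In the clique case, hereditariness forces every $K_n$ to be in $\mathcal{F}$. In each of the remaining four cases, a $k$-member of the family with its optional dashed paths chosen of length one contains $2^{\Omega(k)}$ minimal separators on only $O(k)$ vertices, so $\mathcal{F}$ is feral. The principal obstacle is the extraction step of the forward direction: carefully confirming that in the Ramsey-type lemma underlying Theorem~\ref{theorem 2}, the prism, pyramid, and ladder-prism outputs each come packaged with a $K_k$, so that adding $k$-clique to the forbidden list really does let us drop these three structures from the list inherited from Theorem~\ref{theorem 2}.
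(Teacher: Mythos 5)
Your proposal tracks the paper's proof of Theorem~\ref{theorem 5} almost exactly: observe that $k$-prism, $k$-pyramid, and $k$-ladder-prism each contain $K_k$ and are therefore already excluded once $K_k$ is, so that the shorter forbidden list still rules out all six outputs of the creature-extraction lemma (Lemma~\ref{k creature implies theta}, with ladders handled via the fact that large ladders contain claws or paws), rule out large skinny ladders the same way, note that no minimal separator can contain $K_k$ since the whole graph is $K_k$-free, and then apply the bounded-clique domination machinery (Lemma~\ref{bounded clique outside domination} with Corollary~\ref{outside domination}, which is exactly what proves Theorem~\ref{theorem 4}), with the feral converse via Lemmas~\ref{exponentially many mins seps} and~\ref{k-claws are untame}. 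One small imprecision worth fixing: a $k$-claw or $k$-paw is a disjoint union of long-claws/paws and by itself does \emph{not} have $2^{\Omega(k)}$ minimal separators; the feral conclusion in those two cases requires the glued construction of Lemma~\ref{k-claws are untame}, not merely ``choosing the dashed paths of length one,'' so that lemma needs to be cited explicitly rather than handled by the uniform $2^{\Omega(k)}$ sentence.
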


\section{Preliminaries}\label{sec:prelim}
All graphs in this paper are assumed to be simple, undirected graphs. We denote the edge set of a graph $G$ by $E(G)$ and the vertex set of a graph by $V(G)$. If $v$ $\in$ $V(G)$, then we use $N_G[v]$ to denote the closed neighborhood of $v$ in the graph $G$, i.e. the set of all neighbors $v$ has in $G$ together with $v$ itself. We use $N_G(v)$ to denote the set $N_G[v] - \{v$\}. If $X$ $\subseteq$ $V(G)$, then $N_G[X]$ = $\bigcup_{x \in X}N_G[x]$ and $N_G(X)$ = $N_G[X] - X$. When the graph $G$ is clear from the context, we will use $N[v]$, $N(v)$, $N[X]$, and $N(X)$. If $X \subset V(G)$, then we use $G[X]$ to denote the induced subgraph of $G$ with vertex set $X$ and $G-X$ denotes $G[V(G)-X]$.

Given a graph $G$, a non-empty set $S \subset G$ is called a minimal separator if there are at least two distinct connected components $L$ and $R$ of $G-S$ such that $N_G(L) = N_G(R) = S$. If $u \in L$ and $v \in R$ then we call $S$ a $minimal$ $u$-$v$-$separator$ or a  $u,v$-$minimal$ $separator$. We say a component $X$ of $G-S$ is an $S$-full component if $N_G(X) = S$. A family of graphs ${\cal F}$ is called $tame$ if there exists exists a constant $c$ such that for all $G \in {\cal F}$, $G$ has at most $|V(G)|^c$ minimal separators. A family of graphs ${\cal F}$ is called $strongly$-$quasi$-$tame$ if there exists exists a constant $c$ such that for all $G \in {\cal F}$, $G$ has at most $|V(G)|^{c\log(|V(G)|)}$ minimal separators. A family of graphs ${\cal F}$ is called $feral$ if there exists exists a constant $c > 1$ such that for all natural numbers $N$ there exists a $G \in {\cal F}$, such that $|V(G)| = n > N$ and $G$ has at least $c^n$ minimal separators.

Given a path $P = v_1,v_2,...,v_k$ we call $v_1$ and $v_k$ the $endpoints$ of $P$, and all other vertices of $P$ are $internal$ vertices of $P$. The $length$ of a path is the number of vertices in the path.
Given a graph $G$ and a graph $H$, $G$ is said to be $H$-free or $G$ forbids $H$ if $G$ does not contains $H$ as an induced subgraph. We will sometimes talk about the induced minors of a graph so being $H$-free should not be confused with $G$ not containing $H$ as an induced minor. If $G$ does not contain $H$ as an induced minor then that is precisely what we will say, that $G$ does not contain $H$ as an induced minor. 
If ${\cal G}$ is a family of graphs such that every $G \in {\cal G}$ is $H$-free, then ${\cal G}$ is said to be $H$-free or that ${\cal G}$ forbids $H$. Similarly, given a graph $G$ and a family of graph ${\cal H}$, $G$ is said to be ${\cal H}$-free or $G$ forbids ${\cal H}$ if $G$ does not contain any $H \in {\cal H}$ as an induced subgraph. If ${\cal G}$ is a family of graphs such that every $G \in {\cal G}$ is ${\cal H}$-free, then ${\cal G}$ is said to be ${\cal H}$-free or that ${\cal G}$ forbids ${\cal H}$.

Let ${\cal F}$ be a family of graphs. We say that ${\cal F}$ is a $family$ $of$ $graphs$ $defined$ $by$ $a$ $finite$ $number$ $of$ $forbidden$ $induced$ $subgraphs$ if there exists a finite set of graphs ${\cal H}$ such that $G \in {\cal F}$ if and only if $G$ does not contain an induced subgraph isomorphic to any graph in ${\cal H}$. We say that ${\cal H}$ is a set of forbidden subgraphs that define ${\cal F}$, and if $H \in {\cal H}$ we say that ${\cal F}$ explicitly forbids $H$. 

Given a graph $G$ let $H$ and $K$ be two subsets of $V(G)$. We say that $H$ is $anti$-$complete$ with $K$ or that $H$ and $K$ are $anti$-$complete$ if $H$ and $K$ are disjoint and every vertex in $H$ is non-adjacent to every vertex in $K$ in $G$. We extend this definition in an obvious way to allow $H$ (and possibly $K$) to be a subgraph of $G$ by saying $H$ is anti-complete with $K$ if $V(H)$ is anti-complete with $K$ ($V(K)$ if $K$ is also a subgraph).  A set $X \subset V(G)$ is said to dominate a set $Y \subset V(G)$ if for every $y \in Y$ either $y \in X$ or there is an $x \in X$ such that $yx \in E(G)$.

Given a graph $G$ and an edge $uv \in E(G)$ we denote by $G^{uv}$ the graph that results from $contracting$ the edge $uv$ in $G$, so $V(G^{uv}) = (V(G)-\{u,v\})\cup\{w\}$ and for $x,y \in V(G)-\{u,v\}$, $xy \in E(G^{uv})$ if and only if $xy \in E(G)$ and for $x \in V(G)-\{u,v\}$, $xw \in E(G^{uv})$ if and only if $x$ is neighbors with $u$ and/or $v$ in $G$. Given an induced path $P$ of $G$, we denote by $G^P$ the graph that results from contracting each edge of $P$ one at a time. Note that the resulting graph is independent of the order the edges are contracted in.

Given two anti-complete graphs $A$ and $B$ with $a\in A$ and $b \in B$, we define an operation $gluing$ $a$ $to$ $b$ which produces a new graph $C$ where $V(C) = [(V(A) \cup V(B))-\{u,v\})] \cup \{w\}$ and for $x,y \in V(C)-\{w\}$, $xy \in E(C)$ if and only if $xy \in E(A)$, or $xy \in E(B)$ and for $x \in V(C)-\{w\}$, $xw \in E(G^{uv})$ if and only if $x$ is neighbors with $a$ in $A$ and/or $b$ in $B$. This is the exact same graph that would result in adding an edge between $a$ and $b$ and then contracting that edge.

\subsection{Graph Definitions}

Given a graph $G$ we call a tuple $(A, B$, $\{x_1, x_2, \ldots, x_k\}$, $\{y_1, y_2, \ldots, y_k\})$ of mutually disjoint vertex subsets of $V(G)$ a $k$-$creature$ if the following conditions hold: (see Figure \ref{k-creature} for a depiction of the graph a $k$-creature induces)

\begin{itemize}
    \item $G[A]$ and $G[B]$ are connected and $A$ is anti-complete with $B$.
    
    \item for $i$ with $1 \leq i \leq k$, $x_iy_i \in E(G)$, $x_i$ has at least one neighbor in $A$ and $x_i$ and is anti-complete with $B$, $y_i$ has at least one neighbor in $B$ and $y_i$ is anti-complete with $A$.
    
    \item for $i,j$ with $i \neq j$ and $1 \leq i,j \leq k$ $x_iy_j \notin E(G)$.
\end{itemize}

A graph is said to be $k$-creature free if there does not exists a tuple of vertex sets of $V(G)$ that form a $k$-creature. 

We call a graph $G$ a $k$-$skinny$-$ladder$ if the following conditions hold:

\begin{itemize}
    \item $V(G) = \{\ell_1, \ell_2, \ldots, \ell_k,\} \cup \{s_1, s_2,\ldots, s_k\} \cup \{r_1, r_2,\ldots,r_k\}$
    
    \item For all $i$ with $1 \leq i < k$ $\ell_i\ell_{i+1} \in E(G)$ and $r_ir_{i+1} \in E(G)$, for all $i$ with $1 \leq i \leq k$ $\ell_is_i \in E(G)$ and $s_ir_i \in E(G)$, and $G$ has no other edges.
\end{itemize}

We call a graph $G$ a $k$-$almost$-$skinny$-$ladder$ if the following conditions hold:

\begin{itemize}
    \item $V(G) = L \cup S \cup R$ with $L$, $S$, and $R$ mutually disjoint and $|S| = k$.
    
    \item $G[L]$ and $G[R]$ form induced paths of $G$ and $L$ is anti-complete with $R$.
    
    \item Each $s \in S$ has at least one neighbor in $L$ and at least one neighbor in $R$.
    
    \item For all pairs $x, y \in S$, if $a,b$ are neighbors of $x$ in $L$, then $y$ has no neighbors on the subpath of $G[L]$ that has $a$ and $b$ as its endpoints. Similarly, if $a,b$ are neighbors of $x$ in $R$, then $y$ has no neighbors on the subpath of $G[R]$ that has $a$ and $b$ as its endpoints. This last condition requires that no vertex of $L$ or $R$ has more than one neighbor in $S$.
    
\end{itemize}

The following graphs, except for $k$-ladder graphs, appear in Theorem \ref{theorem 2}. Figure \ref{forbidden graphs} depicts these graphs. It can be seen that all graphs here except for $k$-claw and $k$-paw graphs contains at least $2^{k-2}$ minimal separators. 

\begin{itemize}
    \item A graph $G$ is a $k$-$theta$ if $G$ consist of two vertices $a, b$ and $k$ induced paths $P_1, P_2, \ldots P_k$. For $1 \leq i \leq k$ the end points of $P_i$ are $a$ and $b$, every $P_i$ is anti-complete with $P_j$, and every $P_i$ has length at least 4.
    
    \item A graph $G$ is a $k$-$prism$ if $G$ consist of two disjoint cliques $a_1, a_2, \ldots, a_k$ and $b_1, b_2, \ldots, b_k$ along with $k$ induced paths $P_1, P_2, \ldots P_k$ each of length at least 2. For $1 \leq i \leq k$ the end points of $P_i$ are $a_i$ and $b_i$, every $P_i-\{a_i, b_i\}$ is anti complete with $P_j-\{a_j, b_j\}$, and $a_i$ is neighbors with $b_j$ if and only if $i = j$ and $P_i$ is a path of length 2.

    \item A graph $G$ is a $k$-$pyramid$ if $G$ consist of a vertex $a$ and a clique $b_1, b_2, \ldots, b_k$, where $a$ is anti-complete with $b_1, b_2, \ldots, b_k$, along with $k$ induced paths $P_1, P_2, \ldots P_k$ each of length at least 3. For $1 \leq i \leq k$ the end points of $P_i$ are $a$ and $b_i$ and every $P_i-\{a, b_i\}$ is anti complete with $P_j-\{a, b_j\}$.
    
    \item A graph $G$ is a $k$-$ladder$-$theta$ if $G$ consists of an induced path $L$ and a vertex $b$ anti-complete with $L$, along with $k$ mutually disjoint induced paths $P_1, P_2, \ldots P_k$ that are also disjoint from $L$, each of length at least 3. For $1 \leq i \leq k$ the end points of $P_i$ are $a_i$ and $b$, every $P_i-\{b\}$ is anti-complete with $P_j-\{b\}$, $P_i -\{a_i\}$ is anti-complete with $L$, every $a_i$ has at least one neighbor in $L$, and if $x,y$ are neighbors with $a_i$ in $L$, then no $a_j$ with $i \neq j$ has a neighbor in the induced subpath of $L$ that has $x$ and $y$ as its endpoints.
    
    \item A graph $G$ is a $k$-$ladder$-$prism$ if $G$ consists of an induced path $L$ and clique $b_1, b_2, \ldots, b_k$ where $L$ is anti-complete with $b_1, b_2, \ldots, b_k$, along with $k$ induced paths $P_1, P_2, \ldots P_k$ each of length at least 2. For $1 \leq i \leq k$ the end points of $P_i$ are $a_i$ and $b_i$, every $P_i-\{b_i\}$ is anti-complete with $P_j-\{b\}$, $P_i -\{a_i\}$ is anti-complete with $L$, every $a_i$ has at least one neighbor in $L$, and if $x,y$ are neighbors with $a_i$ in $L$, then no $a_j$ with $i \neq j$ has a neighbor in the induced subpath of $L$ that has $x$ and $y$ as its endpoints.

   \item A graph $G$ is a $k$-$ladder$ if $G$ consists of two disjoint paths $L$ and $R$, along with $k$ disjoint induced paths $P_1, P_2, \ldots P_k$ each of length at least 2. For $1 \leq i \leq k$ the end points of $P_i$ are $a_i$ and $b_i$, every $P_i-\{b_i\}$ is anti-complete with $P_j-\{b_i\}$, $P_i -\{a_i\}$ is anti-complete with $P_j-\{a_j\}$, every $a_i$ has at least one neighbor in $L$, every $b_i$ has at least one neighbor in $R$, if $x,y$ are neighbors with $a_i$ in $L$, then no $a_j$ with $i \neq j$ has a neighbor in the induced subpath of $L$ that has $x$ and $y$ as its endpoints, and if $x,y$ are neighbors with $b_i$ in $R$, then no $b_j$ with $i \neq j$ has a neighbor in the induced subpath of $R$ that has $x$ and $y$ as its endpoints. Note that we do not have any requirements on the ordering that the neighborhoods the $a_i$'s and $b_i$'s have into $L$ and $R$ respectively (so it could happen that $a_i$'s neighborhood in $L$ may lie in between $a_j$'s and $a_{\ell}$'s neighborhood in $L$, but $b_i$'s neighborhood in $R$ does not lie in between $b_j$'s and $b_{\ell}$'s neighborhood in $R$, this is illustrated in the $k$-ladder given in Figure \ref{forbidden graphs}. We could force this not to happen though with an easy application of the Erdos-Szekers Lemma at the cost of making it a $\sqrt{k}$-ladder.) 
    
    \item A graph $G$ is a $k$-$claw$ if $G$ consists of $k$ disjoint, anti-complete copies of the following graph which we call a $long$-$claw$ $of$ $arm$ $length$ $k$: let $v$ be a vertex and $P_1$, $P_2$, $P_3$ be three paths of length $k$ each with $v$ as one of its endpoints and $P_i-\{v\}$ is anti-complete and disjoint with $P_j-\{v\}$ (i.e. the graph is a claw with each edge subdivide $k-2$ times)
    
    \item A graph $G$ is a $k$-$paw$ if $G$ consists of $k$ disjoint, anti-complete copies of the following graph which we call a $long$-$paw$ $of$ $arm$ $length$ $k$: let $v_1, v_2, v_3$ be a triangle and $P_1$, $P_2$, $P_3$ be three disjoint induced paths of length $k$ each such that $P_i$ has $v_i$ as one of its endpoints and $P_i-\{v_i\}$ is anti-complete with $P_j-\{v_j\}$ for $1 \leq i \neq j \leq 3$.
    
\end{itemize}



\section{A $k$-Creature-Free Feral Graph Family}\label{counter example section}

\begin{wrapfigure}{c}{0.4\textwidth}
\begin{center} \hspace{1cm}\includegraphics[scale=.8]{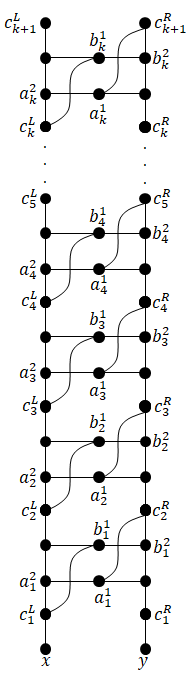}
\caption{\em The $k$-twisted-ladder.}\label{twisted ladder}
\end{center} 
\end{wrapfigure}

In this section we will show that the graph of Figure \ref{twisted ladder}, which we will refer to as the $k$-twisted ladder, is a counterexample to Conjecture \ref{false conjecture}. We begin next paragraph by giving a few definitions, then in the following paragraph we will observe that the $k$-twisted-ladder has $2^k$ minimal separators, and finally Lemma \ref{counter example} completes the counterexample by showing that the $k$-twisted-ladder does not contain a large $k$-creature.

We defined a partition of the vertices as follows, let $S$ denote the set of labeled vertices of the $k$-twisted ladder that have 1 as their superscript. If we remove $S$ from the $k$-twisted-ladder we get two induced path, one on the left side which we will refer to as $L$ and one on the right side which we will refer to as $R$. 
We also define the $i^{th}$ $block$ of the $k$-twisted ladder to be the set of vertices that contains the vertices of the subpath of $L$ that has $c_{i+1}^L$ and $c_{i}^L$ as its endpoints, the vertices of the subpath of $R$ that has $c_{i+1}^R$ and $c_{i}^R$ as its endpoints, and the vertices $a_i^1$ and $b_i^1$. So, the $i^{th}$ block and the $(i+1)^{th}$ block overlap at the vertices $c_{i+1}^R$ and $c_{i+1}^L$.

To see that the $k$-twisted ladder has at least $2^k$ minimal separators we make a set $X$. For each $i$ with $1 \leq i \leq k$ we choose $j \in \{1,2\}$ and add $a^j_i$ and  $b^j_i$ to $X$. $X$ is then an $x,y$-minimal separator, and there are $2^k$ different choices we had when making $X$, so $k$-twisted ladder has at least $2^k$ minimal separators.

To complete the counterexample, we show in the following lemma that this structure does not have a large $k$-creature. To make the result as easy as possible to verify, we show no $k$-twisted-ladder has a $100$-creature, although a significantly smaller upper bound exists.

\begin{lemma}\label{counter example}
 $k$-twisted-ladders are 100-creature-free for all k.
\end{lemma}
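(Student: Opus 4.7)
Assume for contradiction that some $k$-twisted-ladder $G$ contains a $100$-creature $(A,B,X,Y)$ with $X=\{x_1,\dots,x_{100}\}$, $Y=\{y_1,\dots,y_{100}\}$. My plan is to exploit the block decomposition already defined just before the lemma. The first step is to record the elementary structural facts of a $k$-twisted-ladder that will drive the argument: the vertex set partitions into blocks $B_1,\dots,B_k$, each block has at most a fixed constant $c_0$ vertices (something like $c_0\le 8$, read off the figure), consecutive blocks $B_j$ and $B_{j+1}$ meet in exactly the two ``corner'' vertices $c_{j+1}^L,c_{j+1}^R$ lying on $L$ and $R$, and non-consecutive blocks are vertex-disjoint with no edge between them. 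Equivalently, every edge of $G$ has both endpoints in some $B_j\cup B_{j+1}$.

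From this I would prove the \emph{support lemma}: for any connected vertex set $U\subseteq V(G)$, the index set $\mathrm{supp}(U)=\{j:U\cap B_j\neq\emptyset\}$ is an interval. This is because any path in $G$ from $B_i$ to $B_j$ with $i<j$ must pass through every intermediate block $B_m$. Applying this to $A$ and $B$, write $\mathrm{supp}(A)=[\alpha_1,\alpha_2]$ and $\mathrm{supp}(B)=[\beta_1,\beta_2]$; WLOG $\alpha_1\le\beta_1$. Since each $x_i$ lies in $N(A)\setminus A$, it sits in $\bigcup_{j\in[\alpha_1-1,\alpha_2+1]}B_j$, and symmetrically $y_i\in\bigcup_{j\in[\beta_1-1,\beta_2+1]}B_j$; moreover, because $x_iy_i$ is an edge, the block indices of $x_i$ and $y_i$ differ by at most one.

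The argument now splits into two cases. \emph{Case 1 (disjoint supports, $\alpha_2<\beta_1$).} Combining the three constraints above forces the block index of both $x_i$ and $y_i$ to lie in the window $[\beta_1-2,\alpha_2+2]$, a set of at most $4$ blocks when $\beta_1\le\alpha_2+3$, and empty (immediate contradiction) otherwise. Thus all $200$ vertices of $X\cup Y$ live in at most $4c_0$ positions, but they must be disjoint, a contradiction. \emph{Case 2 (overlapping supports).} In every overlap block $B_j$ both $A\cap B_j$ and $B\cap B_j$ are nonempty and anti-complete; using the internal geometry of a twisted-ladder block one argues that this forces $A\cap B_j$ and $B\cap B_j$ to lie on opposite ``sides'' of $B_j$ (one on the $L$-subpath side, the other on the $R$-subpath side of the block), since the $L$- and $R$-subpaths of a block are connected only through the twist/middle vertices $a_j^1,a_j^2,b_j^1,b_j^2$. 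Because consecutive blocks share only the two corner vertices $c_{j+1}^L,c_{j+1}^R$ (one on each side), the connectivity of $A$ (and of $B$) across the overlap then pins $A$ entirely to one side and $B$ entirely to the other throughout the overlap. With the sides fixed, every potential $x_iy_i$-edge must be a ``cross'' edge between the $L$-side and the $R$-side inside a single block, and there are at most $O(c_0)$ such edges per block, concentrated only near the two ends of the overlap (where $A$ or $B$ stops) --- again a constant, contradicting $100$ disjoint matched pairs.

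\emph{Main obstacle.} The delicate step is Case~2: one must verify that the specific twist edges inside a block really do prevent $A$ and $B$ from swapping sides across a block boundary while staying connected and anti-complete, and that once both sides are fixed only $O(1)$ disjoint matched pairs $(x_i,y_i)$ can be placed at the two endpoints of the overlap. This reduces to a finite, block-local case analysis on how a connected set can traverse the middle vertices $a_j^1,a_j^2,b_j^1,b_j^2$, and the bookkeeping is where a looser-than-necessary constant like $100$ (rather than the tighter bound hinted at in the text) buys comfortable slack.
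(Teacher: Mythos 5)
Your proposal is correct and follows essentially the same route as the paper's proof: decompose into blocks, observe that connectivity forces $A$ and $B$ to occupy intervals of blocks, dispatch the (near-)disjoint-support case as immediate, and in the overlapping case use anti-completeness together with the fact that consecutive blocks meet only at the two corner vertices $c^L, c^R$ to pin $A$ to one side and $B$ to the other throughout the overlap, so that $X\cup Y$ is confined to a bounded number of blocks. The ``finite block-local case analysis'' you flag as the delicate step is precisely where the paper itself appeals to inspection (noting that deleting $N[c_k^L]\cup N[c_{k+1}^R]$ disconnects the ladder), so your plan matches the paper in both structure and level of rigor.
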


\begin{proof}

Let $H$ be a $k$-twisted ladder. Assume for a contradiction that $H$ contains a $100$-creature $(A, B$, $\{x_1, x_2, \ldots, x_{100}\}$, $\{y_1, y_2, \ldots, y_{100}\})$.

Let $X_A$ and $X_B$ denote the highest numbered block that $A$ and $B$ have a vertex in respectively, and let $Y_A$ and $Y_B$ denote the lowest numbered block that $A$ and $B$ have a vertex in respectively. Let $i =  max (Y_A, Y_B) + 1$ and let $j = min (X_A, X_B) - 1$. Let $k$ be an integer such that $i \leq k \leq j$ (If no such $k$ exists, then the only blocks that can contain vertices from both $A$ and $B$ must be two adjacent blocks and it is clear the lemma is true in this case). Then since $A$ and $B$ are connected we can see by inspection that $A$ must contain one vertex from $\{c_k^L, c_k^R\}$ and $\{c_{k+1}^L, c_{k+1}^R\}$ and $B$ must contain one vertex from $\{c_k^L, c_k^R\}$ and $\{c_{k+1}^L, c_{k+1}^R\}$. Furthermore, since $A$ is anti-complete with $B$, we can again see from inspection that if $c_k^L \in A$ then we must have $c_{k+1}^L \in A$, $c_k^R \in B$, and $c_{k+1}^R \in B$ (since removal of the closed neighborhood of any path from $c_k^L$ to $c_{k+1}^R$, in fact even just the removal of the closed neighborhoods of $c_k^L$ and $c_{k+1}^R$, would separate blocks numbered greater than $k$ from blocks numbered less than $k$), and if $c_k^R \in A$ then we must have $c_k^L \in B$, and by the same reasoning it follows that $c_{k+1}^L \in B$, and therefore $c_{k+1}^R \in A$. Therefore, without loss of generality we may assume that for all $k$ with $i \leq k \leq j$ that $c_k^L \in A$ and $c_k^R \in B$. It then follows from this assumption and the fact that $A$ is anti-complete with $B$ that there are only two possibilities for the restriction of $A$ and $B$ to the $k^{th}$ block. Either we have that both the restriction of $A$ is the subpath of $L$ with endpoints $c_k^L$ and $c_{k+1}^L$ and the restriction of $B$ is the subpath of $R$ with endpoints $c_k^R$ and $c_{k+1}^R$ or both the restriction of $A$ is the induced path made up of $c_{k+1}^L$ along with $b^1_k$ and $b^1_k$'s two neighbors in $L$ and the restriction of $B$ is the induced path made up of $c_{k}^R$ along with $a^1_k$ and $a^1_k$'s two neighbors in $R$.

By the definition of a $k$-creature, no vertex of the $100$-creature can belong to $N[A] \cap N[B]$. Hence, by what was just shown, the restriction of the $100$-creature to the blocks numbered between and including $i$ and $j$ induced two disjoint paths. Since the $i-1^{th}$ and $j+1^{th}$ blocks are the only two other blocks that can contain vertices from both $A$ and $B$, it is now easy to see that $(A, B$, $\{x_1, x_2, \ldots, x_{100}\}$, $\{y_1, y_2, \ldots, y_{100}\})$ cannot be a $100$-creature.
\end{proof}

\section{$k$-Creature and $k$-Skinny-Ladder Induced Minor Free Graphs}\label{creature and ladder free graphs}

In this section we will provide all the lemmas needed for a proof of Theorem \ref{theorem 1} and conclude this section with said proof. We begin this section by stating some well known results. Corollary \ref{vc corollary} then shows that the neighborhood of a vertex $v$ of a $k$-creature free graph $G$ can intersect the minimal separators of $G$ that do not contain $v$ in at most $n^{k+1}$ different ways. Lemma \ref{domination or creature/ladder} shows that the minimal separators of graphs that are $k$-creature-free and do not contain a $k$-skinny-ladder as an induced minor can be dominated by a constant number vertices. Lemma \ref{quasi-poly minimal separators} then uses a branching algorithm to list all minimal separators of its input graph assuming the input graph satisfies certain properties and proves a bound on the number of minimal separators produced by this algorithm. This lemma is combined with Corollary \ref{vc corollary} and Lemma \ref{domination or creature/ladder} to give a proof of Theorem \ref{theorem 1}.
Most of the work of this section goes into proving lemmas needed for the proof of Lemma \ref{domination or creature/ladder}.


\begin{lemma}[\bf Ramsey's Theorem]
\cite{doi:10.1112/plms/s2-30.1.264} 

For every pair of positive integer $k$ and $\ell$ there is a least positive integer $R(k,\ell)$ such that every graph with at least $R(k,\ell)$ vertices contains a clique of size $k$ or an independent set of size $\ell$.
\end{lemma}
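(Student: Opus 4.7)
The plan is to prove the existence of $R(k,\ell)$ by induction on $k + \ell$, establishing the classical recurrence $R(k,\ell) \leq R(k-1,\ell) + R(k,\ell-1)$. For the base cases I take $R(1,\ell) = R(k,1) = 1$, since any single vertex is both a clique and an independent set of size $1$; alternatively one can start the induction at $R(2,\ell) = \ell$ and $R(k,2) = k$, because a graph on $\ell$ vertices either contains an edge (a clique of size $2$) or is itself an independent set of size $\ell$.

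For the inductive step, assume the recurrence holds for all pairs $(k', \ell')$ with $k' + \ell' < k + \ell$, and set $N = R(k-1,\ell) + R(k,\ell-1)$, which exists by induction. Let $G$ be an arbitrary graph on at least $N$ vertices and pick any $v \in V(G)$. Write $A = N_G(v)$ and $B = V(G) \setminus (A \cup \{v\})$, so that $|A| + |B| \geq N - 1$. By the pigeonhole principle either $|A| \geq R(k-1,\ell)$ or $|B| \geq R(k,\ell-1)$. In the first case, apply the inductive hypothesis to $G[A]$: either $G[A]$ contains a clique of size $k-1$, which together with $v$ extends to a clique of size $k$ in $G$, or it contains an independent set of size $\ell$, which is already what we need. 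The second case is symmetric with the roles of clique and independent set swapped: $G[B]$ either contains a clique of size $k$ (we are done) or an independent set of size $\ell - 1$ which, since $v$ is non-adjacent to every vertex of $B$, extends by $v$ to an independent set of size $\ell$.

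This proves $R(k,\ell) \leq R(k-1,\ell) + R(k,\ell-1)$, and unwinding the recurrence along with the base cases gives the classical explicit bound $R(k,\ell) \leq \binom{k+\ell-2}{k-1}$; only finiteness is required for the statement of the lemma. There is no substantive obstacle here: the argument is a standard Erd\H{o}s--Szekeres-style double induction, and the only care needed is in applying the pigeonhole step correctly and in stating the base case so as to cover the degenerate values of $k$ and $\ell$.
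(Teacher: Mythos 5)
Your proof is correct and is the standard Erd\H{o}s--Szekeres double induction; the only remark is that the paper does not prove this lemma at all, it merely cites Ramsey's original 1930 paper, so there is no in-paper argument to compare against. The pigeonhole step, the base cases, and the way $v$ is attached to the clique or independent set coming from $N_G(v)$ or its complement are all handled correctly, and the binomial bound you quote follows from Pascal's rule exactly as you say.
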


Throughout this paper we will us the notation $R(k,\ell)$ to denote the least positive integer such that every graph with at least $R(k,\ell)$ vertices contains a clique of size $k$ or an independent set of size $\ell$.

\begin{lemma}[\bf Erdos-Szekers Theorem]
\cite{classicpapers}

For every pair on positive integers $r$ and $s$, any sequence of distinct real numbers of length at least ($r$-1)($s$-1) + 1 contains a monotone increasing subsequence of length $r$ or a monotone decreasing subsequence of length $s$.
\end{lemma}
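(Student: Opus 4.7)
The plan is to prove the Erdős--Szekeres Theorem by the classical pigeonhole argument, which I believe is both the shortest and the most natural approach. Suppose we are given a sequence $a_1, a_2, \ldots, a_n$ of distinct real numbers with $n \geq (r-1)(s-1) + 1$, and assume for contradiction that it contains no monotone increasing subsequence of length $r$ and no monotone decreasing subsequence of length $s$.

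To each index $i \in \{1, 2, \ldots, n\}$ I would associate a pair $(u_i, v_i)$, where $u_i$ is the length of the longest strictly increasing subsequence of $a_1, \ldots, a_n$ that ends at $a_i$, and $v_i$ is the length of the longest strictly decreasing subsequence ending at $a_i$. Under the contradictory assumption, every $u_i$ lies in $\{1, 2, \ldots, r-1\}$ and every $v_i$ lies in $\{1, 2, \ldots, s-1\}$, so the total number of distinct pairs $(u_i, v_i)$ is at most $(r-1)(s-1)$.

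The key step is then to show that the map $i \mapsto (u_i, v_i)$ is injective on $\{1, \ldots, n\}$. Fix any $i < j$. Since the reals are distinct, either $a_i < a_j$ or $a_i > a_j$. In the first case, any longest increasing subsequence ending at $a_i$ can be extended by appending $a_j$, giving $u_j \geq u_i + 1$. In the second case, a longest decreasing subsequence ending at $a_i$ can be extended by $a_j$, giving $v_j \geq v_i + 1$. Either way $(u_i, v_i) \neq (u_j, v_j)$, so the pairs are all distinct.

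Combining the two bounds gives $n \leq (r-1)(s-1)$, contradicting the hypothesis that $n \geq (r-1)(s-1) + 1$. I don't foresee a real obstacle here: the only slightly delicate point is being careful that "distinct real numbers" is used crucially to make the strict increasing/decreasing extensions work, and that $u_i, v_i \geq 1$ so the pigeonhole box count is exactly $(r-1)(s-1)$. Everything else is a direct pigeonhole finish.
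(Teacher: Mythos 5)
Your proof is correct; it is the standard Seidenberg pigeonhole argument. The paper itself does not prove this statement — it simply records the Erdős–Szekeres Theorem as a cited classical result (from the reference \texttt{classicpapers}) for later use, so there is no in-paper proof to compare against. Your argument is complete and sound: the injectivity of $i \mapsto (u_i, v_i)$ does use the distinctness of the reals exactly as you flag, and the pigeonhole count of $(r-1)(s-1)$ boxes gives the contradiction.
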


\begin{definition}[\bf V.C. Dimension]
Let ${\cal F} = \{S_1, S_2,\ldots\}$ be a family of sets and let $H$ be a set. ${\cal F}$ is said to $shatter$ $H$ if for every subset $H' \subseteq H$ there is a $S_i \in {\cal F}$ such that $H' = S_i \cap H$. The $V.C.$-$dimension$ of ${\cal F}$ is the cardinality of the largest set that it shatters.
\end{definition}

\begin{lemma}[\bf Sauer-Shelah Lemma] \cite{SAUER1972145}

Let ${\cal F}$ be a family of sets such that the V.C.-dimension of ${\cal F}$ is $k$, and let $n = |\bigcup_{S_i \in {\cal F}} S_i|$, so $n$ is the number of distinct elements contained in the sets of ${\cal F}$. Then the number of sets of ${\cal F}$ is at most $\Sigma_{i=0}^k$ $n \choose i$ $\leq n^k$ + 1. 
\end{lemma}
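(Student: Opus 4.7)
The Sauer-Shelah Lemma is classical, and the plan is to follow the standard inductive proof, establishing the stronger bound $|{\cal F}| \le \sum_{i=0}^{k} \binom{n}{i}$ by induction on $n$; the trailing inequality $\sum_{i=0}^{k} \binom{n}{i} \le n^k + 1$ is then a routine estimate (bound each term $\binom{n}{i}$ by $n^i$ for $i \ge 1$, and keep $\binom{n}{0} = 1$ as the additive constant). The base case $n = 0$ is immediate: the family can contain at most the empty set, and $\binom{0}{0} = 1$.

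For the inductive step, I would fix any element $x$ of the ground set $U = \bigcup_{S_i \in {\cal F}} S_i$ of size $n$, and partition the count of $|{\cal F}|$ through the projection map $S \mapsto S \setminus \{x\}$. Let ${\cal F}_1 = \{S \setminus \{x\} : S \in {\cal F}\}$ be the image of ${\cal F}$ under this projection, viewed as a family on the ground set $U \setminus \{x\}$ of size $n-1$. The projection is injective on ${\cal F}$ except at pairs $\{T, T \cup \{x\}\} \subseteq {\cal F}$; letting ${\cal F}_2 = \{T \in {\cal F} : x \notin T,\; T \cup \{x\} \in {\cal F}\}$ record precisely these collisions, one obtains the identity $|{\cal F}| = |{\cal F}_1| + |{\cal F}_2|$.

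The heart of the argument, and the one step that needs genuine care, is a V.C.-dimension drop for ${\cal F}_2$: I would show its V.C.-dimension is at most $k-1$. If ${\cal F}_2$ shatters some $H \subseteq U \setminus \{x\}$, then for every $H' \subseteq H$ there is $T \in {\cal F}_2$ with $T \cap H = H'$, and by construction both $T$ and $T \cup \{x\}$ lie in ${\cal F}$; these two sets meet $H \cup \{x\}$ in $H'$ and $H' \cup \{x\}$ respectively, so ${\cal F}$ shatters $H \cup \{x\}$, forcing $|H| \le k - 1$. The analogous bound on ${\cal F}_1$ is a triviality: any set in $U \setminus \{x\}$ shattered by ${\cal F}_1$ is also shattered by ${\cal F}$, so the V.C.-dimension of ${\cal F}_1$ is at most $k$.

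Applying the induction hypothesis on the ground set of size $n-1$ then yields $|{\cal F}_1| \le \sum_{i=0}^{k} \binom{n-1}{i}$ and $|{\cal F}_2| \le \sum_{i=0}^{k-1} \binom{n-1}{i}$; summing and invoking Pascal's identity $\binom{n-1}{i} + \binom{n-1}{i-1} = \binom{n}{i}$ collapses the total to $\sum_{i=0}^{k} \binom{n}{i}$, which closes the induction. Aside from the V.C.-dimension drop for ${\cal F}_2$, every step is bookkeeping with Pascal's rule, so that is where I expect the only real content of the argument to sit.
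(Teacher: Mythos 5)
The paper cites this lemma rather than proving it, so there is no in-paper proof to compare against; I will assess your argument on its own.

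Your inductive proof of the combinatorial core, $|{\cal F}| \le \sum_{i=0}^{k}\binom{n}{i}$, is the standard one and is correct: the split of ${\cal F}$ into the projection ${\cal F}_1$ and the collision family ${\cal F}_2$, the exact count $|{\cal F}| = |{\cal F}_1| + |{\cal F}_2|$, the V.C.-dimension drop for ${\cal F}_2$ (shattering $H$ by ${\cal F}_2$ lifts to shattering $H \cup \{x\}$ by ${\cal F}$), and the Pascal's-rule telescoping all check out.

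The ``routine estimate'' at the end, however, has a gap. Bounding $\binom{n}{i}$ by $n^i$ gives only $\sum_{i=0}^{k}\binom{n}{i} \le 1 + \sum_{i=1}^{k} n^i$, and the right-hand side generally exceeds $n^k + 1$: already at $n = k = 2$ it equals $7$ while $n^k + 1 = 5$, so this chain of inequalities does not reach the stated bound. The inequality $\sum_{i=0}^{k}\binom{n}{i} \le n^k + 1$ is nevertheless true for all $n, k \ge 0$, but it needs a sharper argument than termwise domination. One clean fix: map each nonempty $A \subseteq \{1,\ldots,n\}$ with $|A| \le k$ to the tuple in $\{1,\ldots,n\}^k$ obtained by listing the elements of $A$ in increasing order and padding the tail by repeating the maximum element of $A$. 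The set of distinct entries of the tuple recovers $A$, so the map is injective, giving $\sum_{i=1}^{k}\binom{n}{i} \le n^k$; adding back the $\binom{n}{0} = 1$ term yields the claim. This repair is needed to finish the proof, though it does not affect the induction, which is the substantive part of the lemma and which you have right.
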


Given a graph $G$ and two non adjacent vertices $u,v \in G$, we say that a $u,v$-minimal separator $S$ is $close$ to $v$ if $S \subseteq N(v)$. The following two lemmas establish useful properties of $u,v$-minimal separators close to $v$.

\begin{lemma}\label{close separator} \cite{KloksK94}
Given a graph $G$ and two non adjacent vertices $u,v \in V(G)$, there exists a unique $u,v$-minimal separator that is close to $v$. 
\end{lemma}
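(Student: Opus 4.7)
The natural candidate for the unique close $u,v$-minimal separator is
\[
  S := N(C_u), \qquad \text{where $C_u$ is the connected component of $u$ in } G - N(v).
\]
Note $C_u$ is well-defined because $u \notin N(v)$ (as $u,v$ are non-adjacent), so $u$ survives in $G - N(v)$. The plan is to show in turn that $S$ is contained in $N(v)$, that $S$ is a minimal $u,v$-separator, and finally that any close $u,v$-minimal separator must coincide with $S$.

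\textbf{Existence.} First I would observe that $S \subseteq N(v)$: if $x \in N(C_u)$ and $x \notin N(v)$, then $x$ is a vertex of $G - N(v)$ adjacent to the component $C_u$, contradicting the maximality of the component. Next, I would verify that removing $S$ disconnects $u$ from $v$: $C_u$ is a union of components of $G - S$ (all its outside neighbors have been deleted), and $v \notin C_u$ (since $v \notin G - N(v)$ in the sense that $v$ lies in its own trivial component of $G - N(v)$, and also $v \notin C_u$ because $v$ is non-adjacent to everything in $C_u$). Finally, to check minimality I would verify that both $C_u$ and the component $C_v$ of $v$ in $G - S$ are full components: $N(C_u) = S$ by definition; for $C_v$, every $s \in S \subseteq N(v)$ is adjacent to $v \in C_v$ (and is not in $C_v$ because $s \in S$), so $S \subseteq N(C_v) \subseteq S$, giving equality. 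Hence $S$ is a minimal $u,v$-separator with $S \subseteq N(v)$.

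\textbf{Uniqueness.} This is where the interesting content lies. Let $S'$ be any minimal $u,v$-separator with $S' \subseteq N(v)$, and let $D_u$ denote the component of $u$ in $G - S'$. Since $S' \subseteq N(v)$, every component of $G - N(v)$ is contained in a component of $G - S'$, so $C_u \subseteq D_u$. The crucial step is to prove the reverse inclusion $D_u \subseteq C_u$. Suppose for contradiction that some $y \in D_u \setminus C_u$, and pick a $u$-to-$y$ path $P$ in $G - S'$. Because $y \notin C_u$, the path $P$ must use some vertex $w$ of $N(v) \setminus S'$. But then $w$ is a neighbor of $v$ lying in $D_u$, which forces $v \in D_u$, contradicting the fact that $S'$ separates $u$ from $v$. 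Thus $D_u = C_u$, and therefore $S' \supseteq N(D_u) = N(C_u) = S$ (by minimality of $S'$, every vertex of $S'$ lies in $N(D_u)$, giving the other inclusion as well). Hence $S' = S$.

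\textbf{Expected main obstacle.} The existence part is essentially bookkeeping once the right candidate is identified. The delicate step is the uniqueness argument, specifically the proof that $D_u \subseteq C_u$: one must exploit both the hypothesis $S' \subseteq N(v)$ (which forces every ``detour'' vertex to be adjacent to $v$) and the minimality of $S'$ (which is what ensures $v$ really is separated from $u$, so that any such detour produces a contradiction). Once this containment is in hand, the fact that $S'$ is a full-component-style minimal separator immediately yields $S' = N(C_u) = S$.
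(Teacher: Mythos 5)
The paper does not prove this lemma itself; it cites it from Kloks and Kratsch~\cite{KloksK94}. Your argument is correct and is essentially the canonical proof: take $S = N(C_u)$ with $C_u$ the component of $u$ in $G - N(v)$, show $S \subseteq N(v)$ and that $C_u$ and the component of $v$ are both $S$-full, then for uniqueness show that for any close $u,v$-minimal separator $S'$ the component of $u$ in $G - S'$ must coincide with $C_u$ (else a vertex of $N(v)\setminus S'$ would land in $u$'s component and pull $v$ in with it), whence $S' = N(C_u) = S$.
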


Let $S$ and $S'$ be two $u,v$ minimal separators for a graph $G$. Let $C_u$ be the connected component $u$ belongs to in $G-S$ and let $C_u'$ be the connected component $u$ belongs to in $G-S'$. Then we say that $S' \leq_{u,v} S$  if $C_u' \subset C_u$.

\begin{lemma}\label{close separator ordering}
Let $S_{v,u}$ be the $u,v$-minimal separator close to $v$ given by Lemma \ref{close separator} for some graph $G$ and let $S$ be another $u,v$-minimal separator. Then $S \leq_{u,v} S_{v,u}$.
\end{lemma}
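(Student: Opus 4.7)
The plan is to prove $C'_u\subseteq C_u$ by contradiction. Suppose there is a vertex $w\in C'_u\setminus C_u$. Since $w\in C'_u$, pick a path $P$ from $u$ to $w$ lying entirely inside $G-S$. Because $u\in C_u$ but $w\notin C_u$, the path $P$ must at some point leave the set $C_u$. Let $s$ be the first vertex of $P$ outside $C_u$, and let $s'$ be the predecessor of $s$ on $P$, so $s'\in C_u$ and $s'\,s\in E(G)$.

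The next step is to pin down where $s$ can possibly sit. Since $s$ has a neighbor $s'\in C_u$ in $G$, and $C_u$ is a connected component of $G-S_{v,u}$, the only way $s$ can fail to lie in $C_u$ is for $s$ to lie in $S_{v,u}$ itself (any other vertex adjacent to $C_u$ would have to be in $C_u$). Thus $s\in S_{v,u}$. This is the one place where a little care is required, but it is immediate from the definition of a connected component, so I do not expect a real obstacle here; the content of the argument really lives in what happens next.

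Now invoke the hypothesis that $S_{v,u}$ is close to $v$, i.e., $S_{v,u}\subseteq N(v)$. Then $s$ is adjacent to $v$. Moreover $s$ lies on $P\subseteq G-S$, so $s\notin S$, and certainly $v\notin S$ either. Therefore the edge $sv$ survives in $G-S$, putting $s$ in the same connected component of $G-S$ as $v$, namely $C'_v$. On the other hand, $s$ lies on the $u$-to-$w$ path $P$ inside $G-S$, so $s\in C'_u$. This contradicts the fact that $C'_u$ and $C'_v$ are distinct connected components of $G-S$, completing the argument. The whole proof is essentially a single observation: any attempted "detour" from $C_u$ in $G-S$ has to pass through $S_{v,u}\subseteq N(v)$, and such a detour vertex is immediately swallowed by $C'_v$.
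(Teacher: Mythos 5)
Your proof is correct and follows essentially the same argument as the paper: both locate a vertex of $S_{v,u}$ lying inside $C'_u$ (you do it a bit more explicitly by tracking the first exit of a $u$-$w$ path from $C_u$), then use $S_{v,u}\subseteq N(v)$ to conclude $v\in C'_u$, contradicting that $S$ separates $u$ from $v$.
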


\begin{proof}
Let $S_{v,u}$ be the $u,v$-minimal separator close to $v$ given by Lemma \ref{close separator} for some graph $G$ and let $S$ be another $u,v$-minimal separator. Assume for a contradiction that $S \leq_{u,v} S_{v,u}$ does not hold. Let $C_u$ be the connected component $u$ belongs to in $G-S_{v,u}$ and let $C_u'$ be the connected component $u$ belongs to in $G-S$. Then there is some vertex $x \in C_u'$ that is not in $C_u$. Since $x$ is not in $C_u$ it follows that every path from $u$ to $x$ must contain a vertex from $S_{v,u}$, but this implies that there must then be some vertex from $S_{v,u}$ that belongs to $C_u'$ since $x \in C'_u$ which means that $S$ would not separate $v$ from $u$ (since every vertex in $S_{v,u}$ is neighbors with $v$), a contradiction.
\end{proof}

\begin{lemma}\label{vc lemma}
If $S$ is a $u,v$-minimal separator for a graph $G$ that is $k$-creature free and ${\cal S}^+$ = $\{S' \cap S : S' \leq_{u,v} S, S'$ is a $u,v$-minimal separator$\}$, then $|{\cal S}^+| \leq |V(G)|^{k}$. 

\end{lemma}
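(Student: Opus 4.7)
The plan is to bound the VC-dimension of the family $\mathcal{S}^+$, viewed as a collection of subsets of $S$, by $k-1$, and then invoke the Sauer--Shelah lemma to conclude $|\mathcal{S}^+|\leq \sum_{i=0}^{k-1}\binom{n}{i}\leq n^{k}$, where $n=|V(G)|$. I will first record a structural observation that constrains separators $S'\leq_{u,v}S$: writing $A=C_u^S$, $B=C_v^S$, $A'=C_u^{S'}$, and $B'=C_v^{S'}$, the definition of $\leq_{u,v}$ gives $A'\subseteq A$, hence $S'\subseteq N(A')\subseteq A\cup S$. Next, $B$ is connected, contains $v$, and is disjoint from $S'\subseteq A\cup S$, so $B\subseteq B'$. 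Finally, every $s\in S\setminus S'$ has a neighbor in $B\subseteq B'$ and lies outside $A'\cup S'$, so $s\in B'$. Thus $S\cap B'=S\setminus S'$ and $S\cap A'=\emptyset$; the trace $S'\cap S$ records precisely which vertices of $S$ remain in the separator and which have been pushed into the $v$-component.

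To show $\mathrm{VCdim}(\mathcal{S}^+)\leq k-1$, I will argue by contradiction: suppose some $H=\{h_1,\dots,h_k\}\subseteq S$ is shattered. Then for every $H'\subseteq H$ there is a minimal $u,v$-separator $S_{H'}\leq_{u,v}S$ with $S_{H'}\cap H=H'$, and I will use these $2^k$ separators to build a $k$-creature $(A_{\mathrm{crt}},B_{\mathrm{crt}},X,Y)$ in $G$, contradicting $k$-creature-freeness. Set $X=H$; for each $i$ pick a neighbor $a_i\in A$ of $h_i$ (exists because $A$ is $S$-full), and let $A_{\mathrm{crt}}$ be a connected Steiner-style subgraph of $G[A]$ containing $a_1,\dots,a_k$, so that $A_{\mathrm{crt}}$ is connected and dominates $X$. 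For the other side, I use the separators $T_i:=S_{H\setminus\{h_i\}}$: they satisfy $h_i\in B_{T_i}$ while $h_j\in T_i$ for $j\neq i$, so inside the connected set $B_{T_i}$ there is a path from $h_i$ to $v$ whose interior avoids $H\setminus\{h_i\}$. I take $y_i$ to be (essentially) the neighbor of $h_i$ on such a path and let $B_{\mathrm{crt}}$ be a connected subgraph of $G[B]$ linking the ``deep ends'' of these paths with $v$.

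The main obstacle is verifying the non-edges $x_iy_j\notin E(G)$ for $i\neq j$ together with the anti-completeness of each $y_i$ with $A_{\mathrm{crt}}$. Membership of $y_i$ in $B_{T_i}$ gives only $y_i\neq h_j$ for $j\neq i$, not non-adjacency, because $h_j\in T_i$ sits directly on the boundary of $B_{T_i}$. To force $N(y_i)\cap H=\{h_i\}$, I plan to exploit additional shattering patterns: using the separators $S_{H'}$ for subsets $H'$ that selectively include or exclude each $h_j$, one can intersect the corresponding $v$-components and iteratively prune the candidate path for $y_i$ until its only $H$-neighbor is $h_i$. Anti-completeness of $y_i$ with $A_{\mathrm{crt}}$ is then comparatively easy: since $A_{\mathrm{crt}}\subseteq A$ and $N(A)=S$, arranging $y_i\notin S$ (by taking paths that step one vertex away from $h_i$ into $B_{T_i}\setminus S$) makes $y_i$ automatically non-adjacent to $A_{\mathrm{crt}}$. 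The resulting $k$-creature contradicts the hypothesis on $G$, so $\mathrm{VCdim}(\mathcal{S}^+)\leq k-1$, and Sauer--Shelah delivers the bound $|\mathcal{S}^+|\leq n^{k}$.
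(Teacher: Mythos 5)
Your high-level strategy matches the paper's exactly: bound the VC-dimension of $\mathcal{S}^+$ by $k-1$, invoking Sauer--Shelah, and obtain the bound on VC-dimension by constructing a $k$-creature from a shattered set. The gap is in the construction of the creature, and it is real.

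You place $Y=\{y_1,\dots,y_k\}$ on the $v$-side, taking $y_i$ to be a neighbor of $h_i$ inside $B_{T_i}=C_v^{T_i}$ where $T_i=S_{H\setminus\{h_i\}}$. The problem is that $\leq_{u,v}$ only gives you $C_u^{T_i}\subseteq C_u^{S}$, which translates on the $v$-side into $B\subseteq B_{T_i}$ --- containment in the \emph{wrong} direction. Concretely: (1) $B_{T_i}$ can contain vertices of $S$ and even of $A$, so ``$y_i\notin S$'' does not put $y_i$ into $B$, and a $y_i\in A$ can be adjacent to, or even a vertex of, $A_{\mathrm{crt}}\subseteq A$; (2) each $h_j$ ($j\neq i$) lies in $T_i$, i.e.\ on the \emph{boundary} of the component $B_{T_i}$, so nothing prevents $y_i\in B_{T_i}$ from being adjacent to $h_j$. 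Your proposed fix of intersecting $v$-components and ``iteratively pruning'' cannot work, because every $B_{S_{H'}}$ with $S_{H'}\leq_{u,v}S$ is a superset of $B$, so the intersection never shrinks below $B$, and within $B$ the shattering hypothesis places no constraint on which neighbors of $h_i$ are also neighbors of $h_j$. You also never explain how the ``deep ends'' of paths that wander through $S$ and $A$ reconnect to a $B_{\mathrm{crt}}\subseteq B$.

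The paper avoids all of this by placing the paths and $Y$ on the $u$-side: using $S_i\leq_{u,v}S$ with $S_i\cap H=\{h_i\}$, it takes $y_i$ to be the closest-to-$u$ neighbor of $h_i$ inside $C_u^{S_i}$, and the path $P_i$ from $y_i$ to $u$ lives entirely in $C_u^{S_i}\subseteq C_u^S=A$, hence is disjoint from $S$. Then $h_j$ adjacent to any vertex of $P_i$ would force $h_j\in C_u^{S_i}$ (since $h_j\notin S_i$), contradicting $h_j\in S$ and $C_u^{S_i}\cap S=\emptyset$. With $A_{\mathrm{creature}}=C_v^S$ and $B_{\mathrm{creature}}=\bigcup P_i\setminus Y$, all the anti-completeness conditions follow. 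To repair your argument, flip the roles: choose $S_i$ with $S_i\cap H=\{h_i\}$ (not $H\setminus\{h_i\}$), take $y_i$ and the paths inside $C_u^{S_i}$, set $A_{\mathrm{creature}}$ to be the $v$-side full component (not your Steiner tree in $A$), and let the union of paths minus $Y$ play the role of $B_{\mathrm{creature}}$.
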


\begin{proof}
Let $S$ be a $u,v$-minimal separator of a $k$-creature free graph $G$, let $H$ be the $S$-full component of $G-S$ that contains $u$, and let $|S^+|$ be as in the statement of the lemma. We show that the V.C. dimension of $|S^+|$ is less than $k$. It will then follow by the Sauer-Shelah Lemma that $|S^+| \leq |V(G)|^{k-1} + 1 \leq |V(G)|^{k}$ (assuming $|V(G)| > 1$, but the lemma is trivially true if $|V(G)| = 1$ or $0$).

Assume for a contradiction that the V.C. dimension of $|S^+|$ is at least $k$. Then there exists minimal separators $S_1, S_2,\ldots, S_k$ and vertices $s_1, s_2, \ldots, s_k \in S$ such that $S \cap S_i = \{s_i\}$ and $S_i \leq_{u,v} S$. Let $H_i$ denoted the $S_i$-full component $u$ belongs to in $G-S_i$, let $s_i'$ be a neighbor of $s_i$ in $H_i$ such that its distance from $u$ in $H_i$ is minimum among all neighbors of $s_i$ in $H_i$, and let $P_i$ denote a shortest path from $s_i'$ to $u$ that is contained completely in $H_i$. It follows that the only vertex of $P_i$ that is neighbors with $s_i$ is $s_i'$. Let $P = \bigcup_{1 \leq i \leq k} P_i$, and let $K$ be the $S$-full component that contains $v$. We then have that ($V(K)$, $V(P)-\{s_1', s_2', \ldots, s_k'\}$, $\{s_1, s_2, \ldots, s_k\}$, $\{s_1', s_2' \ldots, s_k'\}$) forms a $k$-creature. To see this note that $P$ and $K$ are anti-complete since $S_i \leq_{u,v} S$, so all that must be verified is that if $s_is_j' \in E(G)$ then $i = j$ and that $P-\{s_1', s_2', \ldots, s_k'\}$ and $\{s_1, s_2, \ldots, s_k\}$ are anti-complete. We already saw before that $s_i'$ is the only vertex of $P_i$ that is neighbors with $s_i$. Furthermore, if there is some element $p_j \in P_j$ such that $s_ip_j \in E(G)$ but $i \neq j$ then this implies that $s_i \in H_j$ since $s_i \notin S_j$, but this would contradict the fact that $S_j \leq_{u,v} S$. It follows $P-\{s_1', s_2', \ldots, s_k'\}$ is disjoint from $\{s_1, s_2, \ldots, s_k\}$ and that $s_is_j' \in E(G)$ if and only if $i = j$.
\end{proof}

 As noted before, the following corollary is a key part of the proof of Theorem \ref{theorem 1}.
 
\begin{corollary}\label{vc corollary}
If $G$ is a $k$-creature-free graph, then for every $v \in V(G)$, with $S^v = \{N(v) \cap S : v \notin S$ and $S$ is a minimal separator of $G\}$, it holds that $|S^v| \leq |V(G)|^{k+1}$.
\end{corollary}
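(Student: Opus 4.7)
The plan is to combine Lemma~\ref{vc lemma} with the close-separator machinery (Lemmas \ref{close separator} and \ref{close separator ordering}) to reduce counting the distinct intersections $N(v)\cap S$ to counting, for each possible ``opposite endpoint'' $u$, the intersections of $u,v$-minimal separators with the unique close-to-$v$ one $T_u$. Concretely, fix $v$. For each $u \in V(G)$ non-adjacent to $v$, Lemma \ref{close separator} provides a unique $u,v$-minimal separator $T_u$ with $T_u \subseteq N(v)$, and Lemma \ref{close separator ordering} says every $u,v$-minimal separator $S'$ satisfies $S' \leq_{u,v} T_u$. Applying Lemma \ref{vc lemma} with the separator $T_u$, the set $\{S' \cap T_u : S' \text{ is a } u,v\text{-minimal separator}\}$ has size at most $|V(G)|^k$.

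Next, for each minimal separator $S$ with $v \notin S$ I would select any $S$-full component $C$ of $G-S$ not containing $v$ (at least one exists, since minimal separators have two full components and $v$ lies in at most one), and pick $u \in C$. Then $u$ and $v$ are in distinct components of $G-S$ and neither lies in $S$, so $u,v$ are non-adjacent in $G$, and $S$ is a $u,v$-minimal separator. The crux of the argument is the identification $N(v) \cap S = S \cap T_u$. The inclusion $S \cap T_u \subseteq N(v) \cap S$ is free since $T_u \subseteq N(v)$. For the other direction, take $w \in N(v) \cap S$; because $C$ is $S$-full, $w$ has a neighbor $w' \in C$. By Lemma \ref{close separator ordering}, $C$ (which is $u$'s component in $G-S$) is contained in $u$'s component $C_u^{T_u}$ of $G - T_u$, so $w' \in C_u^{T_u}$. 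If $w \notin T_u$, then $w$ is adjacent to $w' \in C_u^{T_u}$ and hence $w \in C_u^{T_u}$; but $w$ is also adjacent to $v$, so $w \in C_v^{T_u}$, contradicting $C_u^{T_u} \neq C_v^{T_u}$. Thus $w \in T_u$, giving the desired equality.

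Counting finishes the proof: each member of $S^v$ equals $S \cap T_u$ for some non-adjacent $u$, so there are at most $|V(G)|$ choices of $u$, each contributing at most $|V(G)|^k$ intersections by Lemma \ref{vc lemma}, yielding $|S^v| \leq |V(G)|^{k+1}$. The only nontrivial step is the key claim $N(v) \cap S = S \cap T_u$; the inclusion $\subseteq$ is the main obstacle and seems to genuinely need both hypotheses, namely that $u$ lies in an $S$-full component (which forces a neighbor of every $w \in S$ into $C$) and the component-containment guaranteed by Lemma \ref{close separator ordering} (which pushes that neighbor into $C_u^{T_u}$), so that $w$ is squeezed between the $u$-side and the $v$-side of $T_u$.
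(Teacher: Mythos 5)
Your proof is correct and follows essentially the same route as the paper: fix $v$, use the unique close-to-$v$ separator $T_u$ for each non-adjacent $u$ via Lemma~\ref{close separator}, bound each $\{S' \cap T_u : S' \text{ a } u,v\text{-minimal separator}\}$ by $|V(G)|^k$ via Lemmas~\ref{close separator ordering} and~\ref{vc lemma}, and show $N(v)\cap S = T_u\cap S$ by picking $u$ in an $S$-full component missing $v$. The only (minor) difference is in establishing $N(v)\cap S \subseteq T_u$: you invoke the component containment $C \subseteq C_u^{T_u}$ from Lemma~\ref{close separator ordering} to squeeze a hypothetical $w \notin T_u$ into both $C_u^{T_u}$ and $C_v^{T_u}$, whereas the paper exhibits an explicit $v$--$w$--$u$ path meeting $N(v)$ only at $w$ and hence missing $T_u$ if $w \notin T_u$; both are valid and of comparable length.
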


\begin{proof}
Let $G$ be a $k$-creature-free graph, and for every $v \in V(G)$ let $S^v$ be as in the statement of this lemma. For each $u$ in $G$ with $u \neq v$ and $uv \notin E(G)$, let $S_{v,u}$ be the $u,v$-minimal separator close to $v$ given by Lemma \ref{close separator}, and let $S^v_u = \{S_{v,u} \cap S : S$ is a $u,v$-minimal separator$\}$. Since for every $u,v$-minimal separator, $S'$, it holds that $S' \leq_{u,v} S_{v,u}$ by Lemma \ref{close separator ordering}, applying Lemma \ref{vc lemma} using $S_{v,u}$ as $S$, it follows that $S^v_u \leq |V(G)|^{k}$. We will show that $S^v = \bigcup S^v_u$ where the union is taken over all $u \in V(G)$ with $u \neq v$ and $uv \notin E(G)$. It will then follow that $|S^v| \leq |V(G)|^{k+1}$.

Let $S$ be some minimal separator such that $v \notin S$, let $C$ be the component $v$ is in in $G-S$, and let $u$ be a vertex in some $S$-full component of $G-S$ different from $C$, so $S$ is a $u,v$-minimal separator. Now, if $w \in N(v)$ and $w \in S$, then since $u$'s component of $G-S$ is $S$ full there is a path from $w$ to $u$ that contains no vertex of $C$ or $S$ other than $w$ and therefore contains no vertex of $N(v)$ other than $w$. So if $S_{v,u}$ did not contain $w$ then it would not separate $u$ from $v$, hence $w \in S_{v,u}$. Since $S_{v,u} \subset N(v)$ it follows that $S_{v,u} \cap S = N(v) \cap S$. Hence, $N(v) \cap S \in S^v_u$, and the result follows. 
\end{proof}

The following corollary will be needed in Sections \ref{cycle free graphs} and \ref{bounded clique}
\begin{corollary}\label{outside domination}
If $G$ is a $k$-creature-free graph and every minimal separator, $S$, of $G$ can be dominated by $k$ vertices of $G$ not in $S$, then $G$ has at most $|V(G)|^{k^2+2k}$ minimal separators.
\end{corollary}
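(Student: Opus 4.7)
The plan is to directly combine Corollary~\ref{vc corollary} with the hypothesis that every minimal separator is dominated by $k$ vertices outside it. The key observation is that if a minimal separator $S$ of $G$ is dominated by a vertex set $D_S = \{v_1,\ldots,v_k\}$ disjoint from $S$, then every vertex of $S$ lies in $N(v_i)$ for some $i$, so
\[
S \;=\; \bigcup_{i=1}^{k} \bigl(N(v_i)\cap S\bigr).
\]
Thus each minimal separator $S$ is uniquely determined by the pair $\bigl(D_S,\,(N(v_1)\cap S,\ldots,N(v_k)\cap S)\bigr)$, and to bound the number of minimal separators it suffices to count such pairs.

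First I would bound the number of possible ordered dominators $D_S=(v_1,\dots,v_k)$ by the trivial estimate $|V(G)|^k$. Next, for each individual vertex $v_i$, since $v_i\notin S$, Corollary~\ref{vc corollary} guarantees that the trace $N(v_i)\cap S$ takes at most $|V(G)|^{k+1}$ distinct values as $S$ ranges over all minimal separators of $G$ not containing $v_i$. Hence, once $D_S$ is fixed, the number of possible $k$-tuples $(N(v_1)\cap S,\ldots,N(v_k)\cap S)$ is at most $\bigl(|V(G)|^{k+1}\bigr)^{k}=|V(G)|^{k^2+k}$.

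Multiplying, the total number of minimal separators is at most
\[
|V(G)|^{k}\cdot|V(G)|^{k^2+k}\;=\;|V(G)|^{k^2+2k},
\]
which is precisely the claimed bound.

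There is essentially no hard step here: the only thing to be checked carefully is that the encoding $S\mapsto (D_S,(N(v_i)\cap S)_i)$ is injective, which is immediate from the domination identity displayed above. In particular, we never need to distinguish between different choices of $D_S$ for the same $S$, since using \emph{any} valid dominator $D_S$ guaranteed by the hypothesis suffices for the union bound. So the whole corollary is obtained simply by layering the Sauer--Shelah-type VC bound from Corollary~\ref{vc corollary} on top of the domination assumption.
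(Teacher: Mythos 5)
Your proof is correct and follows essentially the same route as the paper: both apply Corollary~\ref{vc corollary} to bound the number of possible traces $N(v)\cap S$, then use the domination hypothesis to write each $S$ as a union of at most $k$ such traces and count. The paper phrases this by pooling all traces into a single family $X$ of size at most $|V(G)|^{k+2}$ and bounding the number of $k$-fold unions, while you keep the dominating vertices explicit in the encoding; the arithmetic and the underlying idea are the same.
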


\begin{proof}
Assume $G$ is a $k$-creature-free graph and every minimal separator, $S$, of $G$ can be dominated by $k$ vertices of $G$ not in $S$. For every $v \in V(G)$ let $S^v = \{N(v) \cap S : v \notin S$ and $S$ is a minimal separator of $G\}$. By Corollary \ref{vc corollary} it holds that $|S^v| \leq |V(G)|^{k+1}$. Let $X = \bigcup_{v \in G}S^v$. Then $|X| = |V(G)|^{k+2}$ and the assumption that all minimal separators, $S$, of $G$ can be dominated by $k$ vertices in $G$ not in $S$ implies that $S$ is the union of at most $k$ sets in $X$. It follows there are at most $|V(G)|^{k^2+2k}$ minimal separators in $G$.
\end{proof}

We remark that it is possible to generalize Corollaries \ref{vc corollary} and \ref{outside domination} to the $r^{th}$ neighborhood of a vertex for any fixed positive integer $r$ while still maintaining polynomial bounds by using the fact the family of $k$-creature-free graphs are closed under contracting edges.

The following lemmas will be building towards a proof of Lemma \ref{domination or creature/ladder}. We begin with an easy observation that will be useful in the proof of Lemma \ref{domination or creature/ladder}.

\begin{lemma}\label{directed independent set}
Let $G$ be a directed graph with maximum out-degree or maximum in-degree at most $c$, $c > 0$. Then $G$ has an independent set (no vertex is an in-neighbor or out-neighbor of any other vertex in this set) of size at least  $\frac{|V(G)|}{2c+1}$. Furthermore, if $|V(G)| \geq 2t$ and the maximum out-degree or maximum in-degree of $G$ is at most $\frac{1}{4t}|V(G)|$, then $G$ has an independent set of size at least $t$.  
\end{lemma}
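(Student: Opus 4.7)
The plan is to reduce the directed problem to its underlying undirected shadow and then apply a standard Turán-type bound. Define the undirected graph $G'$ on $V(G)$ by placing an edge $uv$ whenever $G$ contains at least one of the arcs $uv$ or $vu$. An independent set in the sense required by the lemma is exactly an independent set in $G'$, so it suffices to prove the two quantitative bounds for $G'$.

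First I would count the edges of $G'$. Since each edge of $G'$ comes from at least one arc of $G$, we have $|E(G')| \leq |E(G)|$. Under either hypothesis (max out-degree $\leq c$ or max in-degree $\leq c$) the handshake identity in the digraph gives $|E(G)| \leq c|V(G)|$, so $|E(G')| \leq c|V(G)|$ and in particular the average degree of $G'$ is at most $2c$. Next, I would invoke the Caro--Wei / Turán bound: any graph $H$ on $n$ vertices with $m$ edges contains an independent set of size at least $n^2/(2m+n)$ (this follows for instance by greedily peeling off a vertex of minimum degree, or directly from $\sum_v 1/(d(v)+1) \geq n^2/(2m+n)$ by convexity). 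Applied to $G'$ with $n = |V(G)|$ and $m \leq c|V(G)|$, this yields an independent set of size at least
\[
\frac{|V(G)|^{2}}{2c|V(G)|+|V(G)|} \;=\; \frac{|V(G)|}{2c+1},
\]
which is the first claim.

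For the second claim I would simply substitute the bound $c \leq |V(G)|/(4t)$ into the inequality just proved. The guaranteed independent-set size is
\[
\frac{|V(G)|}{2c+1} \;\geq\; \frac{|V(G)|}{\,|V(G)|/(2t)+1\,}.
\]
Using the hypothesis $|V(G)| \geq 2t$, we have $|V(G)|/(2t) \geq 1$, so the denominator is at most $2 \cdot |V(G)|/(2t) = |V(G)|/t$. Thus the independent set has size at least $|V(G)|/(|V(G)|/t) = t$, as required.

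I do not expect a real obstacle here: the only subtlety is being clear that the hypothesis is a disjunction (either the out-degrees are bounded or the in-degrees are) and that both cases yield the same edge bound $|E(G)| \leq c|V(G)|$, after which the shadow-graph reduction plus a single Turán-style inequality finishes both parts.
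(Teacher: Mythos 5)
Your proof is correct. The reduction to the undirected shadow graph and the edge count $|E(G')| \leq |E(G)| \leq c|V(G)|$ are both sound, the Caro--Wei/Tur\'an bound does give an independent set of size at least $n^2/(2m+n)$, and your algebra for the ``furthermore'' part (using $|V(G)|\geq 2t$ to bound $|V(G)|/(2t)+1 \leq |V(G)|/t$) matches what is needed.

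The route differs slightly from the paper's. The paper works directly in the digraph: under the out-degree hypothesis it observes (by averaging over the in-degree sequence, which has the same total) that some vertex $v$ has in-degree at most $c$, removes $v$ together with its at most $2c$ in- and out-neighbors, and recurses; each step costs at most $2c+1$ vertices and contributes one vertex to the independent set, giving the $|V(G)|/(2c+1)$ bound by induction. Your approach instead passes to the underlying undirected graph and invokes the Tur\'an/Caro--Wei inequality as a black box. These are really the same counting argument --- the greedy peeling in the paper is one standard proof of the Tur\'an bound --- but your version is more modular (cites a known inequality rather than re-proving it), whereas the paper's is self-contained and makes explicit why the disjunctive hypothesis (out-degree \emph{or} in-degree bounded) suffices: bounding one of the two degree sums bounds the total arc count, hence also the average of the other.
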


\begin{proof}
Let $G$ be a directed graph. We will prove the statements for bounded maximum out-degree (for maximum in-degree the proof is nearly identical). If the maximum out-degree of $G$ is $c$, $c > 0$, then as long as $G$ has at least one vertex, there must exists a vertex $v \in G$ with in-degree at most $c$. If we let $G'$ be the subgraph induced by all vertices of $G-v$ that do not have $v$ as an in-neighbor or an out-neighbor, then the size of $G'$ is at least $|V(G)|$ - $2c - 1$, and $G'$ has maximum degree $c$. It follows by an inductive argument that we can find an independent set of size at least $\frac{|V(G)|}{2c+1}$.

To prove the furthermore statement, assume the maximum out degree of $G$ is at most $\frac{1}{4t}|V(G)|$, $t > 0$, and $|V(G)| \geq 2t$, so we have that $\frac{|V(G)|}{2t}+1$ $\leq$ $\frac{|V(G)|}{t}$. From the first paragraph we have that $G$ contains an independent set of size at least $\frac{|V(G)|}{\frac{2|V(G)|}{4t}+1} = \frac{|V(G)|}{\frac{|V(G)|}{2t}+1} \geq \frac{|V(G)|}{\frac{|V(G)|}{t}} = t$.
\end{proof}





Lemmas \ref{path domination}, \ref{path domination helper}, and \ref{path domination helper 2} are used to help in the proof of Lemma \ref{path domination of nice set}. Lemma \ref{path domination of nice set} then is used to produce a structure that is similar to a $k$-skinny-ladder in graphs that are $k$-creature free and have minimal separators that cannot be dominated by few vertices. The structure that Lemma \ref{path domination of nice set} produces is then used in Lemma \ref{domination or creature/ladder} to produce a $k$-skinny-ladder as an induced minor in graphs that are $k$-creature-free and have minimal separators that cannot be dominated by few vertices.

\begin{lemma}\label{path domination}
Let $G$ be a graph that is $k$-creature free, let $S$ be a minimal separator of $G$, and let $A$ be an $S$-full component of $G-S$. Then $S$ is dominated by the union of less than $k$ induced paths in $A$.
\end{lemma}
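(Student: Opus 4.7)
The plan is to prove the contrapositive: if no collection of fewer than $k$ induced paths in $A$ dominates $S$, then $G$ contains a $k$-creature, contradicting the hypothesis that $G$ is $k$-creature-free.

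First, I would fix a minimum-cardinality cover of $S$ by induced paths $P_1,\ldots,P_m$ in $A$, with $m\ge k$ by assumption. By the minimality of $m$, for each $i\le k$ there must exist a \emph{private witness} $s_i\in S$ that is adjacent to $V(P_i)$ but to none of the $V(P_j)$ with $j\neq i$; otherwise $P_i$ could be deleted from the cover. Choose $x_i\in V(P_i)\cap N(s_i)$. Then the candidate sets $X=\{x_1,\ldots,x_k\}$ and $Y=\{s_1,\ldots,s_k\}$ are pairwise disjoint and automatically satisfy the matching condition $x_is_j\in E(G)\iff i=j$: the forward direction is by the choice of $x_i$, and the reverse direction is immediate from the privacy of the witness $s_j$, which forbids $s_j$ from being adjacent to any vertex of $V(P_i)$ when $j\neq i$.

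Next I would set up the two connected sides of the creature. Because $S$ is a minimal separator, there exists another $S$-full component $B^*$ of $G-S$; this will be the $B$-side. It is connected, it dominates every $s_i$ (since $B^*$ is $S$-full), and as $B^*$ lies in a different component of $G-S$ from $A$, it is anti-complete with every subset of $V(A)$, which will handle anti-completeness of $B^*$ with $X$ and with the $A$-side. For the $A$-side I want a connected $A^*\subseteq V(A)$ that dominates $X$ and is anti-complete with $Y$; the natural candidate is a subgraph obtained from parts of the $P_i$'s together with short induced paths in $A$ that connect them, all taken inside $V(A)\setminus N[\{s_1,\ldots,s_k\}]$ to guarantee anti-completeness with $Y$.

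The main obstacle is exactly this last step: producing such a connected $A^*$. Removing $N[\{s_1,\ldots,s_k\}]$ from $A$ could in principle disconnect the region we need, or strip the chosen $x_i$ of every available neighbor. To overcome this I would first refine each $P_i$ so that $x_i$ sits at one endpoint of $P_i$ and the next vertex along $P_i$ lies outside $N(\{s_1,\ldots,s_k\})$, and then stitch these ``deep'' vertices together by induced paths in the restricted subgraph $G[V(A)\setminus N[\{s_1,\ldots,s_k\}]]$, using that $A$ is connected. The correctness of both the refinement and the stitching is to be justified by a nested minimality / exchange argument: any failure would allow us to reroute or shorten one of the $P_i$'s to produce a strictly smaller induced-path cover of $S$ in $A$, contradicting the minimality of $m$. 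Once $A^*$ is in hand, one checks the four axioms of a $k$-creature against $(A^*,B^*,X,Y)$, completing the contradiction.
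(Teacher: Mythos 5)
The overall plan — proceed by contrapositive, extract a private witness $s_i\in S$ for each path $P_i$ in a minimum dominating collection, and try to assemble a $k$-creature — shares a germ with the paper's argument (both hinge on a minimality-forces-privacy step), but the critical construction of the connected side $A^*$ is where the proposal has a real gap. You must produce a connected $A^*\subseteq V(A)\setminus X$ that dominates $X=\{x_1,\ldots,x_k\}$ and is anti-complete with $Y=\{s_1,\ldots,s_k\}$. You correctly observe that $G[V(A)\setminus N[Y]]$ may be disconnected or may contain no neighbor of some $x_i$, but the proposed fix — a ``nested minimality / exchange argument'' — is never made concrete and does not plausibly exist: minimizing the \emph{number} of induced paths covering $S$ places no constraint whatsoever on how $N[Y]$ cuts the ambient component $A$, so a connectivity failure of $G[V(A)\setminus N[Y]]$ cannot in general be traded for a strictly smaller cover. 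Likewise, the ``refinement'' step (forcing the $P_i$-neighbor of $x_i$ to lie outside $N(Y)$) can simply be impossible: every neighbor of $x_i$ in $A$ might be adjacent to some $s_j$ — the privacy of $s_j$ only bars $s_j$ from touching the \emph{other paths}, not from touching vertices of $A$ off all paths or additional vertices of $P_j$ itself. Finally, even if a local refinement succeeded, the stitching paths between the ``deep'' vertices are induced paths of the unconstrained graph $A$ and nothing stops them from running through $N[Y]$.

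The paper avoids all of this by not starting from a path cover. It takes a \emph{minimally connected} induced subgraph $A'\subseteq A$ dominating $S$, chooses a BFS spanning tree $T$ of $A'$, and lets the $x_i$'s be the leaves $L$ of $T$. Minimality of $A'$ forces each leaf $\ell_i$ to have a private witness $s_i\in S$ adjacent to \emph{no other vertex of $A'$}, which is a strictly stronger privacy than yours (private with respect to all of $A'$, not just the other $P_j$'s). This buys two things simultaneously and for free: $V(A')\setminus L$ is connected, since it is spanned by the tree $T$ minus its leaves; and $V(A')\setminus L$ is anti-complete with $\{s_1,\ldots,s_c\}$, since each $s_i$ meets $A'$ only at $\ell_i$. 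So $(V(A')\setminus L,\,V(K),\,L,\,\{s_1,\ldots,s_c\})$ is already a $c$-creature for any other $S$-full component $K$, giving $c<k$. The desired conclusion then drops out because the $c$ root-to-leaf paths of a BFS tree are shortest, hence induced, paths in $A$ whose union is $A'$. If you want to repair your proof, the cleanest fix is to replace ``minimum induced-path cover of $S$'' by ``minimally connected induced dominating subgraph of $A$'' and use a spanning tree's leaves as $X$; the connectivity and anti-completeness you were fighting for then come built in.
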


\begin{proof}
Let $G$, $S$, and $A$ be as in the statement of this lemma, and let $A'$ be a minimally connected induced subgraph of $A$ such that $S$ is dominated by $A'$. Let $T$ be a breadth first search tree of $A'$ rooted at some vertex $v \in A'$, and let $L = \{\ell_1, \ell_2,\ldots,\ell_c\}$ be the set of leaves of $T$. Since $A'$ is minimal each leaf, $\ell_i \in L$, must have a neighbor $s_i \in S$ such that no other vertex of $A'$ is neighbors with $s_i$, else $A'-s_i$ would still be connected and dominate $S$. Then if $K$ is another $S$-full component different from $A$ the tuple $(V(A')-L, V(K), \{\ell_1, \ell_2,\ldots,\ell_c\}, \{s_1, s_2,\ldots, s_c\})$ forms a $c$-creature. It follows that if $G$ is $k$-creature-free, then T has at most $k-1$ leaves. It then follows that $A'$ is the union of at most $k-1$ induced paths and the result follows.
\end{proof}

In the following lemma we study a process takes three disjoint subgraphs/subsets $S,H,P$ of a $k$-creature-free graph $G$ and a vertex $v \in G$ and finds a small set $X$ such that no vertex of $S-N[X]$ shares a common neighbor with $v$ in $P$ or is neighbors with $v$. Note that in Lemma \ref{path domination helper} $v$ may or may not be in $S$.

\begin{lemma}\label{path domination helper}
Let $G$ be a graph and ($G, S, H, P, v$) be a tuple with the following properties: $S \subset V(G)$, $v \in G$, $H$ and $P$ are induced subgraphs of $G$ such that $H$ is connected, $P$ is an induced path, $S \cup \{v\}$, $V(H)$, $V(P)$ are mutually disjoint, $H$ is anti-complete with $P$, $v$ has no neighbor in $H$, and all vertices in $S-\{v\}$ have a neighbor in $H$. Then if $G$ does not contain a $k$-creature, there is a set, $X \subset S \cup V(H) \cup V(P)$, of size at most $2k-1$ such that $N(S-N[X]) \cap N(v) \cap V(P) = \emptyset$, and no vertex of $S-N[X]$ is neighbors with $v$.
\end{lemma}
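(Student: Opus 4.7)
I would proceed by contradiction: suppose no $X \subseteq S \cup V(H) \cup V(P)$ of size at most $2k-1$ has the stated property, and construct a $k$-creature in $G$. Call a vertex $s \in S$ \emph{bad} if $s \in N(v)$ or $N(s) \cap Q \neq \emptyset$, where $Q := N(v) \cap V(P)$; the conclusion amounts to dominating all bad vertices by a set of size at most $2k-1$ drawn from $S \cup V(H) \cup V(P)$. The natural creature template fits the setup beautifully: take $A := H$ (connected by hypothesis, and with every $x \in S \setminus \{v\}$ adjacent to it) and $B$ a connected subset of $\{v\} \cup V(P)$. The $x_i$'s will be bad vertices in $S \setminus N(v)$ (making them anti-complete with $v \in B$), and the $y_i$'s will be elements of $Q$—each is automatically adjacent to $v \in B$ (giving a neighbor in $B$) and anti-complete with $H$ (since $H$ is anti-complete with $V(P)$). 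With this template, producing a $k$-creature reduces to exhibiting an induced matching of size $k$ in the bipartite graph between bad $S$-vertices (avoiding $N(v)$) and $Q$: the witness would be $(H,\{v,y_1,\dots,y_k\},\{s_1,\dots,s_k\},\{y_1,\dots,y_k\})$, where the $B$-side is a star around $v$ and hence connected.

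The argument is then an iterative greedy construction of this induced matching. I maintain a partial induced matching $M = \{(s_1, y_1),\dots,(s_j,y_j)\}$ with $\{s_i,y_i\}_i \subseteq X$ so far (so each extension contributes $2$ to $|X|$). At each step I pick a still-uncovered bad vertex $s$ and attempt to find $y \in N(s) \cap Q$ that is not already in $M$ and is not adjacent to any previous $s_i$; such a $y$ extends the induced matching. If this succeeds we grow $M$; once $|M| = k$ we obtain a $k$-creature and contradict the hypothesis. Hence the matching terminates with $j \leq k-1$, contributing at most $2(k-1) = 2k-2$ vertices to $X$, and I reserve the last slot for a single ``absorb'' step that handles the remaining stuck cases.

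The main obstacle is this final absorb/swap analysis. When extension fails for some uncovered bad $s$, every $y \in N(s) \cap Q$ is either one of the $y_i$'s or is adjacent to some previously chosen $s_i$. In this blocked case I plan to either (i) swap some $(s_i, y_i)$ in $M$ for $(s_i, y)$, restoring the ability to extend and thereby forcing a contradiction via maximality, or (ii) absorb $s$ by putting a single carefully chosen vertex (either $s$ itself, a shared $H$- or $V(P)$-neighbor, or a distinguished $y_i$'s partner) into $X$. A parallel sub-case addresses bad vertices adjacent to $v$ with no $Q$-neighbor: for such an $s$, if any $q \in Q$ is non-adjacent to $s$ then $(H,\{q\},\{s\},\{v\})$ is already a $1$-creature, giving a contradiction in that sub-case; the only remaining configurations can be bundled into the single absorb slot. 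The technically delicate part is formalizing that exactly one absorb vertex always suffices, which I anticipate will require a careful structural argument ruling out ``cascading'' blocked vertices via repeated induced-matching swaps that would otherwise exceed the $2k-1$ budget.
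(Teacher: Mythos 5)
Your creature template is essentially the paper's (just with the roles of $A$ and $B$ swapped): put the ``uncovered'' $S$-vertices on the side dominated by $H$, the corresponding common-neighbors-with-$v$ in $Q = N(v)\cap V(P)$ on the other side, and use $\{v\}$ and $H$ as the two connected anchors. Two issues, one cosmetic and one substantive. The cosmetic one: as written your witness $(H,\{v,y_1,\dots,y_k\},\{s_1,\dots,s_k\},\{y_1,\dots,y_k\})$ has $B$ and $Y$ sharing the vertices $y_1,\dots,y_k$, which violates the mutual disjointness in the definition of a $k$-creature; you want $B=\{v\}$ and $Y=\{y_1,\dots,y_k\}$, exactly as in the paper's tuple $(\{v\},V(H),\{p_1,\dots,p_k\},\{s_1,\dots,s_k\})$.

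The substantive gap is the ``absorb/swap'' step, and you are right to flag it as the delicate part, because as sketched it does not go through. Suppose your partial induced matching $M=\{(s_1,y_1),\dots,(s_j,y_j)\}$ is maximal in the greedy sense: some bad $s$ not in $N[X]$ (with $X=\{v,s_1,y_1,\dots,s_j,y_j\}$) cannot be matched, meaning every $y\in N(s)\cap Q$ is either some $y_i$ or adjacent to some $s_i$. Since $s\notin N[X]$ it is not adjacent to any $y_i$, so in fact all of $s$'s $Q$-neighbors are adjacent to $\{s_1,\dots,s_j\}$. But that says nothing about $s$ being coverable: there can be many such stuck vertices $s$ whose $Q$-neighborhoods are pairwise disjoint, all blocked by the same $s_i$'s, and no single additional vertex of $S\cup V(H)\cup V(P)$ dominates them all. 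So ``one absorb slot suffices'' is not justified, and the swap sub-argument that would repair this is exactly what is missing.

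The paper's proof avoids the whole maximality/swap analysis by exploiting the linear order of $P$. It numbers the path, labels each candidate $s$ by the smallest-numbered vertex of $Q$ it is adjacent to, and at each step picks the $s_i$ with the \emph{largest} label, pairing it with its \emph{smallest}-numbered $Q$-neighbor $p_i$. A short argument (using only that $p_b$ has already been deleted for $a>b$, and that $s_a$ had the largest label among surviving vertices for $a<b$) shows the resulting pairs form an induced matching for free, with no swaps or absorb step. If the process reaches step $k$ you get the $k$-creature $(\{v\},V(H),\{p_1,\dots,p_k\},\{s_1,\dots,s_k\})$; otherwise it halts within $k$ steps, the first step adds only $v$, and each later step adds two vertices, giving $|X|\le 2k-1$. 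You should replace the maximal-matching-plus-absorb scheme with this ordering argument; without it the bound $2k-1$ is not established.
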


\begin{proof}
Let $G$, $S$, $H$, $P$, and $v$ be as in the statement of this lemma. Number the vertices of $P$ 1 through $|V(P)|$ so that the vertex numbered $i$ is neighbors with the vertices numbers $i-1$ and $i+1$. We now consider the following process to build the set desired set $X$ such that $N(S-N[X]) \cap N(v) \cap V(P) = \emptyset$. 

We do the following for the first step of the process. Let $X_1 = \{v\}$, and let $S_1 = \{s : s\in S-N[X_1]$ and $N(s) \cap N(v) \cap V(P) \neq \emptyset\}$ (i.e. $S_1$ is the set of vertices of $S-N[v]$ that share a neighbor with $v$ in $P$). Label the vertices of $S_1$ by the lowest numbered vertex it is neighbors with in $V(P) \cap N(v)$. Let $s_1$ be a highest labeled vertex in $S_1$, and let $p_1$ be $s_1$'s lowest numbered neighbor in $N(v) \cap V(P)$. This completes the first step.


For the i$^{th}$ step, $i > 1$, we do the following. Let $X_i = X_{i-1} \cup \{s_{i-1}, p_{i-1}\}$, and let $S_i = S - N[X_i]$ and label the vertices of $S_i$ by the lowest vertex it sees in $V(P) \cap N(v)$ (the vertices of $S_i$ inherit their labels from $S_{i-1}$). Let $s_i$ be a highest labeled vertex in $S_i$ and let $p_i$ be $s_i$'s lowest neighbor in $N(v) \cap V(P)$. This completes the $i^{th}$ step. Note by how we selected $v$, $s_1$, $p_1$, $s_2$, $p_2$, $\ldots$ $s_i$, $p_i$ that for $1 \leq a,b \leq i$, $s_a$ cannot be neighbors with $p_b$ if $a > b$ since $p_b$ would be in $X_a$ and therefore $s_a$ would not be in $S_a$, and $s_a$ cannot be neighbors with $p_b$ if $a < b$ since that would contradict either $p_a$ being $s_a$'s lowest numbered neighbor in $N(v) \cap P$ or $s_a$ being a highest labeled vertex in $S_a$. Hence, we then have that among these vertices $s_j$ is only neighbors with $p_j$ for $1 \leq j \leq i$, and $v$ is only neighbors with $p_j$ for $1 \leq j \leq i$.


We continue this process until we reach an $S_j$ that is empty. We claim this process cannot go past the $k^{th}$ step if $G$ does not contain a $k$-creature. Assume for a contradiction, that this process completes the $k^{th}$ step. We claim the tuple $(\{v\}, V(H), \{p_1, p_2,\ldots, p_k\}, \{s_1, s_2,\ldots, s_k\})$,  forms a $k$-creature.
As noted before, by how we selected $v$, $s_1$, $p_1$, $s_2$, $p_2$, $\ldots$ $s_k$, $p_k$ we have that among these vertices $s_j$ is neighbors with $p_j$ and not with $p_r$ for $r \neq j$, and $v$ is only neighbors with the $p_j$'s. We also have that by assumption $v$ has no neighbors in $H$, but all of the vertices $s_1$, $s_2, \ldots, s_k$ have neighbors in $H$. Lastly, we can see that no vertex of $p_1, p_2, \ldots, p_k$ has a neighbor in $H$ by the assumption that $P$ is anti-complete with $H$. It follows that this tuple is a $k$-creature.

Set $X$ to be $X_j$, where $j$ is the first iteration where $S_j$ is empty. Since $S_j$ is empty, it follows $N(S-N[X]) \cap N(v) \cap V(P) = \emptyset$. We also have that no vertex of $S-N[X]$ is neighbors with $v$ since $v \in X$ and $|X| \leq 2k-1$ since $j \leq k$ and since the first step adds a single vertex and each iteration after that only adds two vertices.
\end{proof}

Note that in Lemma \ref{path domination helper 2}, $v$ may or may not be in $S$.

\begin{lemma}\label{path domination helper 2}
Let $G$ be a $k$-creature free graph and let ($G, S, H, P, v)$ be a tuple with the follow properties: $S \subset V(G)$, $v \in G$, $H$ and $P$ are connected induced subgraphs of $G$, $S \cup \{v\}$, $V(H)$, and $V(P)$ are vertex disjoint, $H$ is anti-complete with $P$, $v$ has no neighbor in $H$ or $S$, all vertices in $S-\{v\}$ have a neighbor in $H$ and in $P$, and for all $x \in S-\{v\}$ it holds that $N(x) \cap N(v) \cap V(P) = \emptyset$. Then there is a set of at most $k-1$ connected components of $P-N(v)$ such that every vertex of $S-\{v\}$ has a neighbor in at least one of these connected components.
\end{lemma}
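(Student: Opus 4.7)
The plan is to prove the contrapositive: assume that every collection of at most $k-1$ components of $P - N(v)$ misses some vertex of $S - \{v\}$, and exhibit a $k$-creature in $G$. To extract good witnesses, I would take a minimum cover $\mathcal{C}$ of $S - \{v\}$ by components of $P - N(v)$. By assumption $|\mathcal{C}| \geq k$, and minimality of $\mathcal{C}$ forces, for each $C \in \mathcal{C}$, a ``private'' vertex $s_C \in S - \{v\}$ whose neighbors inside $\bigcup_{C' \in \mathcal{C}} V(C')$ all lie in $C$ (otherwise $\mathcal{C} \setminus \{C\}$ would still be a cover). Choosing any $k$ components $D_1,\ldots,D_k \in \mathcal{C}$ and their private witnesses $s_1,\ldots,s_k$, we have that $s_i$ has a neighbor in $D_i$ and no neighbor in $D_j$ for any $j \neq i$.

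For the $k$-creature I would put $A = V(H)$ and $X = \{s_1,\ldots,s_k\}$; since each $s_i$ has a neighbor in $H$, $A$ dominates $X$ automatically. The harder side is $B$: it must be connected, anti-complete with both $V(H)$ and every $s_i$, and still reach into each $D_i$ far enough to dominate some chosen neighbor $y_i$ of $s_i$ in $D_i$. Note that $N(v) \cap V(P) \neq \emptyset$ whenever $k \geq 2$ (otherwise $P - N(v) = P$ is a single component, making the statement trivial), so each $D_i$ contains a ``boundary'' vertex $d_i$ adjacent in $P$ to some $p_i \in N(v) \cap V(P)$. The key idea is to pick $y_i$ to be a neighbor of $s_i$ in $D_i$ whose distance inside $D_i$ from $d_i$ is minimum, and to let $R_i$ be a shortest $d_i$-to-$y_i$ path in $D_i$ with $y_i$ itself removed. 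Minimality of the distance then guarantees that no internal vertex of $R_i$ is a neighbor of $s_i$. I would then set
$$
B \;=\; \{v\} \;\cup\; (N(v) \cap V(P)) \;\cup\; \bigcup_{i=1}^{k} V(R_i), \qquad Y \;=\; \{y_1,\ldots,y_k\}.
$$

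Verifying that $(A,B,X,Y)$ is a $k$-creature is then a direct checklist: $B$ is connected because $v$ is adjacent to every vertex of $N(v) \cap V(P)$ and each non-empty $R_i$ contains $d_i$, itself adjacent to $p_i \in N(v) \cap V(P)$; $A$ and $B$ are anti-complete since $V(H)$ is anti-complete with $\{v\} \cup V(P)$; the four sets are pairwise disjoint by the stated disjointness of $S \cup \{v\}$, $V(H)$, $V(P)$ together with the deliberate exclusion of $y_i$ from $R_i$; each $s_i$ is anti-complete with $B$ because $v$ has no neighbor in $S$, $N(s_i) \cap N(v) \cap V(P) = \emptyset$ by hypothesis, $s_i$ has no neighbor in $D_j$ for $j \neq i$ by privateness, and $s_i$ has no neighbor in $R_i$ by the shortest-path choice of $y_i$; each $y_i$ has a neighbor in $B$ (its predecessor on $R_i$, or $p_i$ itself if $y_i = d_i$) and is anti-complete with $V(H) = A$; finally $s_i y_i$ is an edge by construction while $s_i y_j$ for $j \neq i$ is a non-edge, again by privateness of $s_i$.

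The main obstacle is engineering $B$ so that it is simultaneously connected, anti-complete with every $s_i$, and dominates the $y_i$'s; the shortest-path trick for positioning $y_i$ near the boundary $d_i$ is exactly what makes anti-completeness of $B$ with $s_i$ work, and everything else follows essentially mechanically from the hypotheses of the lemma. The degenerate cases $k = 1$ and $N(v) \cap V(P) = \emptyset$ reduce immediately to a single-component cover and would be handled as a short preliminary observation.
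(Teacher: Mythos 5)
Your proof is correct and follows essentially the same route as the paper's: argue the contrapositive, extract $k$ components of $P-N(v)$ together with privately-attached vertices of $S-\{v\}$, and assemble a $k$-creature with $V(H)$ on one side and, on the other, $\{v\}$ plus a connected bundle descending from $N(v)\cap V(P)$ into each chosen component along a shortest path stopping just short of the witness neighbor. The differences are cosmetic: you absorb all of $N(v)\cap V(P)$ into $B$ rather than just one boundary vertex $c_i$ per component, and you spell out the minimum-cover/private-witness step that the paper compresses into ``it follows then.'' One small inaccuracy worth noting: your parenthetical claim that $k=1$ ``reduces immediately to a single-component cover'' does not actually close that case, since a single component cannot witness a cover of size zero; this corner case is genuinely missed (the paper's own proof has the identical gap when $N(v)\cap V(P)=\emptyset$), but it never arises in the lemma's applications.
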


\begin{proof}
Let $G, S, H, P$, and $v$ be as in the statement of this lemma. Assume for a contradiction that there does not exists a set of at most $k-1$ connected components of $P-N(v)$ such that every vertex of $S-\{v\}$ has a neighbor in at least one of these connected components. It follows then there is a set of $k$ connected components of $P-N(v)$, say $C_1$, $C_2, \ldots, C_k$, such that there exists $s_1$, $s_2, \ldots, s_k$ in $S$ where $N(s_i) \cap V(C_j) \neq \emptyset$ if and only if $i = j$. Since $P$ is connected, for every $C_i$ there exists a vertex $c_i \in N(v) \cap V(P)$ such that $c_i \in N(C_i)$ (the $c_i$'s may not be unique). Now, for each $s_i$, let $s_i'$ be a vertex in $C_i$ that $s_i$ is neighbors with such that there exists an induced path $P_i$, with internal vertices in $C_i$, from $s_i'$ to $c_i$ such that $s_i'$ is the only neighbor of $s_i$ on the path $P_i$. Then the tuple $(\{v\} \bigcup V(P_i-s_i'), H, \{s_1', s_2' \ldots, s_k'\}, \{s_1, s_2 \ldots, s_k\})$ is a $k$-creature,
contradicting the assumption $G$ is $k$-creature-free.
\end{proof}

The following lemma produces a structure similar to a $k$-skinny-ladder in graphs that are $k$-creature-free and contain minimal separators that cannot be dominated by few vertices. This structure will be the main object of study in Lemma \ref{domination or creature/ladder}.

\begin{lemma}\label{path domination of nice set}
Let $S$ be a minimal separator of a graph $G$ such that $S$ cannot be dominated by $4k^5$ vertices. Then if $G$ is $k$-creature there exists there exists a subset $S'$ of $S$ of size $k$ such that there exists two paths, $P_1$ and $P_2$, in two different components of $G-S$ that dominate the vertices of $S'$, and no vertex of $P_1$ or $P_2$ has more than one neighbor in $S'$.
\end{lemma}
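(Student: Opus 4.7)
My plan is a two-stage argument: first I use Lemma~\ref{path domination} with a double pigeonhole to isolate a pair of paths dominating a large subset of $S$, then I greedily extract a skinny $k$-subset.

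For Stage 1, since $S$ is a minimal separator, $G-S$ contains at least two $S$-full components; fix two of them and call them $A$ and $B$. Lemma~\ref{path domination} produces fewer than $k$ induced paths $Q^A_1,\ldots,Q^A_{k-1}\subseteq A$ whose union dominates $S$, and similarly fewer than $k$ induced paths $Q^B_1,\ldots,Q^B_{k-1}\subseteq B$. For each $s \in S$ assign indices $(\alpha(s),\beta(s)) \in \{1,\ldots,k-1\}^2$ with $s$ dominated by both $Q^A_{\alpha(s)}$ and $Q^B_{\beta(s)}$. Since $S$ dominates itself but by hypothesis cannot be dominated by $4k^5$ vertices we have $|S|>4k^5$, so pigeonhole over the $(k-1)^2$ labels gives a subset $S_1 \subseteq S$ with $|S_1| > 4k^5/(k-1)^2 > 4k^3$ that is dominated by a common pair of paths $P^A \subseteq A$ and $P^B \subseteq B$.

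For Stage 2, I extract $S'$ of size $k$ greedily. Initialize a candidate pool $T := S_1$. At each of $k$ rounds, pick any $s \in T$, add it to $S'$, and then delete from $T$ every candidate that shares a $P^A$- or $P^B$-neighbor with $s$ (and $s$ itself). To bound the per-round shrinkage, I apply Lemma~\ref{path domination helper} twice: once with $v = s$, $P = P^A$, and a connected subgraph $H \subseteq B$ chosen so that $s$ has no neighbor in $H$ but every other candidate in $T$ does, and once symmetrically with $P = P^B$ and $H \subseteq A$. Each application produces a set $X$ of size at most $2k-1$ such that every candidate in $T \setminus N[X]$ is free of $P$-neighbor conflicts with $s$; accounting for $|N[X] \cap T|$ keeps the shrinkage to $O(k^2)$ per round, so with $|S_1|>4k^3$ all $k$ rounds complete. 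By construction each chosen $s$ has a neighbor on both $P^A$ and $P^B$ (because $P^A$ and $P^B$ dominate the pool throughout) and no path vertex has more than one $S'$-neighbor (by the pruning rule).

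The main obstacle I anticipate is building the auxiliary subgraph $H$ so that all the hypotheses of Lemma~\ref{path domination helper} hold at each round. Placing $H \subseteq B$ automatically yields anti-completeness with $P^A$ and disjointness from $S \cup V(P^A)$, since $A$ and $B$ lie in distinct components of $G - S$ and are therefore mutually anti-complete. However, $B \setminus N[s]$ may be disconnected, or may strand candidates whose $B$-neighbors all lie in $N(s)$. Repairing this requires rerouting through $B$ along alternate paths, or invoking Lemma~\ref{path domination helper 2} to partition candidates by the connected components of $P^A \setminus N(s)$ and process each part separately, each step absorbing an additional polynomial factor into the $4k^5$ bound. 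Pinning down the precise per-round deletion rate, and in particular handling the case in which some vertex of $P^A$ or $P^B$ has too many neighbors in the pool (which should either be used to reduce the pool cheaply or to force a $k$-creature), is what fixes the constant $4k^5$ in the statement.
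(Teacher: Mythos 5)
Your high-level two-stage plan mirrors the paper's, and using Lemma~\ref{path domination} plus pigeonhole to isolate a single pair of paths is exactly right in spirit. But there is a genuine gap that the plan does not recover from: you pigeonhole on \emph{cardinality} while the argument needs to propagate the \emph{non-dominatability} property. In Stage~1 you obtain $S_1$ with $|S_1| > 4k^3$, but a set of any size can still be dominated by a single vertex, so this bound carries no strength into Stage~2. The paper instead argues that if $(N(L_i)\cap N(R_j)\cap S)$ could be dominated by $4k^3$ vertices for every one of the fewer than $k^2$ pairs $(L_i,R_j)$, then $S$ itself could be dominated by fewer than $4k^5$ vertices, a contradiction; so some pair's common piece of $S$ is still hard to dominate. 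The same confusion recurs in Stage~2: you claim that because each application of Lemma~\ref{path domination helper} produces $X$ of size $O(k)$, the ``shrinkage'' of the candidate pool is $O(k^2)$ per round. But you must delete all of $N[X]\cap T$ from the pool, and $N[X]$ can swallow the entire pool even if $|X|$ is tiny. The correct bookkeeping, which the paper uses, is to accumulate the sets $X$ into a bank $Z_i$ of size at most $4k^2 i \le 4k^3$, and to observe that $S \setminus N[Z_i]$ is nonempty precisely because $S$ cannot be dominated by $4k^3$ vertices --- the invariant is about dominating sets, never about cardinality.

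Separately, your construction of $H$ is where the paper's proof is cleverest and where your sketch honestly stalls. You place $H$ in the opposite component $B$ and need $H$ connected, disjoint from $N[s]$, yet adjacent to every remaining candidate; you correctly foresee that $B\setminus N[s]$ may be disconnected and may strand candidates. The paper sidesteps all of this by numbering the vertices of $L$ along the path, labeling each candidate by its lowest-numbered $L$-neighbor, choosing $s$ to be a \emph{highest-labeled} candidate, and taking $H$ to be the subpath of $L$ strictly before $s$'s lowest $L$-neighbor $\ell$. Then $s$ automatically has no neighbor in $H$, every other candidate (after removing $N(\ell)$) automatically has a neighbor in $H$, and $H$ is an induced path, so all hypotheses of Lemma~\ref{path domination helper} hold with no repair step. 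Without this (or an equivalent) extremal choice, the applications of Lemma~\ref{path domination helper} and \ref{path domination helper 2} in your Stage~2 cannot be justified, and the proof does not close.
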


\begin{proof}
Assume that $G$ is a $k$-creature free graph, and let $S'$ be a minimal separator of $G$ that cannot be dominated by $4k^5$ vertices of $G$, and let $L'$ and $R'$ be two different $S'$-full components of $G$. It follows from Lemma \ref{path domination} that there is a set of less than $k$ induced paths in $L'$ that together dominated $S'$ and there is a set of less than $k$ induced paths in $R'$ that together dominate $S'$. It follows there exists two induced paths $L \subset L'$ and $R \subset R'$ such that $(N(L) \cap S'') \cap (N(R)) \cap S')$ cannot be dominated by $4k^3$ vertices of $G$. Let $S = (N(L) \cap S') \cap (N(R) \cap S')$. Fix a numbering the vertices of $L$ 1 through $|V(L)|$ so that the vertex numbered $i$ is neighbors with the vertices numbers $i-1$ and $i+1$.

Assume that we have an independent set of vertices $S_{i-1}$ of size $i-1$, $i \leq k$, and a vertex set $Z_{i-1}$ of size at most $4k^2(i-1)$, with the properties that no vertex $S-N[Z_{i-1}]$ is neighbors with a vertex in $S_{i-1}$, and any vertex in $L$ or $R$ that is neighbors with some vertex in $S_{i-1}$ has no other neighbors in $S_{i-1}$ or in $S-N[Z_{i-1}]$. We will show how to produce a set $S_i$ of size $i$ and $Z_i$ of size at most $4k^2i$ with the corresponding properties, assuming $i \leq k$. Note that the empty set satisfies the condition of $S_0$.

Let $S' = S-N[Z_{i-1}]$, and label the vertices of $S'$ according to the lowest numbered neighbor it has in $L$. Let $s$ be a highest labeled vertex in $S'$, since $i \leq k$ and since $S$ cannot be dominated by $4k^3$ vertices such an $s$ must exists. Let $\ell$ be the lowest numbered neighbor $s$ has in $L$ and assume the number of $\ell$ is $p$, and let $H$ denote the subpath of $L$ that is made up of the vertices labeled $1$ through $p-1$. We can then apply Lemma \ref{path domination helper} using ($G, S'-N(\ell), H, R, s$) to get a set $X'$ of size at most $2k-1$ such that $(S'-N[X \cup \{\ell\}]) \cap N(s) \cap V(R) = \emptyset$ and no vertex of $S'-N[X]$ is neighbors with $s$. Set $X = X' \cup \{\ell\}$.

We now wish to find a set $Y$ of size less than $2k^2$ so that no vertex of $S'-(N[X]\cup N[Y])$ shares a neighbor with $s$ in either $L$ or $R$. We first use Lemma \ref{path domination helper 2} on ($G, S'-N[X], H, R, s$) to get connected components $C_1, C_2, \ldots, C_c$, $c < k$, of $R-N(s$) such that all vertices of $S'-N[X]$ have a neighbor in at least one $C_i$. Then for each $C_i$ we apply Lemma \ref{path domination helper} on ($G$, $(S'-N[X]) \cap N(C_i)$, $C_i$, $L$, $s$) to get a set $Y_i$ of size less than $2k$ such that no vertex of $[(S'-N[X]) \cap V(C_i)]-N[Y_i]$ shares a neighbor with $s$ in $L$ (or $R$). It follows that if we set $Y$ = $\bigcup Y_i$ that no vertex of $(S'-N[X])-N[Y]$ shares a neighbor with $s$ in $L$ (or $R$). We may then set $S_i$ = $S_{i-1} \cup \{s\}$ and $Z_{i} = Z_{i-1} \cup X \cup Y$. Since $|X| \leq 2k$ and $|Y| \leq 2k^2$ we have that $|Z_i| \leq |Z_{i-1}| + 2k + 2k^2 \leq 4k^2i$.

The statement of the lemma now follows from the fact that $S$ cannot be dominated by $4k^3$ vertices so this process may go on until we attain the set $S_k$, which is the desired set, along with the paths $L$ and $R$.

\end{proof}

The next two Lemmas will be useful in the proof of Lemma \ref{domination or creature/ladder}.

\begin{lemma}\label{almost skinny implies skinny}
Let $G$ be a graph that contains a $k^2$-almost-skinny-ladder as an induced subgraph. Then $G$ contains a $k$-skinny-ladder as an induced minor.
\end{lemma}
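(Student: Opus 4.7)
The plan is to carve a $k$-skinny-ladder out of the given $k^2$-almost-skinny-ladder $H \subseteq G$, with vertex partition $L \cup S \cup R$ and $|S| = k^2$, by deleting unchosen middle vertices and contracting consecutive blocks of $L$ and $R$. First I would record that the non-crossing condition linearly orders $S$: for each $s \in S$ the neighbor set $N(s) \cap L$ is contained in a subpath $I^L(s)$ of $L$ (from $s$'s leftmost to its rightmost $L$-neighbor), and the non-crossing condition forces the intervals $I^L(s)$ to be pairwise disjoint along $L$. Thus there is a natural left-to-right ordering $s_1, \ldots, s_{k^2}$ of $S$ coming from $L$; the identical argument applied to $R$ produces a second linear order on $S$.

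Next I would align the two orders using the Erd\H{o}s--Szekeres theorem. The $R$-order, viewed relative to the $L$-order, is a permutation of $\{1,\ldots,k^2\}$; since $k^2 \geq (k-1)^2 + 1$, this permutation contains a monotone subsequence of length $k$, say $s_{i_1},\ldots, s_{i_k}$. After possibly reversing the induced path $R$ (which leaves it an induced path), the subsequence may be assumed increasing in both orders, so the intervals satisfy $I^L(s_{i_1}) < \cdots < I^L(s_{i_k})$ along $L$ and $I^R(s_{i_1}) < \cdots < I^R(s_{i_k})$ along $R$. I then delete the $k^2-k$ unchosen middle vertices.

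For the induced minor itself, I would trim $L$ to the subpath $L'$ running from the left end of $I^L(s_{i_1})$ to the right end of $I^L(s_{i_k})$, then partition $L'$ into $k$ consecutive non-empty subpaths $B^L_1, \ldots, B^L_k$ with $I^L(s_{i_j}) \subseteq B^L_j$ for every $j$, and contract each $B^L_j$ to a single vertex $\ell_j$. Perform the symmetric operation on $R$ to obtain $r_1, \ldots, r_k$. Because consecutive sub-blocks of an induced path remain consecutive and acquire no new long-range adjacencies after contraction, $\ell_1 \cdots \ell_k$ and $r_1 \cdots r_k$ become induced paths. The critical verification is that $s_{i_j}$ is adjacent to exactly $\ell_j$ and $r_j$ among the contracted vertices: every $L$-neighbor of $s_{i_j}$ lies in $I^L(s_{i_j}) \subseteq B^L_j$, so $s_{i_j}$ has no edge to any $\ell_r$ with $r \neq j$, and symmetrically on the $R$ side. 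Combined with $L$ being anti-complete with $R$ and $S$ being independent in the almost-skinny-ladder (implicit in the definition, mirroring the $k$-skinny-ladder), the resulting graph is exactly a $k$-skinny-ladder.

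I expect the Erd\H{o}s--Szekeres extraction to be the only step that is not purely mechanical; the rest is bookkeeping about contractions in induced paths, leaning on the non-crossing condition to keep distinct middle vertices' neighborhoods strictly separated inside distinct contraction blocks.
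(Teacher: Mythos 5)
Your argument follows the same route as the paper's: linearly order $S$ by the disjoint $L$-intervals, apply Erd\H{o}s--Szekeres to the induced $R$-order to extract a length-$k$ subsequence monotone in both orders (reversing $R$ if needed), and then contract consecutive blocks of $L$ and $R$ around the surviving intervals. The only difference is cosmetic: you spell out the ``obvious edge contractions'' by explicitly partitioning the trimmed paths into $k$ consecutive blocks, whereas the paper leaves that bookkeeping implicit; you also explicitly flag the tacit assumption that $S$ is independent, which the paper's definition of an almost-skinny-ladder omits but its proof equally relies on.
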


\begin{proof}
Let $G$ be a graph that has a $k^2$-almost-skinny-ladder, $H$, as an induced subgraph. $V(H) = L \cup S \cup R$ where $L, S, R$  each have the same meaning as in the definition of an almost-skinny-ladder given in Section \ref{sec:prelim}. Number the vertices of $L$ 1 through $|V(L)|$ so that the vertex numbered $i$ is neighbors with the vertices numbered $i-1$ and $i+1$, and similarly, number the vertices of $R$ 1 through $|V(R)|$ so that the vertex numbered $i$ is neighbors with the vertices numbered $i-1$ and $i+1$.

Next we label each vertex in $S$ with a subscript 1 through $|S|$ so that for all $s_i, s_j \in S$ $i > j$ if and only if all of $s_i$'s neighbors in $L$ have a higher number than all of $s_j$'s neighbors in $L$ (by the definition of an almost-skinny-ladder such a numbering exists). Let $n(s_i)$ be the number of the highest numbered neighbor $s_i$ has is $R$. We now apply the Erodos-Szekers Theorem to the sequence $n(s_1), n(s_2)\ldots, n(s_{k^2})$ to get an increasing or decreasing subsequence of length at least $k$ and set $S^*$ to be the subset of $S$ that corresponds to the subsequence obtained from the Erodos-Szekers Theorem. If the Erodos-Szekers Theorem returned a decreasing subsequence then reverse the numbering of $R$, else leave it unchanged. Then for every $s_i, s_j \in S^*$, if $i > j$ then all of $s_i$'s neighbors in $L$ have a higher number than all of $s_j$'s neighbors in $L$ and  all of $s_i$'s neighbors in $R$ have a higher number than all of $s_j$'s neighbors in $R$. We can now apply the obvious edge contractions to $L$ and $R$ to form a $k$-skinny-ladder.
\end{proof}

\begin{lemma}\label{many dominating sets}
Let $G$ be a graph, let $a,b \in G$ be two non adjacent vertices of $G$, and let $P_1, P_2,\ldots,P_k$ be $K$ paths that are anti-complete with respect to one another and for every $P_i$, both $a$ and $b$ have a neighbor in $P_i$. Furthermore assume that for every $P_i$ that no vertex of $P_i$ is neighbors with both $a$ and $b$. Then $G$ contains a $k$-theta.
\end{lemma}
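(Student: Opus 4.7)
The plan is to build the $k$-theta directly by carving out, from each given path $P_i$, a suitable subpath $Q_i$ whose endpoints are a neighbor of $a$ and a neighbor of $b$, and then showing that $\{a,b\}\cup\bigcup_i V(Q_i)$ induces a $k$-theta. Since all the side-conditions of a $k$-theta (anti-completeness of the paths, length at least $4$, endpoints $a$ and $b$) are local to each $P_i$ or follow from the anti-completeness of the $P_i$'s, the entire argument reduces to choosing $Q_i$ correctly inside each $P_i$.

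For a fixed $i$, let $N_i^a = N(a)\cap V(P_i)$ and $N_i^b = N(b)\cap V(P_i)$. Both sets are nonempty by hypothesis, and they are disjoint because no vertex of $P_i$ is adjacent to both $a$ and $b$. Linearly order the vertices of $P_i$ and read off, in that order, those that lie in $N_i^a \cup N_i^b$. Since both $N_i^a$ and $N_i^b$ appear in this finite sequence, there exist two consecutive entries belonging to different sets; call them $x_i\in N_i^a$ and $y_i\in N_i^b$, and let $Q_i$ be the subpath of $P_i$ joining $x_i$ to $y_i$. By construction, $x_i$ is the only vertex of $Q_i$ in $N(a)$ and $y_i$ is the only vertex of $Q_i$ in $N(b)$.

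Now I verify that $\{a\}\cup V(Q_i)\cup\{b\}$ induces a path of length at least $4$ from $a$ to $b$. As a subpath of an induced path, $Q_i$ is itself induced; attaching $a$ (respectively $b$), which has $x_i$ (respectively $y_i$) as its unique neighbor on $Q_i$, preserves inducedness, and since $a$ and $b$ are non-adjacent by hypothesis, the resulting path is induced with endpoints $a,b$. The vertices $x_i$ and $y_i$ are distinct since $N_i^a\cap N_i^b=\emptyset$, so the path has at least the four distinct vertices $a, x_i, y_i, b$, meeting the length requirement.

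Finally, for $i\neq j$, all internal vertices of the new paths are contained in $V(P_i)$ and $V(P_j)$ respectively, which are anti-complete by assumption, so the new paths meet only at $a$ and $b$, exactly as required for a $k$-theta. The lone substantive step is the pigeonhole-style choice of consecutive $x_i\in N_i^a$ and $y_i\in N_i^b$ on $P_i$; everything else is direct verification from the hypotheses.
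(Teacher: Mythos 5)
Your proposal is correct and follows essentially the same route as the paper: for each $P_i$ you isolate a minimal subpath whose two endpoints are respectively a neighbor of $a$ and a neighbor of $b$ with no internal vertex adjacent to either, and then glue $a$ and $b$ on and invoke the anti-completeness of the $P_i$'s. The paper's proof asserts the existence of such a subpath $P_i^*$ in a single sentence; your pigeonhole argument on consecutive entries of $N_i^a\cup N_i^b$ along $P_i$ is just a more explicit justification of that same step.
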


\begin{proof}
Let $G$, $a,b$, $P_1, P_2,\ldots,P_k$ be as in the statement of the lemma. For each $P_i$ we can then, by assumption, find a subpath of $P_i$, call it $P_i^*$, such that $P_i^*$ has endpoints $a_i, b_i$ where $a_i$ is neighbors with $a$, $b_i$ is neighbors with $b$, no internal vertex is neighbors with $a$ or $b$. It follow again by assumption that each $P_i^*$ has length at least 2 and that together the $P_i^*$'s along with $a$ and $b$ make a $k$-theta.
\end{proof}

The following lemma essentially takes a $k$-creature-free graph $G$ that has a minimal separator that cannot be dominated by few vertices, obtains the structure given by Lemma \ref{path domination of nice set} and cleans it up to produce $k$-skinny-ladder as an induced minor of $G$. 

\begin{lemma}\label{domination or creature/ladder} 
Let $S$ be a minimal separator of a graph $G$ such that $S$ cannot be dominated by $4[(8k^2)^{k+1}]^5$ vertices. If $G$ is $k$-creature-free, then $G$ contains a $k$-ladder as an induced minor.
\end{lemma}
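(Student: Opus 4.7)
The plan is to apply Lemma~\ref{path domination of nice set} with parameter $K := (8k^2)^{k+1}$; since $S$ cannot be dominated by $4K^5$ vertices this yields a subset $S^*\subseteq S$ with $|S^*|=K$ and induced paths $L, R$ in two distinct $S$-full components of $G-S$ that both dominate $S^*$, with no vertex of $L\cup R$ having more than one neighbor in $S^*$. By Lemma~\ref{almost skinny implies skinny}, it will then suffice to exhibit a $k^2$-almost-skinny-ladder as an induced subgraph of $G$: this produces a $k$-skinny-ladder (in particular a $k$-ladder) as an induced minor.

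For each $s \in S^*$, let $I_L(s)$ denote the minimal sub-path of $L$ containing every neighbor of $s$ in $L$, and similarly let $I_R(s) \subseteq R$. A direct comparison with the definition in Section~\ref{sec:prelim} shows that the non-bracketing condition for a $k^2$-almost-skinny-ladder is exactly the requirement that the chosen rungs have pairwise vertex-disjoint $L$-intervals and pairwise vertex-disjoint $R$-intervals. The task is therefore to extract a subset $S^{**} \subseteq S^*$ of size $k^2$ whose $L$-intervals and $R$-intervals are each pairwise disjoint families.

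Because interval graphs are perfect, for any collection of $N$ intervals on a path one has $\alpha \cdot \omega \geq N$. Applying this first to $\{I_L(s) : s \in S^*\}$, either (i) no vertex of $L$ lies in more than $8k^2$ of these intervals, yielding a disjoint sub-collection of size $K/(8k^2) = (8k^2)^k$; or (ii) some $v \in L$ lies in at least $8k^2 + 1$ of them (the bad case). In case (i), applying the same dichotomy to the $R$-intervals of this surviving subfamily either produces a further subfamily of size $(8k^2)^{k-1} \geq k^2$ (for $k \geq 2$; the $k \leq 1$ case is trivial) with pairwise disjoint intervals on both rails, giving the desired $k^2$-almost-skinny-ladder with rails $L$ and $R$ directly; or it runs into a symmetric bad case for $R$.

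The chief obstacle is handling the bad case. When some $v \in L$ lies in at least $8k^2+1$ of the $L$-intervals, all but at most one of the offending $s$'s have neighbors strictly on both sides of $v$, so splitting $L$ at $v$ into the anti-complete sub-paths $L_\text{left}$ and $L_\text{right}$ produces a new pair of candidate rails for these $s$'s, and the extraction is reiterated. I intend to prove that this recursive splitting has depth at most $k$: if a chain of $k$ nested bad cases occurred at split points $v_1, \ldots, v_k$, then the surviving rung set would contain $s$'s whose $L$-neighborhoods bracket every $v_j$, and combining this deeply nested bracketing structure with the distinct $R$-neighbors of these $s$'s (via an application of Lemma~\ref{path domination helper}) would yield a $k$-creature in $G$, contradicting the hypothesis. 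The initial bound $K = (8k^2)^{k+1}$ is tuned so that losing a factor of $8k^2$ at each of the at most $k+1$ recursion levels still leaves $\geq k^2$ vertices in the final rung set. Making this creature construction precise in the deeply nested regime is the main technical difficulty.
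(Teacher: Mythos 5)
Your approach is in the same spirit as the paper's: both start from Lemma~\ref{path domination of nice set}, reformulate the almost-skinny-ladder condition as pairwise disjointness of $L$-intervals and $R$-intervals, and then extract a large disjoint subfamily by alternating between a ``good case'' (low overlap, take an independent set) and a ``bad case'' (high overlap at some point, shrink the rung set and recurse), with a recursion-depth bound enforced by a forbidden structure. Your interval-graph $\alpha\cdot\omega\geq N$ argument is an acceptable substitute for the paper's directed auxiliary graph (the edge relation there is exactly interval overlap once one observes that no vertex of $L\cup R$ has two neighbors in $S^*$, so there are no ties). However, there are two concrete problems.

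First, your bad-case handling diverges from the paper's in a way that matters for the depth bound, and the part you flag as ``the main technical difficulty'' is precisely the part that is missing. You split $L$ at $v$ into $L_{\text{left}}$ and $L_{\text{right}}$ and take these as the new rails, \emph{discarding} $R$; the paper instead cuts off a subpath $P_i$ of $L$ below $\ell_j$, keeps $R$ intact, and keeps the upper portion $L_{i+1}$ as the new $L$-rail. The paper's bookkeeping is designed so that the cut-off pieces $P_1,\dots,P_k$ are pairwise anti-complete and each dominates the final rung set $S_{k+1}$; picking two rungs $a,b\in S_{k+1}$ then immediately feeds into Lemma~\ref{many dominating sets} to produce a $k$-theta. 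Your version can in principle be made to work --- after $k$ splits the surviving rungs have neighbors in $k+1$ pairwise anti-complete subpaths of $L$, and two surviving rungs together with those subpaths would again give a theta --- but this has to be argued, and the argument is \emph{not} an application of Lemma~\ref{path domination helper}, which you cite. Lemma~\ref{path domination helper} produces a small hitting set, not a creature from nested brackets; the lemma that closes the loop here is Lemma~\ref{many dominating sets}. As written, the proposal does not prove the depth bound.

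Second, a minor accounting issue: you say losing a factor of $8k^2$ at each of at most $k+1$ levels leaves $\geq k^2$ rungs, but $(8k^2)^{k+1}/(8k^2)^{k+1}=1$. The paper's bound works because it only loses a factor of $8k^2$ per bad step and stops at step $j\leq k$ with $|S_j|\geq (8k^2)^{k-j+2}\geq (8k^2)^2$, which is comfortably $\geq 2k^2$ as required by Lemma~\ref{directed independent set}. You should track the rung-set size level by level and make sure the stopping condition still leaves at least $k^2$ (in fact, enough slack to invoke the independent-set bound). These are both fixable, but as it stands the proof has a genuine gap at exactly the step you identified.
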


\begin{proof}
Assume that $G$ is $k$-creature-free and $S'$ is a minimal separator of $G$ such that $S'$ cannot be dominated by $4[(8k^2)^{k+1}]^5$ vertices. It follows from Lemma \ref{path domination of nice set} that there is a set $S \subset S'$ of $(8k^2)^{k+1}$ vertices and two paths $R$ and $L$ that dominate $S$, $L$ anti-complete with $R$, and every vertex in $v \in V(L) \cup V(R)$ has at most one neighbor in $S$. 

Number the vertices of $L$ 1 through $|V(L)|$ so that the vertex numbered $i$ is neighbors with the vertices numbered $i-1$ and $i+1$, and number the vertices of $R$ 1 through $|V(R)|$ so that the vertex numbered $i$ is neighbors with the vertices numbered $i-1$ and $i+1$. For a vertex $x$ in $L$ or $R$ we will use the notation $n(x)$ to denote the number it has been given in $L$ or $R$. For every $s_j \in S$ let $\ell_j \in L$ and $r_j \in R$ be the highest numbered neighbors of $s_j$ in $L$ and $R$ respectively. We now set $L_1 = L, R_1 = R$, and $S_1 = S$. We will consider the following process to produce a $k^2$-almost-skinny-ladder. We will show  this process cannot go past $k$ iterations if $G$ is $k$-creature-free, and we will ensure that at the $i^{th}$ step that $V(L_i) \subset V(L)$, $V(R_i) \subset V(R)$, $S_i \subset S$, $|S_i| \geq (8k^2)^{k-i+2}$, $L_i$ and $R_i$ are induced paths, and if $s_j \in S_i$ then $\ell_j \in L_i$ and $r_j \in R_i$. We will also produce induced subpaths $P_i$ of either $L$ or $R$ such that the $P_i$'s are anti-complete with respect to one another and the vertices of $P_i$ will dominate $S_j$ if $i < j$.

At the $i^{th}$ step, $i \leq k$, we do as follows. Create an auxiliary directed graph, $AUX_i$, whose vertex set is $S_i$ and there is an edge from $s_a \in S_i$ to $s_b \in S_i$ if at least one of the following two cases hold

\begin{enumerate}

    \item $n(\ell_a) > n(\ell_b)$ and $s_a$ has a neighbor $x$ in $L$ such that $n(x) < n(\ell_b)$
    
    
    \item $n(r_a) > n(r_b)$ and $s_a$ has a neighbor $x$ in $R$ such that $n(x) < n(r_b)$
        
    
\end{enumerate}

If the maximum in-degree of $AUX_i$ is at most $\frac{1}{4k^2}|S_i|$ then we stop. Since $|S_i| \geq (8k^2)^{k-i+2}$ this gives an independent set of size at least $k^2$ by Lemma \ref{directed independent set}. If there is an $s_j \in S_i$ with in degree over $\frac{1}{4k^2}|S_i|$ then for at least $\frac{1}{8k^2}$ fraction of the vertices of $S_i$, call this subset of vertices $S_{i+1}$, all vertices $s \in S_{i+1}$ must satisfy case 1 one with $s$ playing the role of $s_a$ and $s_j$ playing the role of $s_b$, or all vertices $s \in S_{i+1}$ must satisfy case 2 again with $s$ playing the role of $S_a$ and $s_j$ playing the role of $s_b$. For each case we now describe what to do if all the vertices of $S_{i+1}$ satisfy that case
(if all vertices of $S_{i+1}$ satisfy both cases, then we go with the first case).
Each number here corresponds what to do in that case.
\begin{enumerate}

    \item Call $P_i$ the subpath of $L_i$ that is made up of vertices with numbers less than $n(\ell_j)$. Set $R_{i+1}$ = $R_i$ and set $L_{i+1}$ to be the vertices of $L$ with numbers greater than $n(\ell_j)$.

    
    \item Call $P_i$ the subpath of $R_i$ that is made up of vertices with numbers less than $n(r_j)$. Set $L_{i+1}$ = $L_i$ and set $R_{i+1}$ to be the vertices of $R$ with numbers greater than $n(r_j)$.
    
    
\end{enumerate}

It can then be seen that $V(L_{i+1}) \subset V(L)$, $V(R_{i+1}) \subset V(R)$, $S_{i+1} \subset S$, $|S_{i+1}| \geq (8k^2)^{k-i+1}$, $L_{i+1}$ and $R_{i+1}$ are induced paths, and if $s_j \in S_{i+1}$ then $\ell_j \in L_{i+1}$ and $r_j \in R_{i+1}$ as required. Furthermore, it can be seen that any of the previously $P_j$'s that have been produced in this process ($j \leq i$) dominate all vertices of $S_{i+1}$ and are anti-complete with respect to one another. By Lemma \ref{many dominating sets} then, this process cannot go past the $k^{th}$ iteration without producing a $k$-theta.

We conclude there is some step $j \leq k$ such that the auxiliary graph $AUX_j$ has max in-degree less than $\frac{1}{4k^2}|S_j|$, and since $|S_j| \geq 8k^2$ it therefore has an independent set of size $k^2$ by Lemma \ref{directed independent set}. Let $S^*$ denote such an independent set, we claim that $G[V(L) \cup S^* \cup V(R)]$ makes an $k^2$-almost-skinny-ladder. Let $x,y \in S^*$ and let $a,b$ be the highest and lowest numbered neighbors of $x$ in $L$ respectively, and assume that $y$ has a neighbor $c$ on the induced path of $L$ that has $a$ and $b$ as its endpoints. If $y$'s highest numbered neighbor in $L$ is greater than $n(a)$ then $y$ has an edge to $x$ in $AUX_j$ by case (1). If $y$'s highest numbered neighbor is $L$ is less than $n(a)$, then $x$ has an edge to $y$ again by case (1). Both cases yield a contradiction to $S^*$ being an independent set in $AUX_j$.  A nearly identical argument show that if $a',b'$ are $x$'s highest and lowest numbered neighbors $R$ respectively, then $y$ cannot have a neighbor in the induced subpath of $R$ that has $a',b'$ as its endpoints. It follows that  $G[V(L) \cup S^* \cup V(R)]$ is a $k^2$-almost-skinny-ladder. Applying Lemma \ref{almost skinny implies skinny} shows that $G$ contains a $k$-skinny-ladder as an induced minor.

\end{proof}

The following lemma uses a branching algorithm to produce all of the minimal separators of a graph $G$ and proves a bound on the number of minimal separators produced by this algorithm, which when combined with Corollary \ref{vc corollary} and Lemma \ref{domination or creature/ladder} gives a proof of Theorem \ref{theorem 1}.

\begin{lemma}\label{quasi-poly minimal separators} 
There exists a function $f : \mathbb{N} \rightarrow \mathbb{N}$ such that the following holds. Let $G$ be a graph and let $k$ and $c$ be integers such that for all induced subgraphs $G'$ of $G$ and for all $v \in V(G')$, if $S^v_{G'} = \{N(v) \cap S : v \notin S$ and $S$ is a minimal separator of $G'\}$, then $|S^v_{G'}| \leq c$ and every minimal separator of any induced subgraph of $G$ can be dominated by $k$ vertices. Then $G$ has a most $(c+n^k)^{f(k)\log(n)}$ minimal separators where $n = |V(G)|$.
\end{lemma}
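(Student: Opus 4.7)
The plan is to design a recursive branching algorithm that produces every minimal separator of $G$ at the leaves of its recursion tree, and to bound the number of leaves by $(c+n^k)^{f(k)\log n}$. The outer loop ranges over ordered pairs $(u,v)$ of non-adjacent vertices of $G$, and for each such pair a subroutine $\textsc{Enum}(G',F,u,v)$ takes an induced subgraph $G'\subseteq G$ (with $u,v\in V(G')$) together with a set $F$ of vertices already committed to the separator, and enumerates minimal $u,v$-separators $S$ of $G$ with $F\subseteq S$ and $S\setminus F$ a minimal $u,v$-separator of $G'$. The initial call for each pair is $\textsc{Enum}(G,\emptyset,u,v)$.

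At each recursive call the subroutine combines both hypotheses. Since the residual $S\setminus F$ is a minimal separator of $G'$, the domination hypothesis provides a set $X\subseteq V(G')$ with $|X|\leq k$ dominating $S\setminus F$. The algorithm guesses $X$ (at most $n^k$ options), guesses $X\cap S$ (at most $2^k$ options), and for each $v\in X\setminus S$ guesses the trace $N_{G'}(v)\cap S$ (at most $c$ options per vertex by the trace hypothesis, hence at most $c^k$ in total). Each branch commits $S_0 := (X\cap S)\cup\bigcup_{v\in X\setminus S}(N_{G'}(v)\cap S)$ as part of the residual, leaving $R := (S\setminus F)\setminus S_0$ which is a minimal $u,v$-separator of $G'-S_0$, and we recurse on $\textsc{Enum}(G'-S_0,F\cup S_0,u,v)$. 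The branching factor per call is therefore at most $n^k\cdot 2^k\cdot c^k \leq (c+n^k)^{O(k)}$.

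The depth bound is the technical core of the argument. The key inequality is that if $X$ dominates $S\setminus F$ then $\sum_{v\in X}|(S\setminus F)\cap N_{G'}[v]|\geq|S\setminus F|$, so some $v\in X$ satisfies $|N_{G'}[v]\cap(S\setminus F)|\geq|S\setminus F|/k$. If such a $v$ lies outside $S$, committing $N_{G'}(v)\cap S$ shrinks the residual by a factor $(1-1/k)$, so after $t=O(k\log n)$ iterations the residual is empty. The harder case is when every dominator of $S\setminus F$ lies entirely inside $S\setminus F$, so only $|X|\leq k$ vertices are committed per call. Here the plan is to exploit the fact that after committing $X$ the new minimal separator $R$ of $G'-X$ inherits the trace-bound and domination hypotheses but lives in a strictly smaller induced subgraph, so its dominator may lie outside $R$, amortising slow steps against subsequent fast ones; alternatively one can install a secondary branching step on a vertex whose membership in $R$ is balanced (at least half of the consistent separators contain it and at least half do not), shrinking the consistent-completion count by a factor of two per round.

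The main obstacle is unifying these two cases into a single clean depth bound of the form $f(k)\log n$. My plan is to define a potential $\Phi(G',F)$ that decreases by a constant factor per recursive call regardless of whether the chosen dominator meets the residual inside or outside, with natural candidates combining $\log|V(G')|$ with the logarithm of the number of minimal separators of $G'$ consistent with $F$; the latter is bounded \emph{a priori} by the branching factor of the root of the current subtree, giving a base-case ceiling of $\log$ applied to a polynomial in $n$. Proving geometric decrease amounts to a careful averaging over the $n^k$ dominator choices, together with the structural observation that if no progress-favourable dominator is ever available, the subproblem already admits a direct polynomial bound via Corollary \ref{outside domination}. Combining the per-call branching factor $(c+n^k)^{O(k)}$ with a depth of $O(f(k)\log n)$ yields the claimed bound of $(c+n^k)^{f(k)\log n}$ minimal separators.
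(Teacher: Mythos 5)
Your plan follows the same high-level strategy as the paper (branch on how a small dominator interacts with the residual separator, bound depth logarithmically by geometric shrinkage of a residual set), and your branching factor accounting is fine. However, there is a genuine gap exactly where you flagged it: you never resolve the case in which every vertex of a $k$-dominator $X$ with heavy coverage of $S\setminus F$ lies \emph{inside} $S$. In that case committing $S_0$ removes only $|X\cap S|\leq k$ vertices, which gives no multiplicative progress, and the escape routes you sketch do not close the hole. Amortisation against later ``fast'' steps has no a priori guarantee of ever encountering a fast step; a ``balanced vertex'' secondary branch has no bound on how many rounds are needed before a balanced vertex exists; and the appeal to Corollary~\ref{outside domination} is circular, since that corollary \emph{requires} a dominator disjoint from the separator, which is precisely what you lack in this case.

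The paper closes the gap with a different case distinction that is the crux of the argument. Rather than asking ``does some heavy vertex of the chosen dominator $X$ lie outside $S$?'', it defines a global set $Q=\{v\in V(G) : |N[v]\cap X|\geq\frac{1}{2k}|X|\}$ of \emph{all} heavy-coverage vertices relative to the current residual set $X\supseteq S$ (note $X$ here plays the role of your residual, not the dominator). If some $q\in Q$ lies outside $S$, branch on the trace $N(q)\cap S$ (there are at most $c$ choices) and recurse on $X-N[q]$, which has lost a $\frac{1}{2k}$ fraction. If instead $Q\subseteq S$, commit all of $Q$ and observe the key structural fact: any $k$-vertex dominator $R$ of the residual separator $S-Q$ taken in $G-Q$ is automatically disjoint from $Q$, hence consists only of \emph{light}-coverage vertices, so $|N(R)\cap X|<\frac{k}{2k}|X|=\frac{1}{2}|X|$, and the recursion on $(X\cap N(R))-Q$ again halves the residual. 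Both branches shrink $|X|$ by a factor $\bigl(1-\frac{1}{2k}\bigr)$ or better, giving depth $O(k\log n)$, while branch~(1) has fanout $\leq nc$ and branch~(2) fanout $\leq n^k$, yielding the $(c+n^k)^{f(k)\log n}$ bound. To salvage your proposal you would essentially need to discover this ``if all heavy vertices are in $S$, then any small dominator of what remains is forced to be light'' dichotomy, which replaces your unresolved Case~B with a clean halving step.
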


\begin{proof}
Let $G$, $S^v_G$, $k$, and $c$ be as in the statement of this lemma. The proof of the bound makes use of a branching algorithm. The algorithm takes as input $G$ and $X \subset V(G)$ and the algorithm will use the set $K_{ret}$ to store the vertex sets it will return. It will return $K_{ret}$ which will contain all minimal separators of $G$ contained in $X$ (most likely along with other vertex sets).  We have no concern about the runtime of the algorithm, but we care about the size of the final set it returns. The algorithm is intended to be used initially on the input ($G$, $V(G)$). 

Assume the the input to the algorithm is ($G, X$). If $X$ is empty, then the algorithm returns $\emptyset$. Else, the algorithm determines the set $Q \subset V(G)$ where $Q$ contains all vertices $v \in V(G)$ such that $|N[v] \cap X| \geq \frac{1}{2k}|X|$. Then the algorithm branches in the following two ways:

\begin{enumerate}

    \item For every $q \in Q$ and every $Y \in S^q_G$ the algorithm recursively calls itself on ($G-Y$, $X-N_G[q]$). Each recursive call returns a set $K'$, which contains vertex sets. Then if the recursive call ($G-Y$, $X-N_G[q]$) returns the collection $K'$ of vertex sets, for each set $S$ in $K'$, the algorithm adds the set $S \cup Y$ to $K_{ret}$. 
    
    \item For every set $R$ of $k$ vertices of $G$ such that $R \cap Q = \emptyset$, the algorithm recursively calls itself on ($G-Q$, $(X \cap N_G(R))-Q$). Each recursive call returns a set $K'$, which intern contains vertex sets. Then for each set, $S$, in each $K'$ returned the algorithm adds the set $S \cup Q$ to $K_{ret}$. 
    
\end{enumerate}

After completing this, the algorithm then returns the set $K_{ret}$. Note that in (2) since the set $R$ has no vertex in $Q$ and $|R| \leq k$, the neighborhood of $R$ contains at most $\frac{1}{2}$ of the vertices of $X$.

Since $Q$ contains all vertices $v \in V(G)$ such that $|N[v] \cap X| \geq \frac{1}{2k}|X|$, each recursive call the algorithm makes is on input ($G'$, $X'$) where $|X| \geq (1- \frac{1}{2k})|X'|$, so the algorithm terminates. Let $S$ be a minimal separator of $G$ contained in $X$. Assume all of the recursive calls ($G'$, $X'$) the algorithm makes returns a set that contains all minimal separators of $G'$ contained in $X'$. If $Y = N_G(q) \cap S$ for some $q \in G$ and $q \notin S$, then $S-Y$ is a minimal separator of $G-Y$ that is contained in $X-N_G[q]$. So if there is a $q \in Q$ such that $q \notin S$, then $S$ gets added to  $K_{ret}$ in (1). If $Q \subset S$, then $S-Q$ is a minimal separator of $G-Q$, and by assumption there exists some collection of at most $k$ vertices, $R$, in $G-Q$ such that $S-Q \subset N_{G-Q}(R)$ and therefore $S-Q \subset (X \cap N_G(R)) - Q$. It follows that in this case we also have $S$ gets added to $K_{ret}$ in (2). Induction on the the depth of the recursive call now shows that this algorithm returns all minimal separators.

If $T(n, x)$ represents the maximum number of minimal separators that a vertex set $X$ of size at most $x$ can contains for any graph $G$ with $|V(G)| \leq n$ and $X \subset V(G)$, such that the graph $G$ satisfies the conditions of the lemma, then the algorithm shows that $T(n, x) \leq (c+n^k) T(n,[1-\frac{1}{2k}]x)$. 
Using the fact that $\lim_{y\rightarrow \infty}(1-\frac{1}{y})^y = \frac{1}{e}$ we expand the inequality $T(n, x) \leq (c+n^k) T(n,[1-\frac{1}{2k}]x)$ out $O(k)$ times to get $T(n, x) \leq (c+n^k)^{O(k)} T(n,\frac{1}{2}x)$. Since $T(n,0) = 0$ it follows that there exists a function $f : \mathbb{N} \rightarrow \mathbb{N}$ (independent of the choice of $k$ or $G$) such that this solves to $T(n,x) \leq (c+n^k)^{f(k)\log(x)}$. By taking the initial $X$ to be $V(G)$, it follows that $G$ then contains at most $(c+n^k)^{f(k)\log(n)}$ minimal separator, where $n = |V(G)|$.
\end{proof}

We are now ready to prove Theorem \ref{theorem 1}.

\begin{proof}[Proof of Theorem~\ref{theorem 1}]

Let $G$ be a graph, $|V(G)| = n$, that is $k$-creature-free and has no $k$-skinny-ladder as an induced minor. For every induced subgraph $G'$ of $G$ and for every $v \in G'$, let $S^v_{G'} = \{N(v) \cap S : v \notin S$ and $S$ is a minimal separator of $G'\}$. Then $|S^v_{G'}| = n^{f(k)}$ for some function $f : \mathbb{N} \rightarrow \mathbb{N}$ by Corollary \ref{vc corollary} ($f$ is independent of the choice of $k$ or $G$). By Lemma \ref{domination or creature/ladder}, since $G$ is $k$-creature free and has no $k$-skinny-ladder as an induced minor, there is a function $f' : \mathbb{N} \rightarrow \mathbb{N}$ ($f'$ is independent of the choice of $k$ or $G$) such that every minimal separator of any induced subgraph of $G'$ is dominated by $f'(k)$ vertices. Lemma \ref{quasi-poly minimal separators} then implies there is a function $f'' : \mathbb{N} \rightarrow \mathbb{N}$ ($f''$ is independent of the choice of $k$ or $G$) such that $G$ has at most $({f(k)} + n^{f'(k)})^{f''(k)\log(n)}$ minimal separators. We can then see there exists a function $f^*: \mathbb{N} \rightarrow \mathbb{N}$ ($f^*$ is independent of the choice of $k$ or $G$)
such that $G$ has at most $n^{f^*(k)\log(n)}$ minimal separators. It follows that the family of graphs that are $k$-creature-free and do not contain a $k$-skinny-ladder as an induced minor are strongly-quasi-tame. 
\end{proof}

\section{Finite Forbidden Induced Subgraphs}\label{finite forbidden}

In this section we will provide the lemmas needed in the proof of Theorem \ref{theorem 2} as well give a proof of Theorem \ref{theorem 2} at the end of this section. The majority of the work of this section goes into proving that given an integer $k$, if $G$ contains a $k'$-creature for large enough $k'$, then $G$ must contain a $k$-theta, $k$-prism, $k$-pyramid, $k$-ladder-theta, $k$-ladder-prism, or $k$-ladder as an induced subgraph, which is proven in Lemma \ref{k creature implies theta}. Lemmas \ref{exponentially many mins seps} and \ref{k-claws are untame} then show that if ${\cal F}$ is a family of graphs defined by a finite number of forbidden induced subgraphs and ${\cal F}$ does not forbid all $k$-theta, $k$-prism, $k$-pyramid, $k$-ladder-theta, $k$-ladder-prism, $k$-claw, $k$-paw graphs for all $k$ larger than some fixed constant, then ${\cal F}$ is feral.
Theorem \ref{theorem 2} is then proved using Lemma \ref{k creature implies theta} along with Theorem \ref{theorem 1} and a few simple observations, as well as Lemmas \ref{exponentially many mins seps} and \ref{k-claws are untame}.

It will be useful in this section to define the following graphs. These graphs are depicted in Figure \ref{half graphs}.

\begin{itemize}
    
    \item A graph $G$ is a $k$-$half$-$theta$ if $G$ consists of a vertex $v$ and $k$ induced paths $P_1$, $P_2, \ldots, P_k$ of $G$ such that each path has length at least 2, for $1 \leq i \leq k$ it holds that $v$ is one endpoint of $P_i$, and for $j \neq i$ it hold that $P_i-v$ is anti-complete with $P_j-v$. Let $x_i$ denote the endpoint of $P_i$ that is not $v$. Then we say the vertices $x_1, x_2, \ldots, x_k$ are the endpoints of the $k$-half-theta. If $X$ is a vertex set and $x_i \in X$ for all $i$ with $1 \leq i \leq k$, then we say $G$ is a $k$-half-theta ending in $X$. 

    \item A graph $G$ is a $k$-$half$-$prism$ if $G$ consists of a clique of vertices $v_1, v_2, \ldots, v_k$ and $k$ induced paths $P_1$, $P_2$,..., $P_k$ of $G$ such that each path has length at least 1, for $1 \leq i \leq k$ it holds that $v_i$ is one endpoint of $P_i$, and for $j \neq i$ it hold that $P_i-v_i$ is anti-complete with $P_j$. If the length of $P_i$ is greater than 1 then let $x_i$ denote the endpoint of $P_i$ that is not $v_i$, and if the length of $P_i$ is 1 then let $x_i = v_i$. We say the vertices $x_1, x_2, \ldots, x_k$ are the endpoints of the $k$-half-prism. If $X$ is a vertex set and $x_i \in X$ for all $i$ with $1 \leq i \leq k$, then we say $G$ is a $k$-half-prism ending in $X$. 
    
    \item A graph $G$ is a $k$-$half$-$ladder$ if $G$ consists of a path $P$ of $G$ along with $k$ additional paths $P_1, P_2, \ldots, P_k$ of $G$ such that each path has length at least 1. For $1 \leq i \leq k$ let $P_i$'s endpoints be $v_i$ and $x_i$ (with $v_i$ possibly equal to $x_i$). We call $P$ the backbone path and the $P_i$'s the auxiliary paths. We require that $v_i$ has at least one neighbor in $P$, $P$ is anti-complete with $P_i-v_i$, and for $j \neq i$ $P_i$ is anti-complete with $P_j$. Lastly, we also require that if $a$ and $b$ are two neighbors of some $v_i$ in $P$, then there is no $v_j$, $i \neq j$ such that $v_j$ has a neighbor in the induced subpath of $P$ with endpoint $a$ and $b$. We say the vertices $x_1, x_2, \ldots, x_k$ are the endpoints of the $k$-half-ladder. If $X$ is a vertex set and $x_i \in X$ for all $i$ with $1 \leq i \leq k$, then we say $G$ is a $k$-half-ladder ending in $X$. 
    
    \item A graph $G$ is a $k$-$half$-$quasi$-$ladder$ if $G$ consists of a path $P$ of $G$ along with $k$ additional paths $P_1, P_2, \ldots, P_k$ of $G$ such that each path has length at least 1. For $1 \leq i \leq k$ let $P_i$'s endpoints be $v_i$ and $x_i$ (with $v_i$ possibly equal to $x_i$). We call $P$ the backbone path and the $P_i$'s the auxiliary paths. We require that $v_i$ has at least one neighbor in $P$, $P$ is anti-complete with $P_i-v_i$, and for $j \neq i$ $P_i$ is anti-complete with $P_j$. We say the vertices $x_1, x_2, \ldots, x_k$ are the endpoints of the $k$-half-quasi-ladder. If $X$ is a vertex set and $x_i \in X$ for all $i$ with $1 \leq i \leq k$, then we say $G$ is a $k$-half-ladder ending in $X$. Note that a $k$-half-quasi-ladder is almost the same as a $k$-half-ladder, but we drop the requirement that if $a$ and $b$ are two neighbors of some $v_i$ in $P$, then there is no $v_j$, $i \neq j$ such that $v_j$ has a neighbor in the subpath of $P$ with endpoint $a$ and $b$.
    
\end{itemize} 

\begin{figure}
\centerline{\includegraphics{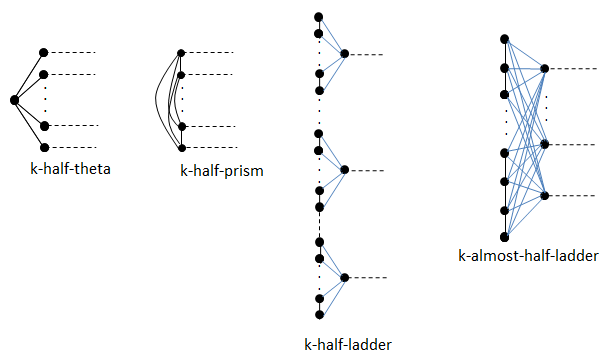}}
\caption{\em Dashed lines represent the option of having an arbitrary length path (possibly of length 0). The blue lines used in the $k$-half-ladder and $k$-almost-half-ladder graphs represents the option of either having or not having that edge, but for each vertex not on the backbone path that is adjacent at least one blue edges, at least one of those blue edges must belong to the graph.}
\label{half graphs}
\end{figure}

The following lemmas, culminating with Lemma \ref{k creature implies theta}, work towards proving that given an integer $k$, if $G$ contains a $k'$-creature for large enough $k'$, then $G$ must contain a $k$-theta, $k$-prism, $k$-pyramid, $k$-ladder-theta, $k$-ladder-prism, or $k$-ladder. 

Lemmas \ref{high degree} through \ref{final step} are used to prove Lemma \ref{final step 2}, which shows that if ($A$, $B$, $\{x_1, x_2,\ldots,x_{k'}\}$, $\{y_1, y_2,\ldots,y_{k'}\}$) is a $k'$ creature for large enough $k'$, then $G[A \cup \{x_1, x_2,\ldots,x_{R(k',k')}\}]$ contains an induced $k$-half-theta, $k$-half-prism, or a $k$-half-quasi-ladder, ending in $\{x_1, x_2,\ldots,x_{R(k',k')}\}$.

\begin{lemma}\label{high degree}
Let $G$ be a graph that contains a $k$-creature ($A$, $B$, $\{x_1, x_2,\ldots,x_{k}\}$, $\{y_1, y_2,\ldots,y_{k}\}$) where $\{x_1, x_2,\ldots,x_{k}\}$ is an independent set of $G$. Let $A'$ be a minimally connected induced subgraph of $G[A]$ such that $\{x_1, x_2,\ldots,x_{k}\} \subset N(A')$. If $A'$ contains a vertex with degree at least $R(d,d)$ in $A'$, then $G[A \cup \{x_1, x_2,\ldots,x_{k}\}]$ contains a $d$-half theta or a $d$-half-prism ending in $\{x_1, x_2,\ldots,x_{k}\}$.
\end{lemma}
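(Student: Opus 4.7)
The plan is to combine Ramsey's theorem with the minimality of $A'$. First I would take a vertex $v \in A'$ with $\deg_{A'}(v) \geq R(d,d)$, pick $R(d,d)$ of its neighbors in $A'$ and call this set $U$. Applying Ramsey's theorem to the induced subgraph $G[U]$ yields either a clique $K \subseteq U$ of size $d$, which will seed a $d$-half-prism, or an independent set $I \subseteq U$ of size $d$, which will seed a $d$-half-theta centered at $v$.

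Next I would use the minimality of $A'$ to obtain, for each connected component $C$ of $A' - v$, a private endpoint. Precisely: if every $x_i$ had a neighbor in $A'$ outside $C$, then $A' \setminus C$ would still be connected (since it contains $v$, and each remaining component is adjacent to $v$) and would still dominate $\{x_1, \ldots, x_k\}$, contradicting the minimality of $A'$. Hence for each component $C$ there is an $x_{j_C} \in \{x_1, \ldots, x_k\}$ whose neighbors in $A'$ all lie in $C$; fix such an $x_{j_C}$ and one of its neighbors $w_C \in C$.

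I would then route each chosen Ramsey vertex $u_i$ to a private $x_{j_{C_i}}$: take an induced path from $u_i$ to $w_{C_i}$ inside $C_i$, with $w_{C_i}$ chosen as the neighbor of $x_{j_{C_i}}$ in $C_i$ closest to $u_i$, and append the edge to $x_{j_{C_i}}$. Because distinct components of $A' - v$ are pairwise anti-complete in $G[A']$, and because $x_{j_{C_i}}$ has no $A'$-neighbors outside $C_i$, these paths are pairwise anti-complete; the endpoints $x_{j_{C_i}}$ are mutually non-adjacent because $\{x_1, \ldots, x_k\}$ is independent in $G$. In the independent-set case, prepending $v$ to each path produces $d$ induced paths of length at least $2$ sharing only $v$, yielding a $d$-half-theta centered at $v$; in the clique case, the paths together with the clique $K$ form a $d$-half-prism. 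In both cases the endpoints lie in $\{x_1, \ldots, x_k\}$, as required.

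The main obstacle is that the $R(d,d)$ Ramsey-chosen neighbors of $v$ need not be distributed across $d$ different components of $A' - v$; in principle many could fall into one component. To address this I would insert a pigeonhole-style step: either thin $U$ down first so that each component of $A' - v$ contributes at most one representative and only then apply Ramsey on the (necessarily still large) reduced set, or handle the case of a single component holding many neighbors of $v$ by a local argument reusing the minimality-based private-endpoint trick inside that component. Getting the quantitative constants to line up cleanly while preserving the Ramsey-obtained clique/independent-set dichotomy is the most delicate part; the rest is a routine consequence of minimality and Ramsey's theorem.
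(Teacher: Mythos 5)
Your plan correctly applies Ramsey's theorem to $R(d,d)$ neighbors of $v$, but the way you extract private endpoints from minimality does not go through, and the obstacle you yourself flag is a genuine gap rather than a technicality. Deleting an entire component $C$ of $A' - v$ and invoking minimality of $A'$ gives only one private $x_{j_C}$ per component of $A' - v$. Your proposed fix of thinning $U$ to one representative per component fails precisely when the Ramsey neighbors of $v$ concentrate in few components; in the extreme (and perfectly possible) case where $A' - v$ is connected, the thinned set has size one, so there is nothing ``necessarily still large'' to run Ramsey on. The ``local argument inside that component'' is not specified and does not obviously bottom out, since $A' - v$ restricted to a component is no longer a minimally connected dominator.

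The fix is to apply minimality one level lower: for each chosen neighbor $v_i$ of $v$, delete $v_i$ (not $v$, not a whole component of $A'-v$) and look at the component $C_i$ of $A' - v_i$ containing $v$. That component is a proper connected induced subgraph of $A'$, so by minimality it fails to dominate some $x_{v_i}$, i.e.\ every $A'$-neighbor of $x_{v_i}$ lies outside $C_i$. This gives, for each $v_i$, an induced path $P_i$ from $v_i$ through the ``far side'' of $A' - v_i$ to $x_{v_i}$. Since each $v_j$ with $j \ne i$ is itself adjacent to $v$, it sits inside $C_i$, and one checks that the far sides for distinct $i,j$ are disjoint and anti-complete, so $P_i - v_i$ is anti-complete with $P_j$ and the targets $x_{v_i}$ are distinct (and non-adjacent, by the independence of the $x$'s). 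This works regardless of how the $R(d,d)$ neighbors of $v$ distribute across components of $A' - v$, and then Ramsey on $\{v_1,\ldots,v_{R(d,d)}\}$ finishes exactly as you intended.
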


\begin{proof}
Let $G$ be a graph that contains a $k$-creature ($A$, $B$, $\{x_1, x_2,\ldots,x_{k}\}$, $\{y_1, y_2,\ldots,y_{k}\}$). Let $A'$ be a minimally connected induced subgraph of $G[A]$ such that $\{x_1, x_2,\ldots,x_{k}\} \subset N_G(A')$. Assume $v \in A'$ has degree at least R$(d,d)$ in $A'$. Let $v_1, v_2,\ldots, v_{R(d,d)}$ be distinct neighbors of $v$ in $A'$. By the minimality of $A'$, for each $v_i$ there must be a vertex $x_{v_i}$ such that every path starting from $v$ and ending at $x_{v_i}$ with internal vertices contained $A'$ must contain $v_i$, since if this does not happen for some given $v_i$ then the connected component of $A' - v_i$ that contains $v$ would be a proper induced subgraph of $A'$ that is connected and whose open neighborhood contains $\{x_1, x_2,\ldots,x_{k}\}$. It follows there must exist induced paths $P_1, P_2,\ldots, P_{R(d,d)}$ such that $v_i \in P_i$, $P_i$'s endpoints are $v_i$ and $x_{v_i}$, and $P_i-v_i$ is anti-complete with $P_j$. 
We then apply Ramsey's Theorem to the $v_i$'s get a subset of size $d$ of the $P_i$'s that along with $v$ form a $d$-half theta that ends in $\{x_1, x_2,\ldots,x_{k}\}$ (if Ramsey's Theorem provides an independent set of size $d$) or a subset of size $d$ of the $P_i$'s that form a $d$-half prism that ends in $\{x_1, x_2,\ldots,x_{k}\}$ (if Ramsey's Theorem provides a clique of size $d$) and the result now follows.
\end{proof}

\begin{lemma}\label{finding path with many good vertices}
Let $G$ be connected graph with maximum degree $d$ and contains at least $d^k$ vertices with degree greater than 2. Then there exists an induced path of $G$ that contains at least $k$ vertices of degree greater than 2.
\end{lemma}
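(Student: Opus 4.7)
The plan is to build a BFS tree rooted at a branch vertex and extract a chain of branch vertices via a counting argument. Let $B = \{v \in V(G) : \deg_G(v) > 2\}$ so that $|B| \geq d^k$ and every vertex outside $B$ has $G$-degree at most $2$. I would first pick any $v_0 \in B$ and build a BFS tree $T$ of $G$ rooted at $v_0$; since every $T$-path from the root is a shortest path in $G$, such paths are automatically induced in $G$, and it therefore suffices to exhibit a root-to-descendant $T$-path containing at least $k$ vertices of $B$.

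To that end I would introduce an auxiliary rooted tree $T_B$ whose vertex set is $B$, rooted at $v_0$, where the parent of each $b \in B \setminus \{v_0\}$ is defined as the closest ancestor of $b$ in $T$ that lies in $B$ (well-defined since $v_0 \in B$ is an ancestor of every vertex in $T$). The key structural claim is that every node of $T_B$ has at most $d$ children. This uses the defining property of $B$: any vertex of $V(G) \setminus B$ has $G$-degree at most $2$, and therefore at most one child in $T$ (its parent edge uses one of its at most two incident edges). Consequently, for each $b \in B$ and each of the at most $d$ children of $b$ in $T$, descending through the forced chain of at-most-one-child non-$B$ vertices reaches at most one $B$-vertex, and this vertex becomes the corresponding child of $b$ in $T_B$.

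A standard counting argument now shows that any rooted tree on at least $d^k$ nodes with branching at most $d$ has height at least $k$, because at most $1 + d + d^2 + \cdots + d^{k-1} < d^k$ nodes can fit at depth at most $k-1$ when $d \geq 2$ (the case $d \leq 2$ is vacuous since then $B = \emptyset$). Applied to $T_B$, this yields a chain $v_0 = b_0, b_1, \ldots, b_k$ in $T_B$, and lifting this chain back to $T$ produces an induced path of $G$ from $v_0$ to $b_k$ passing through all of $b_0, b_1, \ldots, b_k$, hence through at least $k$ vertices of $B$.

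The only delicate step is the branching bound on $T_B$: the degree-at-most-$2$ property of non-$B$ vertices is exactly what prevents the non-$B$ corridors between consecutive $B$-vertices from themselves branching in $T$, and without this the argument would break down. Once that bound is in hand, the rest reduces to the routine depth-versus-size inequality for rooted trees together with the observation that shortest paths are induced.
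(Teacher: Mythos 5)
Correct, and essentially the same approach as the paper's: both build a BFS tree, exploit the fact that non-branch vertices have at most one child in $T$ (so descending from a branch vertex reaches at most $d$ ``next'' branch vertices), and then count, with induced-ness supplied by root-to-descendant BFS paths being shortest paths. You package the count via the contracted auxiliary tree $T_B$ and a height-versus-size inequality, whereas the paper performs a greedy descent maintaining a subtree-size invariant --- the same underlying idea in slightly different clothing (and rooting at a vertex of $B$ cleanly sidesteps the paper's implicit assumption about the subtree at $v_1$).
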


\begin{proof}
Let $G$ be a connected graph with maximum degree $d$ and contains at least $d^k$ vertices with degree greater than 2. Let $T$ be a breadth first search tree of $G$ rooted at some vertex $v \in G$. We create the desired path as follows. Let $v_1$ be the first descendent of $v$ in $T$ that has degree greater than 2 in $G$ ($v_1$ could be $v$). We begin our path at $v_1$. We will grow the path $P_i = \{x_1, x_2,\ldots x_m\}$ where $x_1 = v_1$, $x_j$ is the parent of $x_{j+1}$ in $T$, $P_i$ contains at least $i$ vertices of $G$ with degree greater than 2 in $G$, and the subtree of $T$ rooted at $x_m$ contains at least $d^{k-i+1}$ vertices of $G$ with degree greater than 2 in $G$. 

Assume that we have such a path $P_i = \{x_1, x_2,\ldots x_m\}$, $i < k$ (the vertex $v_1$ satisfies the conditions of $P_1$). We will show how to attain $P_{i+1}$. Since the maximum degree in $G$ is $d$, $x_m$ has at most $d$ children in $T$, and by assumption the subtree of $T$ rooted at $x_m$ has at least $d^{k-i+1}$ vertices of degree greater than 2 in $G$, it follows that for at least one child, call it $x_{m+1}$, the subtree rooted at $x_{m+1}$ has at least $d^{k-i}$ vertices of $G$ with degree greater than 2 in $G$. Now let $v_{i+1}$ be the first descendant of $x_{m+1}$ with degree different from 2 in $G$ ($v_{i+1}$ could be $x_{m+1}$) and let $P_{i+1}$ be the path $P_i$ along with the induced path in $T$ from $x_{m+1}$ to $v_{i+1}$. It follows $P_{i+1}$ satisfies the required conditions.

Hence we can produce a $P_k$ that satisfies the conditions stated before, and we can then see that $P_k$ is an induced path in $G$ with at least $k$ vertices of degree greater than 2.
\end{proof}

\begin{lemma}\label{using path with many good vertices}
Let $G$ be a graph that contains a $k$-creature ($A$, $B$, $\{x_1, x_2,\ldots,x_{k}\}$, $\{y_1, y_2,\ldots,y_{k}\}$) where $\{x_1, x_2,\ldots,x_{k}\}$ is an independent set. Let $A'$ be a minimally connected subgraph of $G[A]$ such that $\{x_1, x_2,\ldots,x_{k}\}$ $\subset$ $N(A')$. If $A'$ contains an induced path, $P$, with at least R$(d,d)$ vertices of degree greater than 2 in $A'$, then there is a $d$-half-quasi-ladder or a $d$-half-prism in $G[A \cup \{x_1, x_2,\ldots,x_{k}\}]$ that ends in $\{x_1, x_2,\ldots,x_{k}\}$.
\end{lemma}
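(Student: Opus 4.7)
The plan is to use $P$ as a backbone and, from each of its $R(d,d)$ branching vertices (i.e.\ vertices of degree $>2$ in $A'$), extract an induced off-path branch ending at some $x_j$; a Ramsey argument on pairwise interactions among these branches then yields either a $d$-half-quasi-ladder (independent case) or a $d$-half-prism (clique case).

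First, for each branching vertex $w$ of $P$, I would produce an induced path $Q_w$ from $w$ to some $x_{j_w}$ whose internal vertices lie in $A' \setminus V(P)$. The existence follows from the minimality of $A'$: since $P$ is induced, $w$ has an off-$P$ neighbor in $A'$; letting $D_w$ denote the connected component of $A' \setminus (V(P)\setminus\{w\})$ containing $w$, the component $D_w$ must contain a neighbor of some $x_{j_w}$, for otherwise $A' \setminus (D_w \setminus \{w\})$ would remain connected (since $D_w$'s only attachment to the rest of $A'$ is $w \in V(P)$) and still dominate all $x_j$'s, contradicting the minimality of $A'$. A shortest induced path inside $D_w$ from $w$ to such an $x_j$-neighbor, followed by the edge to $x_{j_w}$, is the desired $Q_w$. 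A useful observation: every non-branching vertex of $P$ has degree $\leq 2$ in $A'$, so it has no off-$P$ neighbors in $A'$ and hence is non-adjacent to every internal vertex of every $Q_w$.

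Second, I would form an auxiliary graph $H$ on the $R(d,d)$ branches by declaring $w \sim w'$ if the interiors of $Q_w$ and $Q_{w'}$ share a vertex, have an edge between them, or one's interior contains a neighbor of the other's branching vertex or terminal $x_j$; Ramsey's theorem then yields an independent set or a clique of size $d$ in $H$. In the independent case the chosen branches are vertex-disjoint off $P$, pairwise anti-complete off $P$, and no internal vertex sees another chosen $w_{i_s}$ or $x_{j_{w_{i_s}}}$, so the subpath of $P$ spanning the chosen branching vertices, together with the branches themselves, is a $d$-half-quasi-ladder ending in $\{x_1,\ldots,x_k\}$ (a final pigeonhole enforces distinct terminal $x_{j_w}$'s, using that $\{x_1,\ldots,x_k\}$ is independent). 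In the clique case every pair of branches interacts, and a further iterated Ramsey on the \emph{type} of interaction (shared off-$P$ vertex, edge between first off-$P$ vertices, or a deeper adjacency) reduces to a uniform pattern from which a genuine clique of hub vertices supporting anti-complete induced continuations to distinct $x_j$'s --- i.e.\ a $d$-half-prism --- can be read off.

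The delicate step is the clique case: converting an arbitrary ``interaction clique'' into a bona fide vertex-clique of hubs with pairwise anti-complete continuing induced paths requires several iterated Ramsey passes through the possible interaction types and careful choices of which initial segment of each $Q_w$ to continue with. The first two steps, by contrast, rely only on the minimality observation above together with a single Ramsey application.
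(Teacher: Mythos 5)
Your first step — for each branching vertex $w$ of $P$, using the minimality of $A'$ to extract an off-path induced branch $Q_w$ ending at some $x_{j_w}$ — is essentially sound. The rest of the argument, however, takes a route that does not go through as written. The paper's proof picks, for each branching vertex $v_i$, a single off-path neighbor $v_i'$ and finds the branch $P_i$ entirely in the ``far side'' of $v_i'$, i.e.\ the components of $A'-v_i'$ not containing $P$. The crucial structural fact, which your proposal never establishes, is that for distinct $v_i',v_j'$ these far sides are pairwise disjoint and anti-complete: a vertex separated from $P$ by both $v_i'$ and $v_j'$ would have to meet $v_i'$ strictly before $v_j'$ and also $v_j'$ strictly before $v_i'$ on any path to $P$ (because each $v_\ell'$ is directly adjacent to $P$), which is absurd. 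This makes the branch tails $P_i - v_i'$ automatically pairwise anti-complete, with distinct endpoints $x_{v_i}$. Ramsey's theorem is then applied to the vertices $v_i'$ themselves: a clique among the $v_i'$'s together with the already-anti-complete tails is precisely a $d$-half-prism, and an independent set gives a $d$-half-quasi-ladder.

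You instead apply Ramsey to an abstract ``interaction graph'' $H$ on the branches. The independent-set case is then fine (modulo some bookkeeping), but the clique case is a genuine gap. A clique in $H$ is merely a set of pairwise interacting branches, where the interaction type and the witnessing vertices can vary from pair to pair; if, say, every pair of branches shares some off-path internal vertex but different pairs share different vertices, nothing resembling a vertex clique of hubs with anti-complete spokes is forced. You acknowledge the step is delicate and gesture at ``several iterated Ramsey passes'' over interaction types, but you do not specify these passes or show they converge to a $d$-half-prism, and I do not see how to complete them without first establishing the far-side disjointness above — at which point one would simply apply Ramsey to the $v_i'$'s as the paper does, and the interaction-graph detour becomes unnecessary.
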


\begin{proof}
Let $G$, $A'$, $\{x_1, x_2,\ldots,x_{k}\}$, and $P$ be as in the statement of the lemma, let $v_1, v_2, \ldots, v_{R(d,d)}$ be vertices of $P$ that have degree greater than 2 in $A'$, and for each $v_i$ let $v_i'$ be a neighbor of $v_i$ in $A'$ that is not in $P$. By the minimality of $A'$, for each $v_i'$ there must exist a vertex $x_{v_i}$ such that every path from $v_i$ to $x_{v_i}$ with internal vertices contains in $A'$ must contain $v_i'$, since if this does not happen for some given $v_i'$ then the component of $A'-v_i'$ that contains $v_i$ would be a proper induced subgraph of $A'$ that is connected and whose open neighborhood contains  $\{x_1, x_2,\ldots,x_{k}\}$. It follows there must exists induced paths $P_1, P_2,\ldots, P_{R(d,d)}$ disjoint from $P$ with internal vertices contained in $A'$, $P_i$'s endpoints are $v_i'$ and $x_{v_i}$, and $P_i-v_i'$ is anti-complete with $P_j$. We then apply Ramsey's Theorem to the $v_i'$'s to get a subset of size $d$ of the $P_i$'s along with $P$ that form a $d$-half-quasi-ladder that ends in $\{x_1, x_2,\ldots,x_{k}\}$ (if Ramsey's Theorem provides an independent set of size $d$) or a subset of size $d$ of the $P_i$'s that yield a $d$-half-prism that ends in $\{x_1, x_2,\ldots,x_{k}\}$ (if Ramsey's Theorem provides a clique of size $d$).
\end{proof}

\begin{lemma}\label{final step}
Let $G$ be a graph that contains a $k\cdot (d^{c+1}+d)$-creature ($A$, $B$, $\{x_1, x_2,\ldots,x_{k\cdot (d^{c+1}+d)}\}$, $\{y_1, y_2,\ldots,y_{k\cdot (d^{c+1}+d)}\}$). Let $A'$ be a minimally connected subgraph of $G[A]$ such that $\{x_1, x_2,\ldots,x_{k\cdot (d^{c+1}+d)}\}$ $\subset$ $N(A')$. Assume the max degree in $A'$ is $d$ and that $A'$ contains less than $d^c$ vertices of degree greater than 2 in $A'$. Then $G[A \cup \{x_1, x_2,\ldots,x_{k\cdot (d^{c+1}+d)}\}]$ contains a $k$-half-quasi-ladder ending in  $\{x_1, x_2,\ldots,x_{k\cdot (d^{c+1}+d)}\}$.
\end{lemma}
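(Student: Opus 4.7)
The plan is to decompose $A'$ into a small number of induced ``arcs'' (maximal paths whose interior vertices have degree $2$ in $A'$), apply pigeonhole to locate one arc carrying many of the creature vertices $x_i$, and then use that arc as the backbone of the desired half-quasi-ladder with the creature vertices as auxiliary paths.

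First, define an \emph{arc} of $A'$ to be a maximal induced path in $A'$ whose interior vertices have degree exactly $2$ in $A'$; consequently each arc's two endpoints are either leaves of $A'$ or lie in $B := \{v \in V(A') : \deg_{A'}(v) > 2\}$, where by hypothesis $|B| < d^c$. Double-counting arc-endpoint incidences gives $2 \cdot (\#\text{arcs}) \leq \sum_{v \in B} \deg_{A'}(v) + L \leq d|B| + L$, where $L$ is the number of leaves of $A'$. Since $A'$ is connected with maximum degree $d$, a standard argument (comparing $2|E(A')| \geq 2(|V(A')|-1)$ against $\sum \deg \leq d|B| + 2I + L$) yields $L \leq (d-2)|B| + 2$, whence $\#\text{arcs} \leq d^{c+1}$. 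The degenerate case $|B| = 0$, where $A'$ is a single path or cycle, is handled directly: a path is one arc, and if $A'$ is a cycle one cuts it at a chosen vertex, losing only those creature attachments that had that vertex as their unique $A'$-neighbor — the additive $+d$ slack in the hypothesis absorbs this loss.

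Second, each $x_i$ has a neighbor in $A'$ and hence lies in the neighborhood of some arc; assign $x_i$ to any arc containing such a neighbor (breaking ties arbitrarily when the neighbor is a branching endpoint shared by several arcs). Since we have $k(d^{c+1}+d) > k \cdot d^{c+1}$ creature vertices distributed among at most $d^{c+1}$ arcs, by pigeonhole some arc $\alpha^{\ast}$ receives at least $k$ of them; call these $x_{i_1}, \ldots, x_{i_k}$.

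Third, build the half-quasi-ladder. Set $P := \alpha^{\ast}$ (an induced path in $G[A]$) as the backbone, and for each $j \in \{1,\ldots,k\}$ set $v_j := x_{i_j}$ and $P_j := \{x_{i_j}\}$; the definition of $k$-half-quasi-ladder permits this degenerate single-vertex auxiliary path via the clause ``$v_i$ possibly equal to $x_i$''. Each $v_j$ has a neighbor in $P$ by the assignment step, $P_j - v_j = \emptyset$ is vacuously anti-complete with $P$, and the condition $P_j$ anti-complete with $P_\ell$ for $j \neq \ell$ reduces to $x_{i_j} x_{i_\ell} \notin E(G)$, which holds because the creature vertex set $X$ is an independent set — this is the standing hypothesis inherited from the calling context of Lemma~\ref{final step 2}, where Ramsey's Theorem is applied to the creature before any of Lemmas~\ref{high degree}, \ref{using path with many good vertices}, or \ref{final step} is invoked. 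Thus $(P, P_1, \ldots, P_k)$ is the required $k$-half-quasi-ladder ending in $\{x_1, \ldots, x_{k(d^{c+1}+d)}\}$.

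The main obstacle is the arc-counting bound in the first step: the double-counting must be combined carefully with the connectedness-based leaf bound, and the degenerate situations $|B| \in \{0, 1\}$ (in particular the cycle case) must be handled by hand, which is precisely what the $+d$ slack in $k(d^{c+1}+d)$ is designed to accommodate. A secondary subtlety is the reliance on independence of $X$, which is standard in this section since each appeal to these structural lemmas is preceded by a Ramsey reduction.
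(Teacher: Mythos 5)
Your proof is correct and follows essentially the same route as the paper: bound the number of induced paths needed to cover $A'$ using the fact that $A'$ has few vertices of degree exceeding $2$, then pigeonhole the $x_i$'s onto one such path and take singleton auxiliary paths. The only difference in technique is that the paper reaches the path decomposition via a BFS tree of $A'$ (whose root-to-leaf paths are automatically induced and sidestep the cycle degeneracy you handle by hand), whereas you count ``arcs'' directly by a degree/Euler argument; both yield the same $d^{c+1}+d$ bound, and your explicit remark that independence of $\{x_1,\ldots,x_k\}$ is inherited from Lemma~\ref{final step 2}'s Ramsey step is a point the paper leaves implicit.
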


\begin{proof}

Let $G$, $A'$ and $\{x_1, x_2,\ldots,x_{k\cdot (d^{c+1}+d)}\}$ be as in the statement of the lemma. Let $T$ be a breadth first search tree of $A'$ rooted at some vertex $v$. Then $T$ is a tree in which every vertex except for the root can have at most $d-1$ children, hence there are at most $d^c+1$ vertices that have more than one descendent, and the maximum number of decedents any vertex from this set can have is $d$. It follows that there are at most $d^{c+1}+d$ leaves of $T$, and therefore $A'$ is the union of at most $d^{c+1}+d$ induced paths in $A'$. Hence, there exists some induced path $P$ in $A'$ such that $P$'s open neighborhood contains at least $k$ vertices in $\{x_1, x_2,\ldots,x_{k\cdot (d^{c+1}+d)}\}$, which gives us a $k$-half-quasi-ladder ending in $\{x_1, x_2,\ldots,x_{k\cdot (d^{c+1}+d)}\}$.
\end{proof}

\begin{lemma}\label{final step 2}
Let $k'$ = $k \cdot R(k,k)^{R(k,k)+1} + R(k,k)$, and let $G$ be a graph that contains an  $R(k',k')$-creature ($A$, $B$, $\{x_1, x_2,\ldots,x_{R(k',k')}\}$, $\{y_1, y_2, \ldots,y_{R(k',k')}\}$). Then $G[A \cup \{x_1, x_2,\ldots,x_{R(k',k')}\}]$ contains an induced $k$-half-theta, $k$-half-prism, or a $k$-half-quasi-ladder, ending in $\{x_1, x_2,\ldots,x_{R(k',k')}\}$.
\end{lemma}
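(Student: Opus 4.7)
The plan is to reduce to the independent-set case via Ramsey's Theorem and then carry out a three-way case analysis on a minimal witness $A' \subseteq G[A]$, invoking Lemmas \ref{high degree}, \ref{finding path with many good vertices}, \ref{using path with many good vertices}, and \ref{final step} in the appropriate sub-cases.

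First I would apply Ramsey's Theorem to the graph induced by $\{x_1,\ldots,x_{R(k',k')}\}$ to extract either a clique of size $k'$ or an independent set of size $k'$ among these vertices. In the clique case, the $k'$-clique is itself a $k'$-half-prism (taking each path $P_i$ to be the single vertex $v_i = x_i$, which has length $1$), hence contains a $k$-half-prism ending in $\{x_1,\ldots,x_{R(k',k')}\}$, and we are done. Otherwise, after relabeling, $\{x_1,\ldots,x_{k'}\}$ is an independent set, which is exactly the hypothesis needed to invoke Lemmas \ref{high degree} and \ref{using path with many good vertices}. Let $A'$ be a minimally connected induced subgraph of $G[A]$ with $\{x_1,\ldots,x_{k'}\}\subseteq N_G(A')$; its maximum degree is the parameter used to branch.

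Now I would case-split on $A'$. Case (a): if some vertex of $A'$ has degree at least $R(k,k)$ in $A'$, Lemma \ref{high degree} applied with $d=k$ immediately produces a $k$-half-theta or $k$-half-prism ending in $\{x_1,\ldots,x_{k'}\}$. Case (b): otherwise the maximum degree $d$ of $A'$ satisfies $d < R(k,k)$. In sub-case (b.i), if $A'$ contains at least $d^{R(k,k)}$ vertices of degree greater than $2$, then Lemma \ref{finding path with many good vertices} (with parameter $R(k,k)$) produces an induced path of $A'$ through at least $R(k,k)$ such vertices, whereupon Lemma \ref{using path with many good vertices} with $d=k$ delivers a $k$-half-quasi-ladder or $k$-half-prism. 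In sub-case (b.ii), $A'$ has fewer than $d^{R(k,k)}$ vertices of degree greater than $2$, which is precisely the hypothesis of Lemma \ref{final step} with $c=R(k,k)$, and that lemma produces a $k$-half-quasi-ladder.

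The main obstacle is the parameter bookkeeping that explains the specific form of $k'$. Sub-case (b.ii) requires a $k\cdot(d^{R(k,k)+1}+d)$-creature, and we only have a $k'$-creature after the Ramsey reduction. Since $d \leq R(k,k)-1$, one checks $d^{R(k,k)+1}+d < R(k,k)^{R(k,k)+1}$, so
\[
k\cdot(d^{R(k,k)+1}+d) \;<\; k\cdot R(k,k)^{R(k,k)+1} \;\leq\; k' \;=\; k\cdot R(k,k)^{R(k,k)+1}+R(k,k),
\]
which is exactly why $k'$ is defined as it is. A single degenerate configuration, namely $A'$ consisting of a single vertex $v$, is handled by observing that $v$ together with the edges $vx_i$ (for the $k'$ independent $x_i$'s) is already a $k'$-half-theta with each path of length $2$, and therefore contains a $k$-half-theta ending in the required set.
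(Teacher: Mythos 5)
Your proof follows the same decomposition as the paper's: Ramsey reduction to the independent-set case, then a three-way split on a minimal witness $A'$, invoking Lemmas \ref{high degree}, \ref{finding path with many good vertices}, \ref{using path with many good vertices}, and \ref{final step} exactly as the paper does. However, you are noticeably more careful on one point, and this care actually matters. After ruling out a high-degree vertex via Lemma \ref{high degree}, the max degree $d$ of $A'$ satisfies $d \leq R(k,k)-1$, strictly less than $R(k,k)$. You use this $d$ both in the threshold $d^{R(k,k)}$ for the number of degree-$>2$ vertices and in the budget for Lemma \ref{final step}, which gives $k(d^{R(k,k)+1}+d) < k\cdot R(k,k)^{R(k,k)+1} \leq k'$. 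The paper, by contrast, writes ``we may assume max degree of $A'$ is $R(k,k)$'' and implicitly applies Lemma \ref{final step} with $d = R(k,k)$, which would require a creature of size $k(R(k,k)^{R(k,k)+1} + R(k,k)) = k\cdot R(k,k)^{R(k,k)+1} + k\cdot R(k,k)$, strictly exceeding $k'$ once $k\geq 2$. Your version is what makes the stated $k'$ actually suffice. You also explicitly dispose of the degenerate case $|A'|=1$ (where $d=0$ and Lemma \ref{finding path with many good vertices} is vacuous or even incorrect), whereas the paper glosses over it; your observation that the star on $v$ and the $k'$ independent $x_i$'s is already a $k'$-half-theta is the right fix. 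In short: same approach, but your parameter bookkeeping is sharper and corrects a small arithmetic slip in the paper, and you close the $|A'|=1$ corner case.
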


\begin{proof}
Let $k'$ = $k \cdot R(k,k)^{R(k,k)+1} + R(k,k)$. Assume that $G$ contains a $R(k',k')$-creature ($A$, $B$, $\{x_1, x_2,\ldots,x_{R(k',k')}\}$, $\{y_1, y_2, \ldots,y_{R(k',k')}\}$). Apply Ramsey's Theorem to $\{x_1, x_2,\ldots,x_{R(k',k')}\}$. If Ramsey's Theorem returns a clique of size $k'$ or more then we have that $G[A \cup \{x_1, x_2,\ldots,x_{R(k',k')}\}]$ contains a $k$-half-prism ending in $\{x_1, x_2,\ldots,x_{R(k',k')}\}$, so we can assume that Ramseys theorem returns an independent set of size at least $k'$. By relabeling the $x_i$'s and $y_i$'s if follows that $G$ contains a $k'$-creature ($A$, $B$, $\{x_1, x_2,\ldots,x_{k'}\}$, $\{y_1, y_2, \ldots,y_{k'}\}$) where $\{x_1, x_2,\ldots,x_{k'}\}$ is an independent set.

Let $A'$ be a minimally connected induced subgraph of $G[A]$ such that $\{x_1, x_2,\ldots,x_{k'}\} \subset N(A')$. If $A'$ contains a vertex of degree R$(k,k)$ in $A'$, then by Lemma \ref{high degree} $G[A \cup \{x_1, x_2,\ldots,x_{k'}\}]$ contains a $k$-half-theta ending in $\{x_1, x_2,\ldots,x_{k'}\}$. So we may assume max degree of $A'$ is R$(k,k)$.

If $A'$ contains R$(k,k)^{R(k,k)}$ vertices of degree greater than two, then there is an induced path of $A'$ that contains R$(k,k)$ vertices of degree greater than two by Lemma \ref{finding path with many good vertices}. Then by Lemma \ref{using path with many good vertices} $G[A \cup \{x_1, x_2,\ldots,x_{k'}\}]$ contains a $k$-half-quasi-ladder or a $k$-half-prism ending in $\{x_1, x_2,\ldots,x_{k'}\}$. So we may assume that $A'$ has maximum degree R$(k,k)$ and contains fewer than R$(k,k)^{R(k,k)}$ vertices of degree greater than two. It then follows from Lemma \ref{final step} that $G[A \cup \{x_1, x_2,\ldots,x_{k'}\}]$ contains a $k$-half-quasi-ladder ending in $\{x_1, x_2,\ldots,x_{k'}\}$.
\end{proof}

The next three lemmas show how to clean up a half-quasi-ladder into a half-ladder, half-theta, or theta. Their proofs are similar to those of lemmas \ref{path domination helper} \ref{path domination of nice set}, and \ref{domination or creature/ladder} respectively, although the conclusions we draw from them are somewhat different.

\begin{lemma}\label{one side path domination helper}
Let ($G$, $S$, $P$, $v$) be a tuple where $G$ is a graph, $v \in G$, $S \subset V(G)$, and $P$ is an induced path of $G$ such that ($S \cup \{v\}$) and $V(P)$ are disjoint. Assume $G[V(P) \cup S \cup \{v\}]$ does not have a $k$-half-theta ending in $S$, then there is a set $X \subset S \cup V(P) \cup \{v\}$ of size at most $4k-1$ such that $N(S-N[X]) \cap N(v) \cap V(P) = \emptyset$, and no vertex of $S-N[X]$ is neighbors with $v$.
\end{lemma}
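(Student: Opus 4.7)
The plan is to mirror the greedy construction used in the proof of Lemma~\ref{path domination helper} very closely, modifying it only so that once the iteration completes $k$ rounds the resulting structure is a $k$-half-theta ending in $S$ inside $G[V(P)\cup S\cup\{v\}]$, contradicting the hypothesis. I would fix a numbering $1,\dots,|V(P)|$ of the vertices of $P$ along the path, write $n(p)$ for the position of $p\in V(P)$, initialize $X_1=\{v\}$, and at step $i\ge 1$ define
\[
S_i \;=\; \{\,s\in S-N[X_i] \;:\; N(s)\cap N(v)\cap V(P)\ne\emptyset\,\},
\]
labeling each $s\in S_i$ by the smallest number of a neighbor in $N(v)\cap V(P)$. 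If $S_i=\emptyset$, stop and return $X:=X_i$. Otherwise pick $s_i\in S_i$ of largest label, let $p_i$ be its smallest-numbered neighbor in $N(v)\cap V(P)$, let $q_i\in V(P)$ be the vertex at position $n(p_i)-1$ (when it exists), and set $X_{i+1}=X_i\cup\{s_i,p_i,q_i\}$.

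If this process ever completes step $k$, I claim the length-$3$ induced paths $P_i=v\text{-}p_i\text{-}s_i$ for $i=1,\dots,k$ form a $k$-half-theta ending in $S$, a contradiction. Each $P_i$ is induced because $v\in X_i$ and $s_i\in S-N[X_i]$, so $v\not\sim s_i$. The same telescoping argument as in Lemma~\ref{path domination helper} forces $n(p_1)>n(p_2)>\cdots>n(p_k)$, and the invariant $\{v,s_1,p_1,\dots,s_{i-1},p_{i-1}\}\subseteq X_i$ delivers every required non-adjacency $s_i\not\sim v$, $s_i\not\sim s_j$, and $s_i\not\sim p_j$ for $j\ne i$. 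The only structural property not automatic from that template is $p_i\not\sim p_j$; this is exactly what the added vertex $q_i$ controls. Because $q_i\in X_{i+1}$, every subsequent $s_j$ (for $j>i$) fails to be adjacent to $q_i$, so the smallest-numbered neighbor $p_j$ of $s_j$ in $N(v)\cap V(P)$ cannot lie at position $n(p_i)-1$, giving $n(p_j)\le n(p_i)-2$. Hence $|n(p_i)-n(p_j)|\ge 2$ for all $i\ne j$, and because $P$ is induced this forces $p_i\not\sim p_j$. Together these non-adjacencies make $P_i-v$ and $P_j-v$ anti-complete whenever $i\ne j$, which is precisely the half-theta property.

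Therefore the process must halt at some step $j\le k$ with $S_j=\emptyset$. Setting $X=X_j$, the stopping condition yields $N(s)\cap N(v)\cap V(P)=\emptyset$ for every $s\in S-N[X]$, and $v\in X_1\subseteq X$ ensures that no such $s$ is adjacent to $v$. Each iteration adds at most three new vertices while the initial step contributes only $v$, so $|X|\le 1+3(k-1)\le 4k-1$. The main obstacle, and the only real departure from Lemma~\ref{path domination helper}, is ensuring $p_i p_j\notin E$: the unmodified greedy process permits $p_i$ and $p_{i+1}$ to be consecutive on $P$, which is harmless when one only wants a $k$-creature but fatal for the internally-disjoint structure a $k$-half-theta demands. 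Inserting the single extra vertex $q_i$ per step is the cheapest way to block this, and its cost comfortably fits under the bound $4k-1$.
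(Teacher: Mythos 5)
Your proposal is correct and follows the same overall plan as the paper's proof: run the greedy process from Lemma~\ref{path domination helper} and argue that if it survives too many rounds one can read off a $k$-half-theta ending in $S$, contradicting the hypothesis. The only variation is how the possible adjacency $p_i\sim p_{i+1}$ along $P$ is neutralized: the paper runs up to $2k$ rounds and then keeps only the even-indexed pairs $(p_{2a},s_{2a})$, whereas you add a blocker vertex $q_i$ per round (forcing $n(p_j)\le n(p_i)-2$ for $j>i$) and stop after $k$ rounds; both give $|X|\le 4k-1$, with yours yielding the slightly sharper $3k-2$.
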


\begin{proof}
Let $G$, $S$, $P$, and $v$ be as in the statement of this lemma. Number the vertices of $P$ 1 through $|V(P)|$ so that the vertex numbered $i$ is neighbors with the vertices numbers $i-1$ and $i+1$. We now consider the following process to build the set desired set $X$ such that $N(S-N[X]) \cap N(v) \cap V(P) = \emptyset$ and $X \subset S \cup V(P) \cup \{v\}$. 

We do the following for the first step of the process. Let $X_1 = \{v\}$, and let $S_1 = \{s : s\in S-N(X_1)$ and $N(s) \cap N(v) \cap V(P) \neq \emptyset\}$ (i.e. $S_1$ is the set of vertices of $S-N(X_1)$ that share a neighbor with $v$ in $P$). Label the vertices of $S_1$ by the lowest numbered vertex it is neighbors with in $V(P) \cap N(v)$. Let $s_1$ be a highest labeled vertex in $S_1$, and let $p_1$ be $s_1$'s lowest numbered neighbor in $N(v) \cap V(P)$. This completes the first step.

For the $i^{th}$ step we do the following. Let $X_i = X_{i-1} \cup \{s_{i-1}, p_{i-1}\}$, and let $S_i = S_{i-1} - N[X_i]$ and label the vertices of $S_i$ by the lowest vertex it sees in $V(P) \cap N(v)$ (the vertices of $S_i$ inherit their labels from their labels in $S_{i-1}$). Let $s_i$ be a highest labeled vertex in $S_i$ and let $p_i$ be $s_i$'s lowest neighbor in $N(v) \cap V(P)$. Note by how we selected $v$, $s_1$, $p_1$, $s_2$, $p_2$, $\ldots$ $s_i$, $p_i$ that $s_a$, $1 \leq a \leq i$, cannot be neighbors with $p_b$ if $a > b$ since $p_b$ would be in $X_a$ and therefore $s_a$ would not be in $S_a$, and $s_a$ cannot have a neighbor with $p_b$ if $a < b$ since that would contradict either $p_a$ being $s_a$'s lowest numbered neighbor in $N(v) \cap P$ or $s_a$ being a highest labeled vertex in $S_a$. Hence, we then have that among these vertices $s_j$ is only neighbors with $p_j$ for $1 \leq j \leq i$, and $v$ is only neighbors with $p_j$ for $1 \leq j \leq i$. $p_{2i}$ could be neighbors with $p_{2i+1}$ and/or $p_{2i-1}$ since they could be consecutive vertices on the path $P$, but $p_{2i}$ cannot be neighbors with $p_{2j}$. It follows that the set $ \{v\} \cup \{p_2, p_4,\ldots,p_{2c}\} \cup \{s_2, s_4,\ldots,p_{2c}\}$, $2c \leq i$, forms a $c$-half-theta in $G[V(P) \cup S \{v\}]$ ending in $S$.

We continue this process until we reach an $S_j$ that is empty. By what we noted in the previous paragraph, this process cannot go past the $2k^{th}$ step if $G[V(P) \cup S \cup \{v\}]$ does not contain a $k$-half-theta ending in $S$. Set $X$ to be $X_j$. Since $S_j$ is empty, it follows $N(S-N[X]) \cap N(v) \cap V(P) = \emptyset$. We also have that no vertex of $S-N[X]$ is neighbors with $v$ since $v \in X$ and $|X| \leq 4k-1$ since $j \leq 2k$ and since the first step adds a single vertex and each step after that only adds two vertices.
\end{proof}

\begin{lemma}\label{one side path domination of nice set}
Let $(G,S,P)$ be a tuple such that $G$ is a graph, $S \subset V(G)$ such that $S$ cannot be dominated by $4k^2$ vertices and $P$ is an induced path disjoint from $S$ that dominates $S$. Assume $G[V(P) \cup S]$ does not contain a $k$-half-theta ending in $S$. Then there exists a subset $S'$ of $S$ of size $k$ such that no vertex of $P$ has more than one neighbor in $S'$.
\end{lemma}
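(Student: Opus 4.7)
The plan is to mimic the iterative construction used in Lemma \ref{path domination of nice set}, but in the simpler one-sided setting where we only need to control a single path $P$. I will build $S'$ one element at a time, maintaining along with it a ``forbidden'' vertex set $Z$ of controlled size such that no vertex of $S - N[Z]$ shares a $P$-neighbor with any already-chosen element of $S'$, and appealing to Lemma \ref{one side path domination helper} at each step to grow $Z$ by only $O(k)$ vertices.

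More precisely, I maintain sets $S_{i-1} \subseteq S$ and $Z_{i-1} \subseteq V(G)$ satisfying the invariants: (a) $|S_{i-1}| = i-1$; (b) $|Z_{i-1}| \leq 4k(i-1)$; (c) no vertex of $P$ has more than one neighbor in $S_{i-1}$; and (d) for every $s \in S_{i-1}$ and every $s' \in S - N[Z_{i-1}]$, no vertex of $P$ is a common neighbor of $s$ and $s'$. At stage $i$ (with $1 \leq i \leq k$) the size of $Z_{i-1}$ is strictly less than $4k^2$, so the hypothesis that $S$ cannot be dominated by $4k^2$ vertices guarantees the existence of some $s \in S - N[Z_{i-1}]$. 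Invariant (d) then says adding $s$ to $S_{i-1}$ preserves (c).

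To reestablish invariant (d) for the next round I apply Lemma \ref{one side path domination helper} to the tuple $(G, S - N[Z_{i-1}], P, s)$. Any $k$-half-theta in $G[V(P) \cup (S - N[Z_{i-1}]) \cup \{s\}]$ ending in $S - N[Z_{i-1}]$ would, since $S - N[Z_{i-1}] \subseteq S$, be a $k$-half-theta in $G[V(P) \cup S]$ ending in $S$, contradicting the standing hypothesis; hence the helper lemma applies and returns a set $X \subseteq S \cup V(P) \cup \{s\}$ of size at most $4k - 1$ (with $s \in X$) such that no element of $(S - N[Z_{i-1}]) - N[X]$ shares a $P$-neighbor with $s$ or is itself adjacent to $s$. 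Setting $S_i = S_{i-1} \cup \{s\}$ and $Z_i = Z_{i-1} \cup X$ gives $|Z_i| \leq 4k(i-1) + (4k-1) < 4ki$, and invariants (a)--(d) all continue to hold.

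Iterating $k$ times yields $S_k \subseteq S$ of size $k$ on which no vertex of $P$ has more than one neighbor, and we take $S' = S_k$. The only delicate point -- and the step I expect to require the most care in write-up -- is verifying that the helper lemma's hypothesis (absence of a $k$-half-theta ending in the currently-relevant subset) is indeed available at each stage; this comes down to the monotonicity observation above that a half-theta ending in any subset of $S$ is automatically a half-theta ending in $S$, so the global no-half-theta hypothesis survives throughout the induction.
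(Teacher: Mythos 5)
Your proposal is correct and follows essentially the same iterative construction as the paper: maintain a partial solution $S_{i-1}$ together with a ``forbidden'' set $Z_{i-1}$ of size $O(ki)$, at each stage pick any $s \in S - N[Z_{i-1}]$ (possible because $S$ is not dominated by $4k^2$ vertices), and invoke Lemma~\ref{one side path domination helper} to grow $Z$ while eliminating all future vertices that could share a $P$-neighbor with $s$. Your explicit verification that the helper's no-$k$-half-theta hypothesis survives the restriction to $S - N[Z_{i-1}]$ is a clean addition, but the overall argument is the same as the paper's.
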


\begin{proof}

Let $G$, $S$, and $P$ be as in the statement of the lemma. Assume that we have an independent set of vertices vertices $S_{i-1}$ of size $i-1$, $i \leq k$, and a set $Z_{i-1}$ of size at most $4k(i-1)$, with the properties that no vertex $S-N[Z_{i-1}]$ is neighbors with a vertex in $S_{i-1}$, and any vertex in $P$ that is neighbor with some vertex in $S_{i-1}$ has no other neighbors in $S_{i-1}$ nor in $S-N[Z_{i-1}]$. We will use this to produce a set $S_i$ of size $i$ and $Z_i$ of size at most $4k^2i$ with the same properties. Note that the empty set satisfies the conditions of $S_0$.

Let $S' = S-N[Z_{i-1}]$. Let $s$ be some vertex in $S'$, since $i \leq k$ and $S$ cannot be dominated by $4k^2$ vertices, such an $s$ must exists. We can then apply Lemma \ref{one side path domination helper} using $(G, S', P, s$) and to get a set $X$ of size at most $4k-1$ such that $(S'-N[X]) \cap N(s) \cap V(P) = \emptyset$ and no vertex of $S'-N[X]$ is neighbors with $s$. We then set $S_i$ = $S_{i-1} \cup \{s\}$ and $Z_i = Z_{i-1} \cup X$ and we can see these sets satisfies the required properties.

Since the empty set satisfies the properties of $S_0$ and $S$ cannot be dominated by $4k^2$ vertices, we can generate the set $S_k$ which has size $k$ and no vertex of $P$ has more than one neighbor in $S_k$.
\end{proof}

\begin{lemma}\label{one side domination or creature/ladder} 
Let $T$ be an induced $4k[2(4k)^{k+1}]^2$-half-quasi-ladder of a graph $G$ ending in $X$. Assume $T$ does not have an induced $k$-half-theta ending in $X$ and assume that $G$ does not contain an induced $k$-theta. Then $T$ contains a $k$-half ladder ending in $X$.
\end{lemma}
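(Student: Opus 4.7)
The plan is to adapt the iterative pivot-or-independent-set procedure from the proof of Lemma~\ref{domination or creature/ladder}, using the single backbone path $P$ of $T$ in place of both $L$ and $R$ there. Write $T$ as $P$ together with auxiliary paths rooted at vertices $v_1, \ldots, v_N$, where $N = 4k[2(4k)^{k+1}]^2$, and set $V = \{v_1, \ldots, v_N\}$. The foundational observation is that because $T$ forbids a $k$-half-theta ending in $X$, no vertex $p \in V(P)$ can be adjacent to $k$ or more members of $V$: otherwise prepending $p$ to the auxiliary paths rooted at $k$ such members would yield exactly a $k$-half-theta with apex $p$ ending in $X$, using that $P$ is anti-complete with each auxiliary path except at its root.

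Fix a left-to-right numbering of $V(P)$, and for each $v \in V$ let $\ell(v)$ denote its rightmost $P$-neighbor. Initialise $V_1 = V$ and $B_1 = P$. At step $i$, build an auxiliary digraph $\mathrm{AUX}_i$ on $V_i$ with arc $v_a \to v_b$ iff $n(\ell(v_a)) > n(\ell(v_b))$ and $v_a$ has some $B_i$-neighbor $x$ with $n(x) < n(\ell(v_b))$; such an arc records that $v_a$'s $B_i$-neighborhood straddles $\ell(v_b)$. If $\mathrm{AUX}_i$ has maximum in-degree at most $|V_i|/(4k^2)$, invoke Lemma~\ref{directed independent set} to extract an independent set $S^* \subseteq V_i$ of size at least $k^2$ and stop. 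Otherwise, pick a pivot $v_j$ of large in-degree, let $V_{i+1}$ be its $\mathrm{AUX}_i$-in-neighbors (every member of which has a $B_i$-neighbor strictly left of $\ell(v_j)$), let $Q_i$ be the subpath of $B_i$ strictly left of $\ell(v_j)$, and let $B_{i+1}$ be the subpath strictly right of $\ell(v_j)$. By construction $Q_i$ dominates $V_{i+1}$, the $Q_j$'s are pairwise anti-complete as disjoint subpaths of $P$, and by the nesting of the $V_i$'s each $Q_j$ dominates every subsequent $V_{i'}$; the chosen $N$ keeps $|V_i|$ large through $k$ rounds.

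If the procedure halts at some step $i \leq k$ with $S^*$, then exactly as in Lemma~\ref{domination or creature/ladder} the absence of $\mathrm{AUX}_i$-arcs inside $S^*$ forces any two $x,y \in S^*$ to have intervals on $P$ that either coincide in $\ell$-value or lie on opposite sides of each other (sharing at most one common endpoint). Using the observation that no vertex of $P$ is adjacent to $k$ distinct members of $V$, we greedily sweep $S^*$ in order of $\ell$-value, selecting a vertex and discarding the at most $k-1$ others that share one of its $P$-neighbors; this leaves at least $k^2/k = k$ vertices whose intervals on $P$ are truly non-interleaving and whose $P$-neighborhoods are pairwise vertex-disjoint. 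The auxiliary paths rooted at these $k$ vertices, together with $P$, form the desired $k$-half-ladder ending in $X$.

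Conversely, if the procedure survives all $k$ rounds, we have produced $k$ pairwise anti-complete dominating sub-paths $Q_1, \ldots, Q_k$ of $P$ and a residual set $V_{k+1}$ of size at least $2k^2$. For any pair $v_a, v_b \in V_{k+1}$ each $Q_j$ supplies a $v_a$-to-$v_b$ path lying entirely in $\{v_a, v_b\} \cup V(Q_j)$ (taking the $Q_j$-neighbor of $v_a$ closest to the $Q_j$-neighbor of $v_b$ and walking along $Q_j$); these $k$ paths are pairwise anti-complete and internally disjoint. A final pigeonhole inside $V_{k+1}$, again leveraging the at-most-$k-1$ bound on $V$-neighbors of any $P$-vertex, refines the pair $(v_a, v_b)$ so that every one of the $k$ paths has length at least $4$ and no internal vertex is adjacent to both $v_a, v_b$; Lemma~\ref{many dominating sets} then yields a $k$-theta in $G$, contradicting the hypothesis and forcing the procedure to halt within $k$ rounds. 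The main obstacle is exactly this last paragraph's bookkeeping: verifying that after $k$ rounds of shrinking by a factor of $4k^2$, the residual $V_{k+1}$ is still large enough to admit a pair $(v_a, v_b)$ simultaneously avoiding common $Q_j$-neighbors and producing sub-paths of $Q_j$ of length at least $4$, which is exactly what forces the initial size $N = 4k[2(4k)^{k+1}]^2$ in the statement.
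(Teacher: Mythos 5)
Your plan diverges from the paper's in one structural way: the paper first invokes Lemma~\ref{one side path domination of nice set} to replace the raw root set by a subset $S'$ in which no vertex of $P$ has more than one neighbor, and only then runs the auxiliary-digraph pivoting; you skip that cleaning step and try to recover its effect afterwards. In the independent-set branch this can be patched: AUX-independence forces any two $x,y\in S^*$ with $n(\ell(x))>n(\ell(y))$ to share at most the single vertex $\ell(y)$, every sharer of a fixed $s$ must be adjacent to $\ell(s)$, and the at-most-$(k-1)$ bound then lets a greedy sweep by $\ell$-value extract $k$ pairwise non-sharing roots from $k^2$. That part works and is a genuinely different (and arguably nicer) route than the paper's.

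The gap is in the ``survives $k$ rounds'' branch. To apply Lemma~\ref{many dominating sets} you must exhibit $v_a,v_b\in V_{k+1}$ such that for \emph{every} $Q_j$ no vertex of $Q_j$ is adjacent to both; you propose to get this by a ``pigeonhole'' that you acknowledge you have not carried out. It is not merely bookkeeping. Without the pre-cleaning, $v_a$ may have arbitrarily many neighbors on each $Q_j$, and each such neighbor can contribute up to $k-2$ forbidden partners $v_b$. Build the auxiliary conflict graph $H$ on $V_{k+1}$ with $v_a\sim v_b$ iff they share a $Q_j$-neighbor for some $j$: $H$ is a union of $\sum_j|V(Q_j)|$ cliques of size at most $k-1$, and unions of small cliques on $m$ vertices can be $K_m$ no matter how large $m$ is. So no amount of enlarging $V_{k+1}$ (hence no choice of the initial constant $N$) by itself guarantees an anti-edge in $H$; the conclusion genuinely requires the structural fact --- guaranteed by the paper's Lemma~\ref{one side path domination of nice set} pre-processing --- that no vertex of the backbone is adjacent to two of the surviving roots, which immediately makes every pair usable for Lemma~\ref{many dominating sets}. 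Your proof as written is therefore incomplete at exactly this point.

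Two smaller notes: (i) the ``length at least $4$'' condition you worry about is already handled inside Lemma~\ref{many dominating sets} (the paths there are augmented by $a$ and $b$), so it does not belong in your pigeonhole; (ii) your shrinking factor $1/(4k^2)$ and target independent set size $k^2$ are internally consistent with Lemma~\ref{directed independent set}, and the size $N$ in the statement does survive $k$ rounds with room to spare, so the quantitative side of your plan is fine once the structural gap is repaired by inserting the cleaning step up front.
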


\begin{proof}
Let $G$, $T$, and $X$ be as in the statement of the lemma. Let $P$ be the backbone path of $T$ and $P_1, P_2,\ldots, P_{4k[2(4k)^{k+1}]^2}$ be its auxiliary paths, where the endpoints of $P_i$ are $v_i$ and $x_i$, and the $x_i$'s are the endpoints of $T$, so $x_i \in X$. Let $S$ = $\{v_1, v_2,\ldots, v_{4k[2(4k)^{k+1}]^2}\}$. Clearly, if any vertex of $P$ is neighbors with $k$ distinct $v_i$'s, then $T$ contains a $k$-half-theta ending in $X$. It follows that since $T$ does not have a $k$-half-theta ending in $X$, the vertices of $S$ cannot be dominated by less than $4[2(4k)^{k+1}]^2$ vertices in $T$. Also, if $G[S \cup V(P)]$ contain a $k$-half-theta ending in $S$, then it contains a $k$-half-theta ending in $X$, so we can apply Lemma \ref{one side path domination of nice set} with $(G, P, S)$ to get a set $S' \subset S$ of size $2(4k)^{k+1}$ such that no vertex of $P$ is neighbors with more than one vertex in $S'$. It follows that by only taking the paths $P_i$ such that $v_i \in S'$, that these $P_i$'s together with $P$, form a $2(4k)^{k+1}$-half-quasi-ladder where no vertex of $P$ has a neighbor with more than one vertex in any of the $P_i$'s. We will call this $2(4k)^{k+1}$-half-quasi-ladder $T'$, we will call its backbone path $P'$ so $P' = P$, and we will call the auxiliary paths $P_1', P_2',\ldots, P_{2(4k)^{k+1}}'$ where the endpoints of $P_i'$ are $v_i'$ and $x_i'$, and the $x_i'$'s are the endpoints of $T'$, so $x_i' \in X$. We use $S'$ as before to denote the set of $v_i'$'s.

Now, number the vertices of $P'$ 1 through $|V(P')|$ so that the vertex numbered $i$ is neighbors with the vertices numbers $i-1$ and $i+1$. For a vertex $x$ in $P'$ we will use the notation $n(x)$ to denote the number it has been given in $P'$. For every $s_j \in S'$ let $p_j \in P'$ be the highest numbered neighbor $s_j$ has in $P$. We now set $P_1 = P'$ and $S_1 = S'$. We will consider the following process, where we will try to produce a large independent set in an auxiliary graph related to some $P_i$ and $S_i$ which we will then use to produce a $k$-half-ladder. We will show  this process cannot go past $k$ iterations if $T$ does not have a $k$-half-theta ending in $X$. We will ensure that at the $i^{th}$ step that $V(P_i) \subset V(P')$, $S_i \subset S'$, $|S_i| \geq 2(4k)^{k-i+2}$, $P_i$ is an induced path, and if $s_j \in S_i$ then $p_j \in P_i$. We will also produce induced subpaths $D_i$ of $P$ such that the $D_i$'s are anti-complete with respect to one another and the vertices of $D_i$ will dominate $S_j$ if $i < j$.

At the $i^{th}$ step we do as follows. Create an auxiliary directed graph, $AUX_i$, whose vertex set is $S_i$ and there is an edge from $s_a \in S_i$ to $s_b \in S_i$ if the following condition holds
\begin{enumerate}

    \item $n(p_a) > n(p_b)$ and $s_a$ has  a neighbor $x$ in $P'$ such that $n(x) < n(p_b)$
    
\end{enumerate}

If the maximum in degree of $AUX_i$ is at most $\frac{1}{4k}|S_i|$ then we stop. If $i \leq k$ (which we will show must happen) then since $|S_i| \geq 2(4k)^{k-i+2}$ this gives an independent set of size at least $k$ by Lemma \ref{directed independent set}. If there is an $s_j \in S_i$ with in degree at least $\frac{1}{4k}|S_i|$ then for at least $\frac{1}{4k}$ fraction of the vertices of $S_i$ must satisfy (1) playing the role of $s_a$ while $s_j$ plays the role of $s_b$. Call this set of vertices $S_{i+1}$. If $s_j \in S_i$ with in degree at least $\frac{1}{4k}|S_i|$ then we do as follows. Define $D_i$ to be the subpath of $P_i$ that is made up of vertices with numbers less than $n(p_j)$. Set $P_{i+1}$ to be the vertices of $P_i$ with numbers greater than $n(p_j)$. This concludes the $i^{th}$ step.

It can then be seen that $V(P_{i+1}) \subset V(P)$, $S_{i+1} \subset S$, $|S_{i+1}| \geq 2(4k)^{k-i+1}$, $P_{i+1}$ is an induced path, and if $s_j \in S_{i+1}$ then $p_j \in P_{i+1}$ as required. Furthermore, it can be seen that any of the previously $D_j$'s that have been produced in this process ($j \leq i$) dominate all vertices of $S_{i+1}$. Since the $D_j$'s are disjoint and anti complete,  By Lemma \ref{many dominating sets} then, this process cannot go past the $k^{th}$ iteration without producing a $k$-theta in $G$.

We conclude there is some step $j \leq k$ such that the auxiliary graph $AUX_j$ has max in-degree less than $\frac{1}{4k}|S_j|$, and since $|S_j| \geq 8k$ it therefore has an independent set of size $k$ by Lemma \ref{directed independent set}. Let $S^*$ denote such an independent set.

 We claim by only taking the paths $P_i'$ such that $v_i' \in S^*$, that these $P_i'$'s together with $P'$, form a $k$-half-ladder. Let $x,y \in S^*$ and let $a,b$ be the highest and lowest numbered neighbors of $x$ in $L$ respectively, and assume that $y$ has a neighbor $c$ on the induced path of $L$ that has $a$ and $b$ as its endpoints. If $y$'s highest numbered neighbor in $L$ is greater than $n(a)$ then $y$ has an edge to $x$ in $AUX_j$. If $y$'s highest numbered neighbor in $L$ is less than $n(a)$, then $x$ has an edge to $y$. It follows that taking the $P_i'$ such that $v_i' \in S^*$ together with $P'$, form a $k$-half-ladder.
\end{proof}

\begin{corollary}\label{final step corollary}
Let $k$ be a natural number. There exists a natural number $k'$ large enough so that if $G$ is be a graph that contains a $k'$-creature ($A$, $B$, $\{x_1, x_2,\ldots,x_{k'}\}$, $\{y_1, y_2, \ldots,y_{k'}\}$), then $G[A \cup \{x_1, x_2,\ldots,x_{k'}\}]$ contains an induced $k$-half-theta, $k$-half-prism, or $k$-half-ladder ending in $\{x_1, x_2,\ldots,x_{k'}\}$ or $G$ contains an induced $k$-theta.
\end{corollary}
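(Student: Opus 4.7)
The plan is to chain together Lemma~\ref{final step 2} and Lemma~\ref{one side domination or creature/ladder} with appropriately chosen constants. First I choose a ``target'' size $m := 4k[2(4k)^{k+1}]^2$, which is the size of half-quasi-ladder that Lemma~\ref{one side domination or creature/ladder} requires as input in order to extract either a $k$-half-theta ending in the endpoint set, a $k$-theta in $G$, or a $k$-half-ladder ending in the endpoint set. With $m$ fixed, I set $k_0 := m \cdot R(m,m)^{R(m,m)+1} + R(m,m)$, which is the value of ``$k'$'' appearing in the statement of Lemma~\ref{final step 2} when we apply that lemma with $m$ in place of $k$. Finally I set $k' := R(k_0, k_0)$. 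This is the constant the corollary will use.

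Given a $k'$-creature $(A,B,\{x_1,\dots,x_{k'}\},\{y_1,\dots,y_{k'}\})$ in $G$, Lemma~\ref{final step 2} (applied with $m$ in the role of $k$) immediately yields an induced subgraph of $G[A \cup \{x_1,\dots,x_{k'}\}]$ that is one of: an $m$-half-theta, an $m$-half-prism, or an $m$-half-quasi-ladder, each ending in (a subset of) $\{x_1,\dots,x_{k'}\}$. In the first two cases I am done: since $m \ge k$, discarding all but $k$ of the paths of the $m$-half-theta (resp. $m$-half-prism) yields a $k$-half-theta (resp. $k$-half-prism) ending in $\{x_1,\dots,x_{k'}\}$, which is one of the desired outcomes.

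The remaining case is the $m$-half-quasi-ladder case, and here the chosen $m$ was exactly tailored so that Lemma~\ref{one side domination or creature/ladder} applies directly: the input size requirement $4k[2(4k)^{k+1}]^2$ equals $m$. Applying that lemma to the $m$-half-quasi-ladder $T$ ending in $X := \{x_1,\dots,x_{k'}\}$ splits into three sub-cases by its contrapositive form. Either $T$ already contains a $k$-half-theta ending in $X$, or $G$ contains an induced $k$-theta, or the lemma's conclusion fires and $T$ contains an induced $k$-half-ladder ending in $X$. Each of these is exactly one of the outcomes the corollary demands, so the proof is complete.

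The only non-routine aspect is making sure the constants line up so that the size of half-quasi-ladder produced by Lemma~\ref{final step 2} matches the size required by Lemma~\ref{one side domination or creature/ladder}, and that the ``ends in $X$'' property is preserved through both applications; both are immediate from the statements of those lemmas, so I do not anticipate any genuine obstacle.
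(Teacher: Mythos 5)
Your proof is correct and follows the same route as the paper: apply Lemma~\ref{final step 2} with target size $m = 4k[2(4k)^{k+1}]^2$ to obtain an $m$-half-theta, $m$-half-prism, or $m$-half-quasi-ladder ending in $\{x_1,\dots,x_{k'}\}$, dispose of the first two cases directly, and feed the third into Lemma~\ref{one side domination or creature/ladder}. The only difference is cosmetic: you spell out the constants $k_0$ and $k'$ explicitly where the paper just asserts that such a $k'$ exists.
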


\begin{proof}
By Lemma \ref{final step 2} there exists a $k'$ large enough so that if $G$ contains a $k'$-creature ($A$, $B$, $\{x_1, x_2,\ldots,x_{k'}\}$, $\{y_1, y_2, \ldots,y_{k'}\}$) then $G[A \cup \{x_1, x_2,\ldots,x_{k'}\}]$ contains an induced  $4k[2(4k)^{k+1}]^2$-half-theta, $4k[2(4k)^{k+1}]^2$-half-prism, or a $4k[2(4k)^{k+1}]^2$-half-quasi-ladder, ending in $\{x_1, x_2,\ldots,x_{k'}\}$. If $G[A \cup \{x_1, x_2,\ldots,x_{k'}\}]$ contains a $4k[2(4k)^{k+1}]^2$-half-theta or a $4k[2(4k)^{k+1}]^2$-half-prism ending in $\{x_1, x_2,\ldots,x_{k'}\}$ then we are done. If $G[A \cup \{x_1, x_2,\ldots,x_{k'}\}]$ contains a $4k[2(4k)^{k+1}]^2$-half-quasi-ladder ending in $\{x_1, x_2,\ldots,x_{k'}\}$ then we may apply Lemma \ref{one side domination or creature/ladder} to get that either $G[A \cup \{x_1, x_2,\ldots,x_{k'}\}]$ contains a $k$-half-ladder ending in $\{x_1, x_2,\ldots,x_{k'}\}$ or $G$ contains a $k$-theta.
\end{proof}

\begin{lemma}\label{k creature implies theta}   
Let $k$ be a natural number. Then there exists a natural number $k'$ large enough so that if $G$ is a graph that contains an $k'$-creature ($A$, $B$, $\{x_1, x_2,\ldots,x_{k'}\}$, $\{y_1, y_2, \ldots,y_{k'}\}$), then $G$ contains an induced $k$-theta, $k$-prism, $k$-pyramid, $k$-ladder-theta, $k$-ladder-prism, or a $k$-ladder.
\end{lemma}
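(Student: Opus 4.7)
The plan is to apply Corollary~\ref{final step corollary} twice --- once to side $A$ and once to side $B$ of a sufficiently large creature --- and then to glue the two resulting half-structures along the matching edges $x_iy_i$ to form one of the six target graphs. The key observation motivating this is that each of the six target structures (theta, prism, pyramid, ladder-theta, ladder-prism, ladder) is precisely what one obtains by taking a half-theta, half-prism, or half-ladder on each of two sides and identifying them at a common set of matched endpoints.

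Concretely, I would choose $k'$ large enough that a first application of Corollary~\ref{final step corollary} to side $A$ of a $k'$-creature $(A,B,X,Y)$ either yields a $k$-theta in $G$ (and we are done) or yields a $k^*$-half-theta, $k^*$-half-prism, or $k^*$-half-ladder ending in a subset $X_1 \subset X$ of size $k^*$. Here $k^*$ must itself be chosen large enough that a second application of the corollary to the sub-creature $(A,B,X_1,Y_1)$ on side $B$, where $Y_1 \subset Y$ is the matched subset, produces a $k$-half-theta, $k$-half-prism, or $k$-half-ladder ending in a subset $Y_2 \subset Y_1$ of size $k$. Let $X_2 \subset X_1$ be the subset matched to $Y_2$. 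I would then restrict the $A$-side half-structure by keeping only the $k$ paths ending in $X_2$; each of the three half-structure types is clearly preserved under such a restriction, so this leaves me with a pair of $k$-half-structures, one living in $G[A \cup X_2]$ and the other in $G[B \cup Y_2]$, ending at the same $k$ matched pairs.

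Finally, I would glue the two half-structures along the $k$ matching edges. The type of the resulting object is dictated entirely by the pair of half-structure types: (theta, theta) gives a $k$-theta; a \{theta, prism\} pair gives a $k$-pyramid; (prism, prism) gives a $k$-prism; a \{theta, ladder\} pair gives a $k$-ladder-theta; a \{prism, ladder\} pair gives a $k$-ladder-prism; and (ladder, ladder) gives a $k$-ladder. To verify the resulting object actually satisfies its target definition I would use three ingredients: the creature axioms ($A$ anti-complete with $B \cup Y$, $B$ anti-complete with $A \cup X$, and $x_iy_j$ an edge iff $i=j$), which supply the anti-completeness between ``parallel'' combined paths; the path-length bookkeeping, for instance gluing two half-theta paths (each on at least $2$ vertices) along an $x_jy_j$-edge yields at least $4$ vertices, meeting the $k$-theta length requirement; and, for the ladder cases, that the no-interleaved-neighborhoods condition on the backbone of each half-ladder is inherited by the glued ladder.

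The hardest part will be the careful case-by-case verification that each of the nine gluings really satisfies the elaborate technical definitions in Figure~\ref{forbidden graphs}. In particular, for the prism and pyramid cases one must handle the degenerate case where a half-prism has paths consisting of a single vertex --- so its clique vertices already lie in $X$ or $Y$ --- and confirm that the stipulation ``$a_ib_j$ is an edge only when $i=j$ and $|P_i|=2$'' in the $k$-prism definition remains consistent with the creature non-edges $x_iy_j$ for $i \neq j$; for the ladder-type targets one must also check that no unforeseen edges are introduced between the $A$-side half-ladder's backbone and the $B$-side structure that would violate the backbone-ordering condition.
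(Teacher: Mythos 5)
Your proposal is correct and takes essentially the same approach as the paper: two applications of Corollary~\ref{final step corollary} (one per side of a sufficiently large creature), followed by restricting the larger half-structure to the matched endpoints and gluing along the $x_iy_i$ edges. The only cosmetic difference is the order of the two applications (you do $A$ then $B$; the paper does $B$ then $A$), and you are more explicit than the paper about the restriction step and the nine-way case analysis of the gluing, which the paper compresses into a single ``it follows'' sentence.
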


\begin{proof}
Let $k$ be a natural number. By Corollary \ref{final step corollary} there exists a $k'$ large enough so that if $G$ is a graph that contains an $k'$-creature ($A$, $B$, $\{x_1, x_2,\ldots,x_{k'}\}$, $\{y_1, y_2, \ldots,y_{k'}\}$), then $G[A \cup \{x_1, x_2,\ldots,x_{k'}\}]$ contains an induced $k$-half-theta, $k$-half-prism, or $k$-half-ladder ending in $\{x_1, x_2,\ldots,x_{k'}\}$ or $G$ contains an induced $k$-theta. It then also follows from Corollary \ref{final step corollary} there exists a $k''$ large enough so that if $G$ is a graph that contains an $k''$-creature ($A$, $B$, $\{x_1, x_2,\ldots,x_{k''}\}$, $\{y_1, y_2, \ldots,y_{k''}\}$), then $G[B \cup \{y_1, y_2,\ldots,y_{k''}\}]$ contains an induced $k'$-half-theta, $k'$-half-prism, or $k'$-half-quasi-ladder ending in $\{y_1, y_2,\ldots,y_{k''}\}$ or $G$ contains an induced $k'$-theta.

So, assume that $G$ is a graph that contains an $k''$-creature ($A$, $B$, $\{x_1, x_2,\ldots,x_{k''}\}$, $\{y_1, y_2, \ldots,y_{k''}\}$). If $G$ contains an induced $k'$-theta then we are done, assume that $G[B \cup \{y_1, y_2,\ldots,y_{k''}\}]$ contains an induced $k'$-half-theta, $k'$-half-prism, or $k'$-half-ladder ending in $\{y_1, y_2,\ldots,y_{k''}\}$. By relabeling the $x_i$'s and $y_i$'s we can then assume that $G$ contains a $k'$ creature ($A'$, $B'$, $\{x_1, x_2,\ldots,x_{k'}\}$, $\{y_1, y_2, \ldots,y_{k'}\}$) such that $G[B' \cup \{y_1, y_2,\ldots,y_{k'}\}]$ is a $k'$-half-theta, $k'$-half-prism, or $k'$-half-ladder. Then applying Corollary \ref{final step corollary} gives us that $G[A \cup \{x_1, x_2,\ldots,x_{k'}\}]$ contains an induced $k$-half-theta, $k$-half-prism, or $k$-half-ladder ending in $\{x_1, x_2,\ldots,x_{k}\}$. 
It follows that $G$ must contain a $k$-theta, a $k$-prism, $k$-pyramid, $k$-ladder-theta, $k$-ladder-prism, or a $k$-ladder.
\end{proof}

The following two lemmas will be used in Lemma \ref{exponentially many mins seps} to establish that if ${\cal F}$ is a family of graphs defined by a finite number of forbidden induced subgraphs and ${\cal F}$ allows for at least one of $k$-thetas, $k$-prisms, $k$-pyramids, $k$-ladder-thetas, or $k$-ladder-prisms, for arbitrarily large $k$, then we can ensure it contains these graphs where their number of vertices only grow linearly with respect to $k$, and therefore have exponentially many minimal separators. These two lemmas achieve this by showing that a graph in ${\cal F}$ has certain paths that are too long, then we can contract part of those paths and maintain that the resulting graph remains in ${\cal F}$.


\begin{lemma}\label{short paths}
Let $G$ be a graph and let $H$ be a graph with $|V(H)| \leq h$, where $h > 5$. Assume that $G$ contains an induced path $P$ of length at least $5h$ where all internal vertices of $P$ have degree 2 in $G$. Then there exists an edge $e$ in $G$ such that if $G^e$ contains $H$ as an induced subgraph, then so does $G$. 
\end{lemma}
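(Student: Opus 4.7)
The plan is to choose $e=p_jp_{j+1}$ to be any edge whose two endpoints are internal to $P$ (such an edge exists since $|V(P)|\geq 5h$), and show that this single choice works for every way $G^e$ could contain $H$ as an induced subgraph. Fix any $S\subseteq V(G^e)$ with $|S|\leq h$ inducing $H$ in $G^e$. I will produce a corresponding $S'\subseteq V(G)$ of the same size inducing $H$ in $G$, obtained by a small ``near-identity'' shift of $S$ along $V(P)$.

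The key structural fact is that every internal vertex of $P$ has all of its $G$-neighbors on $P$ (degree two), so the only $V(P)$-vertices with outside neighbors are $p_1$ and $p_{5h}$. Split $S=T\cup S_P$ with $S_P=S\cap V(P^e)$ and $T=S\setminus V(P^e)$. In $G^e$, the contraction vertex $w$ has neighbors exactly $p_{j-1}$ and $p_{j+2}$ (both still in $V(P^e)$) and has no neighbor in $T$, because $p_j,p_{j+1}$ were internal. Hence every $T$-to-$S_P$ edge in $G^e[S]$ is incident to $p_1$ or $p_{5h}$.

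Write $P^e=r_1r_2\cdots r_{5h-1}$ in its natural order ($r_i=p_i$ for $i\leq j-1$, $r_j=w$, $r_i=p_{i+1}$ for $i\geq j+1$), and list the $P^e$-positions of $S_P$ as $n_1<n_2<\cdots<n_m$ with $m\leq h$. I will define $\phi:S_P\hookrightarrow V(P)$ by $\phi(r_{n_k})=p_{n_k+d(n_k)}$, where $d:\{n_1,\ldots,n_m\}\to\{0,1\}$ is a non-decreasing function satisfying $d(n_k)=d(n_{k+1})$ whenever $n_{k+1}=n_k+1$, with the endpoint requirements $d(n_1)=0$ when $n_1=1$ and some vertex of $T$ is adjacent to $p_1$ in $G$, and $d(n_m)=1$ when $n_m=5h-1$ and some vertex of $T$ is adjacent to $p_{5h}$ in $G$. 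A valid $d$ always exists: the only subtle case is when both endpoint conditions are active, in which case $d$ must jump from $0$ to $1$ at some gap $n_{k+1}\geq n_k+2$; such a gap must exist, because otherwise $n_1,\ldots,n_m$ would be $5h-1$ consecutive integers, contradicting $m\leq h<5h-1$.

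The main conceptual obstacle is the case $\{w,p_{j-1},p_{j+2}\}\subseteq S_P$: the induced path $p_{j-1}-w-p_{j+2}$ of $G^e$ cannot be realized in $G$ by substituting a single vertex for $w$, since no vertex of $G$ is adjacent to both $p_{j-1}$ and $p_{j+2}$. The shift $\phi$ resolves this cleanly: the three consecutive positions $j-1,j,j+1$ force $d$ to be constant on them, so the segment lifts either to $p_{j-1},p_j,p_{j+1}$ (if $d=0$ there) or to $p_j,p_{j+1},p_{j+2}$ (if $d=1$), both of which are length-$2$ induced paths in $P$. To finish, I will verify that $S'=\phi(S_P)\cup T$ induces $H$ in $G$: within $S_P$ the map $\phi$ preserves adjacency because consecutive-in-$P^e$ positions receive equal $d$-values (so their images are consecutive on $P$), while non-consecutive positions map to images at distance at least $2$ on $P$ (as $d$ is non-decreasing); the $T$-$T$ edges are untouched; and the $T$-$S_P$ adjacencies are preserved because $\phi$ fixes $p_1$ and $p_{5h}$ precisely when those vertices are responsible for a $T$-edge in $S$, while every other image (including $\phi(w)\in\{p_j,p_{j+1}\}$) is an internal vertex of $P$ and therefore has no $T$-neighbor in $G$.
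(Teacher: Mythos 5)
Your proof is correct, and it takes a genuinely different route from the paper's. The paper chooses $e$ to be the \emph{middle} edge of $P$ and then argues existentially: any copy of $H$ in $G^e$ can be rerouted within $G^e$ to a copy avoiding the contraction vertex (non-path components of $H$ are forced by degree considerations to stay near the ends of the contracted path, while path components contributing to $G^e$ have fewer than $h$ vertices and can be slid into a sub-interval away from the middle); that rerouted copy then lives unchanged in $G$. You instead fix an arbitrary interior edge and, given a copy $S$ of $H$ in $G^e$ (possibly through $w$), build an isomorphic copy $S'$ in $G$ \emph{directly} via a monotone $\{0,1\}$-valued shift $d$ of the $P^e$-positions, with the constraints that $d$ is constant on consecutive positions (so adjacency on the path is preserved and the segment $p_{j-1}wp_{j+2}$ lifts coherently to a length-two path in $G$), and that $d$ pins the path endpoints whenever they carry an edge into $T=S\setminus V(P^e)$. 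The gap-existence argument ($m\le h<|V(P^e)|$) guarantees $d$ can be chosen when both endpoint constraints bind. Your version is more constructive and needs none of the middle-edge arithmetic (and would go through with the weaker hypothesis $|V(P)|\ge h+2$); the paper's version is shorter on the page but leaves the rerouting of path components somewhat informal. One small presentational note: you write the endpoint conditions in terms of adjacency ``in $G$,'' but since $p_1,p_{5h}$ and $T$ are disjoint from the contracted edge, their mutual adjacencies in $G$ and $G^e$ coincide, so this is consistent with working in $G^e[S]$; stating it that way would remove a moment's hesitation for the reader.
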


\begin{proof}
Let $G$ be a graph, let $H$ be a graph with $|V(H)| \leq h$ where $h > 5$, and let $P$ be an induced path of $G$ of length at least $5h$ where all internal vertices of $P$ have degree 2, say $P$ = $p_1, p_2,\ldots, p_{5h}$. Let $e$ be the edge between $p_{\lceil \frac{5h-1}{2} \rceil}$ and $p_{\lceil \frac{5h+1}{2} \rceil}$. Let $v$ denote the new vertex $p_{\lceil \frac{5h-1}{2} \rceil}$ and $p_{\lceil \frac{5h+1}{2} \rceil}$ create when $e$ is contracted in $G$ to make $G^e$, and let $P'$ be what the path $P$ becomes after contracting $e$ in $G$, so $P' = p_1, p_2,\ldots, p_{\lceil \frac{5h-1}{2} \rceil -1}, v, p_{\lceil \frac{5h+1}{2} \rceil + 1}, \ldots, p_{5k}$. Assume that $G^e$ contains $H$ as an induced subgraph. We will show that there exists a set $X \subset V(G^e)$ that induces $H$ such that $v \notin X$. It will then follows that $G$ contains an induced $H$.

Any component of $H$ that is not an induced path can only contain vertices outside of $P'$ or within distance $h$ of either the endpoints of $P'$ since all internal vertices of $P'$ have degree 2 in $G^e$. For the components of $H$ that are paths, since there are at most $h$ vertices among these components, we can ensure that the vertices of $X$ that we use to induce these components either do not belong to $P'$ or only contain vertices from the subpaths $p_{h+2}, p_{h+3}, \ldots, p_{\lceil \frac{5h-1}{2} \rceil -1}$ and $p_{\lceil \frac{5h+1}{2} \rceil + 1}, p_{\lceil \frac{5h+1}{2} \rceil}, \ldots, p_{4h-2}$. It follows that $v \notin X$.
\end{proof}

\begin{lemma}\label{small neighborhoods}
Let $G$ be a graph and let $H$ be a graph with $|V(H)| \leq h$, where $h > 5$. Assume that $G$ contains an induced path $P$ of length $5h[(h+1)(5h)^{2h+2} + 1]$ such that the only neighbor the vertices of $P$ might have outside of $P$ is a single vertex $v$. Then there exists a subpath path $P'$ of $P$ such that if $G^{P'}$ contains $H$ as an induced subgraph, then so does $G$. 
\end{lemma}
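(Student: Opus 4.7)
The plan is to split on how densely $v$ is adjacent to $P$. Write $s_i = 1$ if $p_i \sim v$ in $G$ and $s_i = 0$ otherwise, so that $(s_1, \ldots, s_L)$ encodes all edges leaving $P$. First, if some $5h$ consecutive entries equal $0$, then those $5h$ vertices together with the two neighboring vertices of $P$ form an induced subpath of length $5h+2$ whose $5h$ internal vertices all have degree exactly $2$ in $G$ (they are internal to $P$ and non-adjacent to $v$). Applying Lemma \ref{short paths} to this subpath yields an edge $e$ whose contraction preserves $H$-freeness, and $P'$ is taken to be the pair of endpoints of $e$.

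Otherwise no run of $5h$ zeros exists, so consecutive $v$-neighbors along $P$ lie within distance $5h$ of each other. If $p_{i_1}, \ldots, p_{i_m}$ are the $v$-neighbors in order, then the length of $P$ forces $m \geq (h+1)(5h)^{2h+2}$. For each index $k$ with $k+2h+2 \leq m$, I define the signature $\sigma(k) := (i_{k+1}-i_k, i_{k+2}-i_{k+1}, \ldots, i_{k+2h+2}-i_{k+2h+1}) \in \{1, \ldots, 5h\}^{2h+2}$. Since there are at most $(5h)^{2h+2}$ possible signatures, pigeonhole extracts indices $j_1 < j_2 < \cdots < j_{h+1}$ sharing a common signature, and I set $P'$ to be the subpath of $P$ running from $p_{i_{j_1}}$ through $p_{i_{j_2}}$.

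To verify that $P'$ preserves $H$-freeness, the key observation is that signature agreement forces the sequence of $v$-adjacencies encountered walking to the right of the contracted vertex $w$ in $G^{P'}$ to coincide, for at least $2h+2 > h$ steps, with the sequence walking to the right of $p_{i_{j_1}}$ (equivalently of $p_{i_{j_\ell}}$ for any $\ell$) in $G$, while on the left the vertex sequences literally agree. Any induced copy of $H$ in $G^{P'}$ that avoids $w$ already lives in $G - V(P')$; for a copy $\phi$ with $\phi(u) = w$, the set $T := \phi(V(H) \setminus \{u\}) \cap N_{G^{P'}}(w)$ is a subset of $\{p_{i_{j_1}-1}, p_{i_{j_2}+1}, v\}$, and whenever $T$ omits at least one of $p_{i_{j_1}-1}, p_{i_{j_2}+1}$ the local pattern agreement supplies a single vertex of $V(P')$ that substitutes for $w$ and yields an induced copy of $H$ in $G$.

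The principal obstacle is the remaining subcase in which $T$ contains both boundary vertices $p_{i_{j_1}-1}$ and $p_{i_{j_2}+1}$: no single vertex of $G$ is adjacent to both, so pointwise substitution fails and one must instead re-route the on-path fragment of $\phi$ that passes through $w$ through an alternate signature-equivalent $1$-position $p_{i_{j_\ell}}$ with $\ell \geq 3$ together with its $2h+2$-deep matched window. The budget $h+1 \geq |V(H)|+1$ is precisely what guarantees that the at most $h-2$ other on-path vertices of $\phi(V(H))$ can intersect at most $h-2$ of the remaining $h-1$ signature-matched windows, so some $p_{i_{j_\ell}}$ together with its window is disjoint from $\phi(V(H))$ and furnishes a fresh patch of $P$ that can receive the re-routed fragment. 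Checking that the re-routed vertex set induces an isomorphic copy of $H$ in $G$, via the coincidence of local $v$-adjacency patterns supplied by the signature match, is the main technical step.
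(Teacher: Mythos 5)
The high-level strategy you use is the same as the paper's: split on whether $P$ has a run of $5h$ consecutive non-neighbors of $v$ (in which case Lemma~\ref{short paths} applies), otherwise encode the pattern of $v$-adjacency along $P$, apply pigeonhole to find repeated windows, contract between two of them, and relocate the fragment of $H$ through the contracted vertex $w$. However, as written your argument has several genuine gaps.

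First, your pigeonhole only extracts $h+1$ start-indices $j_1 < \cdots < j_{h+1}$ with equal signature, with no spacing constraint. For the counting in the relocation step (``the at most $h-2$ other on-path vertices\ldots{} can intersect at most $h-2$ of the remaining $h-1$ signature-matched windows'') you are implicitly assuming that distinct indices give disjoint windows, which fails if, say, the $j_\ell$ are consecutive integers: then a single vertex of $\phi(V(H))$ can lie in all the windows simultaneously. The paper explicitly insists on non-overlapping repetitions, and also uses $h+2$ of them (one is consumed by the contraction, one is consumed by $w \in X$, leaving $h$ to absorb the remaining $\leq h-1$ vertices of $X$), so your window budget is also off by one under the window-blocking accounting you describe.

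Second, and more structurally: you contract from $p_{i_{j_1}}$ to $p_{i_{j_2}}$, i.e.\ from the \emph{start} of one window to the \emph{start} of the next, so $w$ sits at the left edge of the matched pattern. You correctly observe that the signature only controls the $v$-adjacencies to the \emph{right} of $w$; to the left, the vertices ``literally agree'' with what is left of $p_{i_{j_1}}$. But your proposed relocation in the hard subcase moves the $w$-fragment to some other matched position $p_{i_{j_\ell}}$ with $\ell \geq 3$, and nothing guarantees that the $v$-adjacency pattern to the left of $p_{i_{j_\ell}}$ agrees with the pattern to the left of $w$. The paper avoids this by contracting midpoint-to-midpoint, so $w$ lands at least $h$ gap-positions from either end of the combined window $P_1^*$, which therefore carries the full matched signature on \emph{both} sides of $w$; the component $X'$ through $w$ is then contained in $P_1^*$ and can be copied verbatim into a disjoint matched window $P_i^*$. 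Your start-to-start contraction does not give this two-sided control, so the relocation to $p_{i_{j_\ell}}$ cannot be justified as stated. Finally, you explicitly defer the verification that the re-routed vertex set induces $H$ (``is the main technical step''), so that check is not actually carried out.
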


\begin{proof}
Let $G$ be a graph and let $H$ be a graph with $|V(H)| \leq h$, where $h > 5$. Assume that $G$ contains an induced path $P$ of length $5h[(h+1)(5h)^{2h+2} + 1]$ such that the only neighbor the vertices of $P$ might have outside of $P$ is a single vertex $v$. Let $a,b$ be the endpoints of $P$. Now divide $P$ into a sequence of subpaths $P_1, P_2, \ldots, P_k$ each of length at least 2 so that all internal vertices of $P_i$ have degree 2 in $G$, all endpoints of $P_i$ are either a vertex of degree 3 or $a$ or $b$, $P_1$ has $a$ and one of its endpoints, $P_k$ has $b$ as one of its endpoints, and $P_i$ shares one of its endpoints with $P_{i+1}$ (i.e. these are subpaths that whose endpoints are $a, b$, or the vertices that are neighbors with $v$ and are sequenced going from one end of $P$ to the other). We define a second sequence $a_1, a_2, \ldots a_k$ where $a_i$ = $|E(P_i)|$. If any $a_i \geq 5h$ then the result follows from Lemma \ref{short paths}, so we can assume for all $i$ that $a_i \leq 5h$. It then follows that $k$ is at least $(h+1)(5h)^{2h+2} + 1$, and therefore by the pigeonhole principle there must be a continuous subsequence of length $2h+2$ that is repeated at least $h+2$ times, where none of these continuous subsequences overlap with each other. Let $S$ = $s_0, s_1, \ldots, s_{2h+1}$ be this repeated subsequence. So we have $h+2$ sequences for $1 \leq i \leq h+2$, $A_i = a_{j_i}, a_{j_i+1}, \ldots, a_{j_i+2h+1}$ where for $1 \leq m \leq h+2$ and $c$, $0 \leq c \leq 2h+1$, $a_{j_m+c} = s_{c}$ and no part of $A_m$ overlaps with some other $A_n$  (so $|j_n - j_m| \geq 2h+2$) and $j_m > j_n$ if $m > n$. Fix the values denoted by $j_m$ for $1 \leq m \leq h+2$.

We wish to combine the first half of $A_1$ with the second half of $A_2$ by contracting a path in $P$. Let $x$ be the endpoint of $P_{j_1+h+1}$ that it shares with $P_{j_1+h}$, and let $y$ be the endpoint $P_{j_2+h+1}$ shares with $P_{j_2+h}$. Let $P'$ be the subpath of $P$ that has $x$ and $y$ as its endpoints. Let $w$ be the vertex that gets created when contracting the path $P'$ in $G$ to get $G^{P'}$ and let all the subpaths $P_i$ of $P$ in $G$ that were not contained in $P'$ retain their labels in $G^{P'}$, so $P_{j_1+h}$ and $P_{j_2+h+1}$ share $w$ as an endpoint, and let the $a_i$'s retain their same meaning as long as $P_i$ was not a subpath of $P'$. It follows that $G^{P'}$ has $h$ sequences for $3 \leq i \leq h+2$, $A_i = a_{j_i}, a_{j_i+1}, \ldots, a_{j_i+2h+1}$ where for $1 \leq m \leq h+2$ and $c$, $0 \leq c \leq 2h+1$, $a_{j_m+c} = s_{c}$ and no part of $A_m$ overlaps with some other $A_n$  (so $|j_n - j_m| \geq 2h+2$) and $j_m > j_n$ if $m > n$. Furthermore, $A_1$ and $A_2$ have now been combined to give $A' = a_{j_1}, a_{j_1+1}, \ldots, a_{j_1+h}, a_{j_2+h+1}, a_{j_2+h+2}, \ldots, a_{j_2+2h+1}$ so that $a_{j_1+c} = s_{c}$ for $0 \leq c \leq h$ and $a_{j_2+c} = s_{c}$ for $h+1 \leq c \leq 2h+1$.
We will show that if there exists a set $X \subset V(G^{P'})$ that induces $H$ in $G^{P'}$ then we can require $w \notin X$. The result then follows since if $w \notin X$ then the vertices that correspond to $X$ in $G$ induced an $H$ in $G$.

So, assume $X \subset V(G^{P'})$ and induces $H$. If $w \notin X$ then we are done, so assume $w \in X'$ for some connected component $X'$ of $X$. For $i$ with $3 \leq i \leq h+1$, let $P_i^*$ denote the path induced by $V(P_{j_i}), V(P_{j_i+1}), \ldots, V(P_{j_i+2h+1})$ in $G^{P'}$, so $P_i^*$ is the path that naturally corresponds to $S_i$, and let $P_1^*$ denote the path induced by $V(P_{j_1}), V(P_{j_1+1}), \ldots, V(P_{j_1+h}), V(P_{j_2+h+1}), V(P_{j_2+h+2}), \ldots, V(P_{j_2+2h+1})$, so $P_1^*$ naturally corresponds with $A'$.  
Then since $X'$ has at most $h$ vertices there is at least one $P_i^*$ that contains no vertex of $X$ and since $X'$ is connected and contains $w$, all vertices of $X' \cap P$ must be completely contained in $V(P_1^*)$ since $w$ is at least distance $h$ from either endpoint of $P^*_1$. It follows that we can replace the vertices of $X' \cap P$, which must be completely contained in the interal vertices of $P_1^*$, with the corresponding vertices in a $P_i^*$ that contains no vertices of $X$ and still maintain that the vertices of $X$ induce $H$. Now $w \notin X$ and the result then follows.
\end{proof}

\begin{lemma}\label{exponentially many mins seps}
Let ${\cal F}$ be a family of graphs determined by a finite number of forbidden induced subgraphs. Then if ${\cal F}$ does not forbid all $k$-thetas, $k$-prisms, $k$-pyramids, $k$-ladder-thetas, $k$-ladder-prisms, and $k$-ladders for arbitrarily large $k$, then ${\cal F}$ is feral. 
\end{lemma}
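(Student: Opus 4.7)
Let ${\cal H}$ be the finite set of forbidden induced subgraphs defining ${\cal F}$, set $h = \max\{6, \max_{H\in{\cal H}}|V(H)|\}$, and recall that ${\cal F}$ is hereditary. The hypothesis says that for arbitrarily large $k$, at least one of the six listed structure types of parameter $k$ appears as an induced subgraph in some member of ${\cal F}$. By pigeonhole I fix a single type $T$ (one of $k$-theta, $k$-prism, $k$-pyramid, $k$-ladder-theta, $k$-ladder-prism, $k$-ladder) such that for arbitrarily large $k$ the family ${\cal F}$ contains a $T$-structure $G_0$ of parameter $k$ itself as a member. My plan is to shrink $G_0$, keeping both the $T$-structure on parameter $k$ and membership in ${\cal F}$, until $|V(G_0)| \leq g(h)\cdot k$ for some function $g$; combined with the standard lower bound of $2^{k-2}$ on the number of minimal separators of each of the six types, this gives at least $c^n$ minimal separators for $c = 2^{1/(2g(h))} > 1$ and $n = |V(G_0)|$ arbitrarily large, proving ${\cal F}$ is feral.

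In every type, $G_0$ has $O(k)$ anchor vertices (the $a$, $b$, $a_i$, $b_i$ of the definitions) and $O(k)$ rib induced paths whose internal vertices have degree two in $G_0$, plus, in the three ladder variants, one or two backbone paths $L,R$. For each rib of length exceeding $5h$ I would contract an interior edge via Lemma~\ref{short paths}: the contrapositive keeps the contracted graph in ${\cal F}$, and the contraction does not disturb any structural condition. For the backbones, the separation condition on the $a_i$-neighborhoods in $L$ forces each $L$-vertex to have at most one $a_i$ as external neighbor and forces the convex hulls in $L$ of the neighborhoods of distinct $a_i$'s to be pairwise disjoint subpaths. This partitions $L$ into $O(k)$ zones (one per $a_i$) and $O(k)$ gaps of pure degree-two vertices. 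Gaps are handled by Lemma~\ref{short paths}. For a long zone $L_i$, I would apply Lemma~\ref{small neighborhoods} to the induced subgraph $G_0[V(L_i)\cup\{a_i\}]$, which lies in ${\cal F}$ by heredity and in which $L_i$ is an induced path whose vertices have $a_i$ as their only external neighbor; the lemma returns a subpath $P'$ strictly interior to $L_i$ whose contraction preserves forbidding every $H\in{\cal H}$ in the slice.

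The hard part is upgrading this to $G_0$ itself: I need $G_0^{P'}$, not just the contracted slice, to lie in ${\cal F}$. The key observation is that internal vertices of $L_i$ have no neighbors in $G_0$ outside $V(L_i)\cup\{a_i\}$, so the merged vertex $w$ in $G_0^{P'}$ is non-adjacent to every vertex of $V(G_0)\setminus(V(L_i)\cup\{a_i\})$. Hence any induced copy of some $H\in{\cal H}$ in $G_0^{P'}$ that avoids $w$ already lies in $G_0$; and an induced copy of $H$ that uses $w$ can be re-embedded in $G_0$ by invoking the pattern-swap construction inside the proof of Lemma~\ref{small neighborhoods} to replace $w$ with an alternative repeated-pattern subpath of $L_i$, which is automatically non-adjacent to everything outside $V(L_i)\cup\{a_i\}$. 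Once this is established, iterating rib, gap, and zone contractions to saturation bounds $|V(G_0)|$ by $g(h)\cdot k$, and letting $k\to\infty$ finishes the proof.
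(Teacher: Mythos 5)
Your proposal follows essentially the same strategy as the paper's proof: after pigeonholing on a single type, use Lemma~\ref{short paths} to shorten rib paths (degree-$2$ interiors) and Lemma~\ref{small neighborhoods} to shorten backbone subpaths, then conclude there is a $T$-structure of parameter $k$ on $O_h(k)$ vertices, giving $2^{\Omega(n)}$ minimal separators. The one place you diverge is the ``slice-then-upgrade'' step: you apply Lemma~\ref{small neighborhoods} to the induced subgraph $G_0[V(L_i)\cup\{a_i\}]$ and then re-argue that the returned contraction also preserves ${\cal F}$-membership of the whole graph $G_0$. The paper instead applies Lemma~\ref{small neighborhoods} directly to $G_0$ with $P$ a long zone of the backbone and $v=a_i$. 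That direct application requires noticing that, although the lemma's stated hypothesis (``the only neighbor the vertices of $P$ might have outside of $P$ is $v$'') is literally violated at the two endpoints of a backbone subpath, the lemma's proof only ever swaps vertices in the deep interior of $P$, so it goes through under the weaker hypothesis that only the \emph{interior} vertices of $P$ have $v$ as their sole external neighbor. Your ``upgrade'' step is effectively re-deriving this observation from inside the lemma's proof, so your route is sound but is a detour rather than a genuinely different argument. One small point in your favor: you explicitly cover the $k$-ladder case (two backbones), which the paper's write-up of this lemma omits, handling $k$-ladders only indirectly elsewhere via $k$-claw/$k$-paw.
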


\begin{proof}
Let ${\cal F}$ be a family of graphs determined by a finite number of forbidden induced subgraphs, and let ${\cal H}$ be a set of forbidden subgraphs that define ${\cal F}$. Let let $h > 5$ be a number such that for any $H \in {\cal H}$, $|V(H)| \leq h$. First assume that ${\cal F}$ allows for either $k$-thetas $k$-prisms, or $k$-pyramids for arbitrarily large $k$. Then by Lemma \ref{short paths} we can ensure that all paths with internal vertices all having degree 2 of the $k$-thetas $k$-prisms, or $k$-pyramids are at most $5h$ (we keep on contracting the appropriate edges given by Lemma \ref{short paths} until no path where all internal vertices have degree 2 have length more than $5h$) and therefore ${\cal F}$ contains a $k$-theta $k$-prism, or $k$-pyramid with at most $5h \cdot k$ vertices. Since a $k$-theta, $k$-prism, or $k$-pyramid must have at least $2^k$ minimal separators, it follows that there exists a $c > 1$ such that for every natural number $N$ ther exists a $G \in {\cal F}$ such that $|V(G)| = n > N$ and the number of minimal separators in $G$ is at least $c^n$.

Now assume that ${\cal F}$ allows for $k$-ladder-thetas or $k$-ladder-prisms for arbitrarily large $k$. Every $k$-ladder-theta and $k$-ladder-prism contains a $k$-half-ladder and by Lemma \ref{short paths} we can ensure that all paths with internal vertices all having degree 2 of the $k$-ladder-theta or $k$-ladder-prism are at most $5h$ and by Lemma \ref{small neighborhoods} we can ensure that the backbone path of the corresponding $k$-half-ladder has length at most $[5h(h+1)(5h)^{2h+1} + 1] \cdot k$ by contracting the appropriate edges and paths if necessary while still guaranteeing the resulting graph belongs to ${\cal F}$ (Lemma \ref{small neighborhoods} gives us that if there is a subpath of length over $[5h(h+1)(5h)^{2h+1} + 1]$ of the backbone path that only has one neighbor outside of the backbone path, there there exists a subpath of the backbone path that we can contract and still maintain that the resulting graph is a $k$-ladder-theta or $k$-ladder-prism contained in ${\cal F}$). Since $k$-ladder-thetas and $k$-ladder-prisms have at least $2^k$ minimal separators it follows that there exists a contains $c > 1$ such that for every natural number $N$ there exists a $G \in {\cal F}$ such that the number of minimal separators in $G$ is at least $c^n$. It follows that ${\cal F}$ is feral.
\end{proof}

The following lemma shows why it is necessary to forbid $k$-paw and $k$-claw graphs for a family of graphs defined by a finite number of forbidden induces subgraphs to be strongly-quasi-tame. Figure \ref{paw claws graph} gives a picture of the two graphs constructed in the following lemma.

\begin{lemma}\label{k-claws are untame}
Let ${\cal F}$ be a family of graphs determined by a finite number of forbidden induced subgraphs. Then if ${\cal F}$ does not forbid $k$-claws and $k$-paws for some natural number $k$, then ${\cal F}$ is feral.
\end{lemma}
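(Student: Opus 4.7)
The plan is to establish ferality by exhibiting, for each $m$, a graph $G^*_m \in {\cal F}$ on $n = O(mk)$ vertices having at least $k^m \geq c^n$ minimal separators for a constant $c > 1$ depending only on $k$; letting $m \to \infty$ then supplies the required infinite family. Let $h$ be the maximum number of vertices of any graph in the finite forbidden family ${\cal H}$ defining ${\cal F}$, and fix some $k > h$ (obtainable by enlarging the given $k$) for which the $k$-claw and $k$-paw lie in ${\cal F}$. I would present two parallel constructions, one built from long-claws and one built from long-paws, corresponding to the two graphs depicted in Figure~\ref{paw claws graph}, and apply whichever matches the hypothesis.

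In the claw case, $G^*_m$ is a ``ladder of long-claws'': two rail paths $L_1, L_2$ carrying $m$ anchor vertices $\ell_1, \ldots, \ell_m$ on $L_1$ and $r_1, \ldots, r_m$ on $L_2$, with consecutive anchors separated by more than $h$ edges, and at each pair $(\ell_i, r_i)$ a long-claw of arm length $k$ is attached by identifying two of its arm-endpoints with $\ell_i$ and $r_i$ respectively (its third arm dangles). The paw-case construction is analogous, with long-paws as rungs, but each anchor $\ell_i$ (and $r_i$) is replaced by a small triangle through which the rail passes and to which the rung arm attaches. This anchor-triangle modification is essential: unlike in the claw case, the paw hypothesis does not preclude $K_{1,3}$ from lying in ${\cal H}$ (long-paws contain no induced $K_{1,3}$), and a naive identification-based attachment would create an induced $K_{1,3}$ at every anchor.

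The main obstacle is verifying $G^*_m \in {\cal F}$. The key structural claim I would prove is that every connected induced subgraph of $G^*_m$ on at most $h$ vertices embeds as an induced subgraph of a single long-claw (resp. long-paw) of arm length $k$. Since the rail spacing exceeds $h$ and each rung arm has length $k > h$, any two branch vertices of $G^*_m$ are either at graph-distance more than $h$ or else lie inside a common triangle (an anchor or rung triangle, arising only in the paw case); hence a connected induced subgraph on at most $h$ vertices meets branch vertices of at most one triangle's local neighborhood, which is precisely a sub-long-claw (resp. sub-long-paw). Granting this, if some $H \in {\cal H}$ were induced in $G^*_m$, each of its fewer-than-$k$ connected components would be induced in a long-claw (resp. long-paw); since the $k$ constituents of the $k$-claw (resp. $k$-paw) are pairwise disjoint and anti-complete, placing those components into distinct constituents would realize $H$ as an induced subgraph of the $k$-claw (resp. $k$-paw), contradicting $k$-claw (resp. $k$-paw) $\in {\cal F}$.

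For the separator count, let $u$ and $v$ be appropriate endpoints of $L_1$ and $L_2$. Every $u$-$v$-path traverses some rung along its spine from $\ell_i$ to $r_i$, which contains at least $k$ interior vertices. Selecting one interior spine-vertex per rung yields a minimal $u$-$v$-separator -- separation holds because each rung's spine is broken at the selected vertex, and minimality holds because restoring any single selected vertex reopens that rung's spine -- and distinct selections yield distinct separators. This produces at least $k^m$ minimal $u$-$v$-separators, and since $n = |V(G^*_m)| = O(mk)$, this is at least $c^n$ for $c = k^{1/(Ck)} > 1$ with $C$ an absolute constant, establishing ferality of ${\cal F}$.
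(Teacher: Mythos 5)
Your proposal is correct, and it establishes the lemma via a genuinely different construction from the paper's. Where the paper builds a \emph{binary tree} of $2^c - 1$ long-claws (resp.\ long-paws), glued leaf-to-leaf so that two parallel copies of the tree are joined at the bottom level, you instead build a \emph{ladder}: two long rails with $m$ rungs, each rung a long-claw (resp.\ long-paw). Both constructions yield exponentially many minimal separators on linearly many vertices (you count $k^m$ on $O(mk)$ vertices; the paper counts $2^{2^c}$ on $O(h \cdot 2^{c+1})$ vertices), which is all ferality requires. The verification that the constructed graph lies in ${\cal F}$ follows the same essential logic in both approaches -- every induced subgraph on at most $h$ vertices is a forest of paths and subdivided claws (resp.\ paws), hence an induced subgraph of the allowed $k$-claw (resp.\ $k$-paw) -- though you phrase it component-by-component, embedding each connected piece into a distinct constituent of the $k$-claw, whereas the paper states it directly.

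The notable structural difference arises in the paw case. The paper's construction only ever \emph{glues} two degree-one leaf vertices or adds an edge between two leaves; the resulting vertex has degree two, so no new branch vertex is created and no induced $K_{1,3}$ can appear. Your ladder instead \emph{identifies} a rung-arm endpoint with a rail vertex, which produces a degree-three vertex with three pairwise nonadjacent neighbors, i.e.\ an induced $K_{1,3}$; you correctly observed that this would be fatal in the paw case (since long-paws are $K_{1,3}$-free, $K_{1,3}$ may lie in ${\cal H}$) and patched it by replacing each anchor with a triangle, so that the local pattern is a sub-long-paw rather than a claw. This anchor-triangle fix is sound, but it is an extra wrinkle that the paper's gluing-only construction sidesteps for free. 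A minor wording slip: you write ``fix some $k > h$ for which the $k$-claw \emph{and} $k$-paw lie in ${\cal F}$''; the hypothesis only guarantees that one of the two lies in ${\cal F}$ for each $k$, but since you present two parallel constructions and apply whichever hypothesis is available, this does not affect the argument.

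Finally, your ladder is aesthetically closer to the paper's $k$-twisted-ladder counterexample from Section~\ref{counter example section}, while the paper's binary tree here is self-contained; neither is clearly preferable, and both are correct.
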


\begin{proof}
Let ${\cal F}$ be a family of graphs determined by a finite number of forbidden induced subgraphs, and let ${\cal H}$ be a set of forbidden subgraphs that define ${\cal F}$. Let $h > 5$ be a number such that for any $H \in {\cal H}$, $|V(H)| \leq h$. First we assume that ${\cal F}$ allows $k$-claw for arbitrarily large $k$. We will construct a graph with many minimal separators. Assume that we have two set of  $2^c-1$ long-claws, $C^1_1, C^1_2, \ldots C^1_{2^c}$, and $C^2_1, C^2_2, \ldots C^2_{2^c}$ where in both sets each long claw has arm length $h$. We label the leaves of $C^1_i$ as $a^1_i, b^1_i, c^1_i$ and we label the endpoints of $C^2_i$ as $a^2_i, b^2_i, c^2_i$. Then for $1 \leq i \leq 2^{c-1}-1$ we glue $a^1_{2i}$ to $b^1_i$, $a^1_{2i+1}$ to $c^1_i$, $a^2_{2i}$ to $b^2_i$, and $a^2_{2i+1}$ to $c^2_i$. Furthermore, for $2^{c-1} \leq i \leq 2^c-1$ we add an edge between $b^1_i$ and $b^2_i$ and between $c^1_i$ and $c^2_i$. Note that any collection of $b_i^{j_i}$ and $c_i^{\ell_i}$ with $2^{c-1} \leq i \leq 2^c-1$ and $j_i, \ell_i$ = 1 or 2 is a minimal separator, so there are at least $2^{2^c}$ minimal separators in this construction. Since the arm length of each long-claw is $h$, the total number of vertices in this construction is less than $3h \cdot 2^{c+1}$.

If ${\cal F}$ allows for $k$-claws, then forest of paths and subdivided claws cannot be forbidden in ${\cal F}$, and it can be seen that any induced subgraph of size at most $h$ of the construction just given is a forest of paths and subdivided claws (i.e. three anti-complete paths where one endpoint of each path are glued together). It follows that this construction must belong to ${\cal F}$ and since this construction has at least $2^{2^c}$ minimal separators and less than $3h \cdot 2^{c+1}$ vertices, the statement of the lemma follow for the case where $k$-claw graphs for arbitrarily large $k$ are not forbidden.

Now we assume  that ${\cal F}$ allows $k$-paw graphs for arbitrarily large $k$. The construction and analysis we make in this case is nearly identical to the $k$-claw case. We present it here for completeness. Assume that we have two set of  $2^c-1$ long-paws, $C^1_1, C^1_2, \ldots C^1_{2^c}$, and $C^2_1, C^2_2, \ldots C^2_{2^c}$ where in both sets each long-paw has arm length $h$. We label the endpoints of $C^1_i$ as $a^1_i, b^1_i, c^1_i$ and we label the endpoints of $C^2_i$ as $a^2_i, b^2_i, c^2_i$. Then for $1 \leq i \leq 2^{c-1}-1$ we glue $a^1_{2i}$ to $b^1_i$, $a^1_{2i+1}$ to $c^1_i$, $a^2_{2i}$ to $b^2_i$, and  $a^2_{2i+1}$ to $c^2_i$. Lastly, for $2^{c-1} \leq i \leq 2^c-1$ we add an edge between $a^1_i$ and $a^2_i$ and between $b^1_i$ and $b^2_i$. Note that any collection of $b_i^{j_i}$ and $c_i^{\ell_i}$ with  $2^{c-1} \leq i \leq 2^c-1$ and $j_i, \ell_i$ = 1 or 2 is a minimal separator, so there are at least $2^{2^c}$ minimal separators in this construction. Since the arm length of each long-claw is $h$, the total number of vertices in this construction is less than $3h \cdot 2^{c+1}$.

Since ${\cal F}$ allows for $k$-paws, a forest of paths and subdivided paws cannot be forbidden in ${\cal F}$, and it can be seen that any induced subgraph of size at most $h$ of the construction just given is a forest of paths and subdivided paws. It follows that this construction must belong to ${\cal F}$ and since this construction has at least $2^{2^c}$ minimal separators and less than $3h \cdot 2^{c+1}$ vertices, the statement of the lemma follows for the case where $k$-paw graphs for arbitrarily large $k$ are not forbidden.
\end{proof}

We are now ready to prove Theorem \ref{theorem 2}

\begin{figure}
\centerline{\includegraphics[height = 4.5cm, width=1\linewidth, scale=.8]{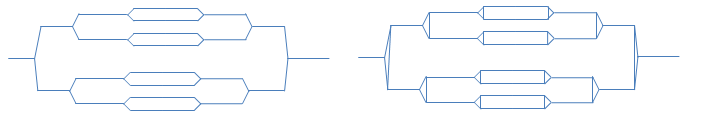}}
\caption{\em The two graphs in this figure are small versions of the constructions of the graphs given in Lemma \ref{k-claws are untame}, explicit vertices are omitted in this graph. The left side graph is the construction provided when when the $k$-claw is not forbidden for arbitrarily large $k$. The right hand side graph is the construction provided when when the $k$-paw is not forbidden for arbitrarily large $k$.}
\label{paw claws graph}
\end{figure}

\begin{proof}[Proof of Theorem~\ref{theorem 2}]
Let ${\cal F}$ be a family of graphs defined by a finite number of forbidden induced subgraphs. It follows from Lemmas \ref{exponentially many mins seps} and \ref{k-claws are untame} that if ${\cal F}$ allows for any $k$-thetas, $k$-prisms, $k$-pyramids, $k$-ladder-thetas, $k$-ladder-prisms, $k$-claws, or $k$-paws for arbitrarily large $k$, ${\cal F}$ is feral.

Now assume that there exists a natural number $k$ such that ${\cal F}$ forbids $k$-thetas, $k$-prisms, $k$-pyramids, $k$-ladder-thetas, $k$-ladder-prisms, $k$-claws, and $k$-paws. Observe that there exists a $k'$ large enough so that if $G$ contains an induced $k'$-ladder, then $G$ contains an induced $k$-claw or $k$-paw graph, therefore ${\cal F}$ forbids $k'$-ladders. It then follows from Lemma \ref{k creature implies theta} there exists a $k''$ such that no $G \in {\cal F}$ can contain a $k''$-creature, where the minimum value of $k''$ is a function of $k$. Furthermore, it is clear that there exists a $k'''$ large enough so that if $G$ contains a $k'''$-skinny-ladder as an induced minor, then $G$ contains a $k$-claw or a $k$-paw as an induced subgraph. Hence ${\cal F}$ forbids $k'''$-skinny-ladders as an induced minor. It then follows from Theorem \ref{theorem 1} that there is a function $f : \mathbb{N} \rightarrow \mathbb{N}$ such that for all $G \in {\cal F}$ the number of minimal separators of $G$ is at most $n^{f(k)\log(n)}$. Hence ${\cal F}$ is tame.
\end{proof}

\section{Long Cycle-free Graphs}\label{cycle free graphs}

Here we present a proof of Theorem \ref{theorem 3} which is based on an easy application of Corollary \ref{outside domination}. We will need the following lemma in order to apply Corollary \ref{outside domination}.

\begin{lemma}\label{cycle free domination}
Let $G$ be a $C_{\geq r}$-free graph and assume $G$ does not contain a $k$-creature. Then every minimal separator, $S$, can be dominated by $r\cdot k^2$ vertices of $G$ not in $S$.
\end{lemma}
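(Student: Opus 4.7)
The plan is to apply Lemma~\ref{path domination} to reduce the problem to a per-path bound, then exploit $C_{\geq r}$-freeness via a long-cycle argument. Let $S$ be a minimal separator of $G$ and pick two $S$-full components $A$ and $B$ of $G-S$ (both exist by minimality of $S$). By Lemma~\ref{path domination} applied to $A$, there are induced paths $P_1,\ldots,P_{k-1}$ in $A$ whose vertex-union dominates $S$. It therefore suffices to prove the following per-path subclaim: for any induced path $P$ in $A$, the set $T:=N(P)\cap S$ can be dominated by at most $r\cdot k$ vertices of $V(P)$. Summing this over the fewer than $k$ paths yields a dominating set of $S$ inside $A\subseteq G\setminus S$ of size at most $(k-1)\cdot rk\leq rk^2$, as desired.

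For the per-path subclaim, I construct $D\subseteq V(P)$ via two greedy procedures. The left-to-right greedy processes elements $s\in T$ in order of increasing $f(s)$ (the index of $s$'s leftmost neighbor on $P$); for each yet-uncovered $s$ it adds $p_{g(s)}$ (the rightmost neighbor of $s$ on $P$) to a set $D_L$, and then prunes $D_L$ to a minimally-dominating subset. By construction, for each $v\in D_L$ there is an anchor $s_v\in T$ with $v=p_{g(s_v)}$, so every $P$-neighbor of $s_v$ lies at a position at most $\mathrm{pos}(v)$. The symmetric right-to-left greedy produces $D_R$ whose anchors have all their $P$-neighbors at positions at least $\mathrm{pos}(v)$. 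Set $D:=D_L\cup D_R$; this dominates $T$. The key consequence of $C_{\geq r}$-freeness is that for every $s\in T$, consecutive $P$-neighbors of $s$ are within $r-3$ positions of each other, since otherwise $s$ together with the induced subpath of $P$ between two such neighbors forms an induced cycle of length at least $r$.

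Suppose for contradiction that $|D|>rk$. Let $v_L\in D_L$ be its leftmost element, with anchor $s_L$ and position $a_L$, and let $v_R\in D_R$ be its rightmost element, with anchor $s_R$ and position $a_R$; by the confinement property, $s_L$ has no $P$-neighbors at positions $>a_L$ and $s_R$ has none at positions $<a_R$. Assuming $a_L<a_R$, pick an induced path $Q$ in $B\cup\{s_L,s_R\}$ from $s_L$ to $s_R$ (which exists because $B$ is $S$-full and connected), and form the cycle
\[
C\ :=\ s_L,\ v_L,\ p_{a_L+1},\ \ldots,\ v_R,\ s_R,\ Q,\ s_L.
\]
I claim $C$ is an induced cycle. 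Its $P$-segment $P[v_L,v_R]$ is chord-free (as $P$ is induced), its $Q$-segment is chord-free by construction, and the two segments are anti-complete because $P[v_L,v_R]\subseteq A$ and $Q$'s interior lies in $B$. The only potential chords come from $s_L$ or $s_R$ into the $P$-segment interior, and these are ruled out by the single-sided confinement; if the edge $s_Ls_R$ happens to be present, the even shorter cycle $s_L,P[v_L,v_R],s_R,s_L$ is still induced. The length of $C$ is at least $a_R-a_L+2$, and a pigeonhole-on-positions argument shows that $|D|>rk$ forces this span (after selecting $v_L,v_R$ appropriately, possibly from a further Ramsey-type partition of $D$) to be at least $r-2$, producing an induced cycle of length $\geq r$ and contradicting $C_{\geq r}$-freeness.

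The main obstacle is the ``reversed'' case in which $D_L$'s leftmost vertex lies to the right of $D_R$'s rightmost vertex; then the cycle above cannot be built directly. However, in this regime every $s\in T$ must have a long chain of $P$-neighbors spanning a wide window (because $D_L$ dominates $T$ only using positions far to the right, and $D_R$ only using positions far to the left), and combining this density with $k$-creature-freeness---via an argument in the spirit of Lemma~\ref{vc lemma}, picking $k$ elements of $T$ with privates carved out of this dense neighborhood---yields a $k$-creature, again a contradiction. The $k$-creature hypothesis is also used in the opening invocation of Lemma~\ref{path domination}.
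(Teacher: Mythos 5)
Your proposal follows the paper's two-step outline (invoke Lemma~\ref{path domination} to reduce to one path $P$, then turn a large per-path domination number into a long induced cycle using a second $S$-full component as the ``return'' segment), but the per-path step is implemented very differently and this is where the gaps appear. The paper shows that if $S_P = N(P)\cap S$ cannot be dominated by $r$ vertices of $P$ then one can directly extract two vertices $a,b\in S_P$ and a long ``gap'' subpath $P'$ of $P$ that sees $a$ only at one end and $b$ only at the other; the long induced cycle is then $a$--$P'$--$b$ closed off through the other $S$-full component (or by the edge $ab$, or by $a$ alone). Crucially, that step uses only $C_{\geq r}$-freeness; $k$-creature-freeness is consumed once, in the invocation of Lemma~\ref{path domination}.

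Your bidirectional greedy ($D_L$ built from rightmost $P$-neighbors, $D_R$ from leftmost) is what forces you into the ``reversed case'' $a_L>a_R$, and that case is a genuine gap. You acknowledge you cannot build the cycle there and instead gesture at a $k$-creature extraction ``in the spirit of Lemma~\ref{vc lemma}'' from the fact that every $s\in T$ must span the window $[a_R,a_L]$; this is not worked out, and it is far from clear it yields a $k$-creature (the candidate private neighbors and the required anti-completeness conditions are never exhibited). The paper never encounters this configuration because it selects $a$ and $b$ extremally (smallest last $P$-neighbor, largest first $P$-neighbor) rather than through two opposing greedy passes; with that choice the span $\mathrm{first}(b)-\mathrm{last}(a)$ is forced to be either small (and then every $s$ has a neighbor near $\mathrm{last}(a)$, giving an $O(r)$ dominating set) or $\geq r-2$ (and then the long cycle appears), and there is no third case. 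Your ``pigeonhole-on-positions argument'' and the ``further Ramsey-type partition of $D$'' for the forward case are also undeveloped, whereas the paper needs neither because the single extremal gap already has the required length. So the proposal is not just a rougher version of the paper's argument -- it is a different decomposition whose extra case would need its own proof, and the natural route to that proof is to abandon the bidirectional construction and revert to the extremal choice, at which point you recover the paper's argument.
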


\begin{proof}
Let $G$ be a $C_{\geq r}$-free graph and assume $G$ does not contain a $k$-creature. Assume for a contradiction that there exists a minimal separator, $S$, of $G$ such that $S$ cannot be dominated by $r\cdot k^2$ vertices in $G$ and not in $S$. Let $H$ be an $S$-full component of $G-S$, then by Lemma \ref{path domination}, $S$ is dominated a subset of $H$ that is the union of $k^2$ induced paths in $H$. It follows there must exists some induced path $P$ in $H$ such that $S_P = N(P) \cap S$ cannot be dominated by $r$ vertices in $P$. There then exists a subpath $P'$ of $P$ such that there are vertices $a,b \in S_P$ that have no neighbor in $P'$, both component of $P-P'$ have vertices that are neighbors with $a$ and/or $b$. It follows that we can extend the path $P'$ to have endpoints $x_a$ and $x_b$
such that the only neighbors of $a$ in $P'$ is $x_a$ and possible $x_b$ and the only neighbors of $b$ in $P'$ is $x_b$ and possibly $x_a$. If $x_a$ and $x_b$ are both neighbors with $a$ then $P'$ and $a$ form a cycle of length $r$, and if $x_a$ and $x_b$ are both neighbors with $b$ then $P'$ and $b$ form a cycle of length $r$ so assume neither of these cases occur. If $a$ and $b$ are neighbors then $P'$ $a$, $b$ make a cycle of length more than $r$. Else, there is an induced path, $T$ between $a$ and $b$ with all of its internal vertices contained in some $S$-full component other than $H$. It follows that $P'$, and $T$ makes a cycle of length more than $r$, a contradiction.
\end{proof}

\begin{proof}[Proof of Theorem~\ref{theorem 3}]
Let $G$ be a $C_{\geq k}$-free graph that is $k$-theta, $k$-prism, and $k$-pyramid free. Since $G$ is $C_{\geq k}$-free this implies that $G$ is also $k$-ladder-theta, $k$-ladder-prism, and $k$-ladder free. Lemma \ref{k creature implies theta} then implies that there exists a function $f : \mathbb{N} \rightarrow \mathbb{N}$ (independent of the choice of $k$ or $G$) such that $G$ is $f(k)$-creature-free. Lemma \ref{cycle free domination} gives that every minimal separator $S$ of $G$ can be dominated by $kf(k)^2$ vertices not in $S$. Hence, by Corollary \ref{outside domination} $G$ has at most $|V(G)|^{(kf(k)^2)^2+2kf(k)^2}$ minimal separators. It follows that the family of graphs that are $C_{\geq k}$-free, $k$-theta, $k$-prism, and $k$-pyramid free is tame.
\end{proof}

\section{Graph With Bounded Clique Size}\label{bounded clique}
Here we present a proof of Theorems \ref{theorem 4} and \ref{theorem 5} which are based on an easy application of Corollary \ref{outside domination}. We will need the following lemma in order to apply Corollary \ref{outside domination}.

\begin{lemma}\label{bounded clique outside domination}
Let $k'$ = $4[(8k^2)^{k+1}]^7$. 
If $G$ is $k$-creature free, $G$ does not contain a $k$-skinny-ladder as an induced minor, and no minimal separator of $G$ contains a clique of size $k$, then every minimal separator $S$ of $G$ can be dominated by at most $(k')^{k+1}$ vertices of $G-S$. 
\end{lemma}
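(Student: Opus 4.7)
The plan is to combine Lemma~\ref{domination or creature/ladder} with an induction on the clique number of the separator. Set $k_0' := 4[(8k^2)^{k+1}]^5$. Under the hypotheses, Lemma~\ref{domination or creature/ladder} already guarantees a dominating set $D$ of $S$ with $|D| \leq k_0'$, but $D$ may intersect $S$. The vertices of $D \setminus S$ lie outside $S$ and can be kept verbatim in the output. For each $u \in D \cap S$ we still need to cover $N[u] \cap S$ by vertices of $V(G) \setminus S$; the vertex $u$ itself is cheap to cover, because $u$ lies on a minimal separator and therefore has at least one neighbor in every $S$-full component, so we can spend one outside vertex per such $u$.

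The real work is in covering $N(u) \cap S$ from outside, and the key structural observation is the following. For $u \in S$, let $G_u := G[(V(G) \setminus S) \cup (S \cap N[u])]$, the induced subgraph obtained by deleting from $G$ all vertices of $S$ that do not lie in $N[u]$. Then $S \cap N[u]$ is a minimal separator of $G_u$: it separates exactly the same $S$-full components of $G-S$, and minimality is preserved because each of its vertices still has a neighbor in every such component. Passing to $G_u - u$, the set $S \cap N(u)$ becomes a minimal separator with strictly smaller clique number, because any clique inside $S \cap N(u)$ could be extended by $u$ to a larger clique inside $S$. Moreover, being an induced subgraph of $G$, the graph $G_u - u$ is still $k$-creature-free and has no $k$-skinny-ladder as an induced minor, so the inductive hypotheses propagate.

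Define $H(r)$ to be the worst-case minimum size of an outside dominating set over all minimal separators of clique number at most $r$ in graphs that are $k$-creature-free and have no $k$-skinny-ladder induced minor. The outline above yields the recurrence
\begin{equation*}
H(r) \;\leq\; k_0'\bigl(1 + H(r-1)\bigr)
\end{equation*}
with $H(0) = 0$, which telescopes to $H(r) \leq 2(k_0')^r$. Applying this with $r = k-1$, a valid upper bound on $\omega(G[S])$ since $S$ contains no $K_k$, bounds the outside dominating set by $2(k_0')^{k-1}$; this comfortably sits below $(k')^{k+1}$, since $k' \geq k_0' \geq 2$ and hence $(k')^{k+1} \geq (k_0')^{k+1} \geq 2(k_0')^{k-1}$.

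The main obstacle I anticipate is making the inductive step fully rigorous: one has to verify carefully that $S \cap N[u]$ and $S \cap N(u)$ really are minimal separators of $G_u$ and $G_u - u$ respectively (minimality uses that $S$ is a minimal separator of $G$ and that the $S$-full components of $G-S$ survive as full components in the reduced graphs), and that the clique number strictly decreases after deleting $u$ (which exploits the universality of $u$ inside $S \cap N[u]$ arising from the construction). Once these routine structural verifications are in place, the induction runs cleanly and the claimed bound follows by elementary arithmetic.
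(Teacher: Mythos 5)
Your proposal is correct and follows essentially the same route as the paper's own proof: invoke Lemma~\ref{domination or creature/ladder} to get a small (mixed) dominating set, replace each inside vertex $u$ by one outside neighbor, restrict to $S\cap N(u)$ — which survives as a minimal separator of the induced subgraph obtained by deleting the rest of $S$ — and recurse, with the recursion depth bounded by the clique number of $S$ because consecutive pivots along any root-to-leaf chain are pairwise adjacent. The paper phrases this as an explicit recursive algorithm and counts $k'\cdot(k')^{k}=(k')^{k+1}$ directly rather than solving a recurrence, but the decomposition, the structural observation about $S\cap N(u)$, and the clique-number depth bound are the same.
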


\begin{proof}
Let $k'$, $k$, and $G$ be as in the statement of the lemma. Let $G'$ be an induced subgraph of $G$ and let $S'$ be a minimal separator of $G'$. Then $G'$ 
must be $k$-creature free and $k$-ladder free, so it follow from Lemma \ref{domination or creature/ladder} that $S'$ can be dominated by $k'$ vertices of $G'-S'$.

We will produce a set of $(k')^{k+1}$ vertices of $G-S$ that dominate $S$ by considering the following recursive algorithm. The input to the algorithm is ($G'$,$S'$) where $G'$ is a subgraph of $G$ and $S'$ is a minimal separator of $G'$, and the algorithm returns a set of vertices which will be described shortly. The algorithm finds two vertex sets $A$ and $B$ such that $|A| + |B| \leq k'$,  $A \subset V(G')$, $B \subset S'$, and $A \cup B$ dominate $S'$ (such a set must exists by what was established in the previous paragraph). Let $B'$ be a set of vertices in $G'-S'$ such that $|B'| \leq |B|$ and $B'$ dominates $B$. For each $b \in B$ we recursively call the algorithm on ($G'-(S'-[S' \cap N(b)]), S' \cap N(b))$ (note that $S' \cap N(b)$ is a minimal separator of $G'-(S'-[S' \cap N(b)])$). Let $X$ be the union of the sets returned by each recursive call. Then algorithm then returns $X \cup A \cup B'$.

If we initially call this algorithm on $(G,S)$ for some minimal separator $S$ of $G$, then it is clear that the set this algorithm returns is a subset of vertices of $G-S$ that dominate $S$. We can also see the depth of this recursive algorithm cannot go past $k$ without producing a clique of size $k$ in $S$ since the minimal separator we recursively call this algorithm on is always dominated by the open neighborhood of some vertex $v$ of $S$. So, the depth of the recursion tree is at most $k-1$ and each node has at most $k'$ children since $|B| \leq k'$. It follows that since each recursive call of the algorithm adds at most $k'$ vertices to the set it returns, the size of the final returned set cannot exceed $k'\cdot k'^{k}$
\end{proof}

\begin{proof}[Proof of Theorem~\ref{theorem 4}]
Let $G$ be a graph that is $k$-creature free and does not contain a $k$-skinny-ladder as an induced minor, and furthermore assume that no minimal separator of $G$ has a clique of size $k$. By Lemma \ref{bounded clique outside domination} there exists a function $f : \mathbb{N} \rightarrow \mathbb{N}$ such that all minimal separators, $S$, of any graph that is $k$-creature free, does not contain a $k$-skinny-ladder as an induced minor, and has no minimal separator that contains a clique of size $k$, can be bounded by $f(k)$ vertices outside of $S$. It then follows from Corollary \ref{outside domination} that $G$ has at most $|V(G)|^{f(k)^2+2f(k)}$ minimal separators. Hence, the family of graphs that are $k$-creature free, do not contain a $k$-skinny-ladder as an induced minor, and have no minimal separator has a clique of size $k$ is tame.
\end{proof}

\begin{proof}[Proof of Theorem~\ref{theorem 5}]
Let ${\cal F}$ be a family of graphs defined by a finite number of forbidden induced subgraphs. Assume that ${\cal F}$ forbids the complete graph on $k$ vertices for some natural number $k$. It follows from Lemmas \ref{exponentially many mins seps} and \ref{k-claws are untame} that if ${\cal F}$ allows for any $k'$-thetas, $k'$-ladder-thetas, $k'$-claws, or $k'$-paws for arbitrarily large $k'$, then ${\cal F}$ is feral.

Now assume that for some integer $k$ that ${\cal F}$ forbids $k$-thetas, $k$-ladder-thetas, $k$-claws, and $k$-claws. Since ${\cal F}$ forbids $k$-cliques as well, it follows that ${\cal F}$ forbids $k$-prisms, $k$-pyramids, and $k$-ladder-prisms. Observe that there exists a $k'$ large enough so that if $G$ contains an induced $k'$-ladder, then $G$ contains an induced $k$-claw or $k$-paw, therefore $G$ does not contain a $k'$-ladder. It follows from Lemma \ref{k creature implies theta} there exists a $k''$ such that no $G \in {\cal F}$ can contain a $k''$-creature, where the minimum value of $k''$ is a function of $k$. Furthermore, it is clear that there exists a $k'''$ large enough so that if $G$ contains a $k'''$-skinny-ladder as an induced minor, then $G$ contains a $k$-claw or a $k$-paw as an induced subgraph. Hence ${\cal F}$ forbids $k'''$-skinny-ladders as an induced minor. Now, if no graph of ${\cal F}$ contains a minimal separator with a clique of size $k$, then it then follows by Lemma \ref{bounded clique outside domination} there exists a function $f : \mathbb{N} \rightarrow \mathbb{N}$ such that for all $G \in {\cal F}$ it holds that all minimal separators $S$ of $G$ can be bounded by $f(k)$ vertices in $G-S$. It then follows from Corollary \ref{outside domination} that for all $G \in {\cal F}$ has at most $|V(G)|^{f(k)^2+2f(k)}$ minimal separators. Therefore ${\cal F}$ is tame.
\end{proof}

\section{Conclusion}\label{sec:conclusion}
In this paper we disproved a conjecture of Abrishami et al.~\cite{abrishami2020graphs} that for any natural number $k$, the family of graphs that exclude $k$-creatures is tame. On the other hand, we proved a weakened form of the conjecture, that every family of graphs that excludes $k$-creatures and also excludes $k$-skinny ladders as induced minors is strongly-quasi-tame. This led to a complete classification of graph families defined by a finite number of forbidden induced subgraphs into strongly-quasi-tame and feral, substantially generalizing the main result of Milani\v{c} and Piva\v{c}~\cite{milani2019minimal}. 
The tools we develop on the way to prove our main results yield with some additional effort polynomial upper bounds instead of quasi-polynomial, proving tameness instead of strong quasi-tameness, for two interesting special cases. In particular we show that the conjecture of Abrishami et al.~\cite{abrishami2020graphs} is true for $C_{\geq r}$-free graphs for every integer $r$, as well as for $K_r$-free graphs excluding an $r$-skinny ladder for every integer $r$. The first of these results generalizes work of Chudnovsky et al.~\cite{chudnovsky2019maximum}, who proved that $C_{\geq 5}$-free, $k$-creature free graphs are tame,


Although Theorems \ref{theorem 1} and \ref{theorem 2} provide a strongly-quasi-tame bound we have no examples of non-tame families that exclude $k$-creatures and  $k$-skinny ladders for some $k$. We conjecture that these classes of graphs are actually tame. 
\begin{conjecture} \label{our conjecture}
For every natural number $k$, the family of graphs that are $k$-creature free and do not contain a $k$-skinny-ladder as an induced minor is tame.
\end{conjecture}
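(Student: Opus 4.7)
The plan is to prove the stronger statement that every minimal separator $S$ in such a graph can be dominated by a constant-sized set $X \subseteq V(G) \setminus S$, with the constant depending only on $k$. Once this is in place, Corollary \ref{outside domination} immediately yields polynomial tameness. The starting point is Lemma \ref{path domination}, which already provides outside-$S$ domination by a union of fewer than $k$ induced paths lying in any full component of $G - S$; the only thing missing is a constant bound on the \emph{total vertex count} of these paths.

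To bound that total, I would adapt the structure already developed in Section \ref{creature and ladder free graphs}. Assuming for contradiction that $S$ is not dominated by any $c(k)$-size subset of $V(G) \setminus S$, I would prove a variant of Lemma \ref{path domination of nice set} phrased for outside-$S$ domination: since the minimally connected subgraphs used to dominate $S$ in Lemma \ref{path domination} already live in $V(G) \setminus S$, one should be able to carry the same induction through to obtain a large subset $S' \subseteq S$ together with long paths $P_L$ and $P_R$ in distinct full components that minimally dominate $S'$. From there, the cleanup machinery of Section \ref{creature and ladder free graphs}, together with Lemma \ref{almost skinny implies skinny}, would extract a $k$-skinny ladder as an induced minor, contradicting the hypothesis.

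The main obstacle --- and the reason Theorem \ref{theorem 4} currently requires the additional no-large-clique assumption --- is the case where $S$ contains a large clique. In the recursive scheme of Lemma \ref{bounded clique outside domination}, clique-freeness of $S$ is precisely what bounds the recursion depth at $k$; without it, a minimal separator might in principle contain deeply nested cliques, each forcing an additional level of recursion and blowing up the constant. Handling this would likely require showing that any such deeply nested clique structure inside $S$ itself forces either a $k$-creature or a $k$-skinny-ladder induced minor, perhaps by iteratively pulling the clique vertices together with dominating paths in the full components and analyzing the resulting configuration via Ramsey-type arguments on the neighborhoods. I expect this clique-in-the-separator case to be the real difficulty, and indeed it seems to be why the authors leave Conjecture \ref{our conjecture} open.
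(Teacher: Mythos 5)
The statement you are addressing is Conjecture~\ref{our conjecture}, which the paper explicitly leaves open; there is no proof in the paper to compare against, and the best upper bound established in this generality is the quasi-polynomial one of Theorem~\ref{theorem 1}. Your proposal is not a proof and does not claim to be one; rather it is an accurate diagnosis of where the paper's machinery runs out. You correctly identify that the paper's route to genuine (polynomial) tameness, in each of the cases it does handle (Theorems~\ref{theorem 3}, \ref{theorem 4}, \ref{theorem 5}), passes through Corollary~\ref{outside domination} and therefore needs a constant-size dominating set for $S$ living \emph{outside} $S$, and you correctly locate the obstruction to pushing this through in general: the recursion of Lemma~\ref{bounded clique outside domination} uses clique-freeness of $S$ to bound its depth, because each level replaces $S$ by $S \cap N(b)$ for some $b \in S$, and without a cap on clique size nothing stops these nested neighborhoods from shrinking arbitrarily slowly. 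That reading of the paper is right.

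Two caveats worth flagging. First, the claim that ``the only thing missing is a constant bound on the total vertex count of the paths'' from Lemma~\ref{path domination} is imprecise; a path can be arbitrarily long while a constant-size subset of it still dominates $S$, and conversely short paths in each component can fail to give a small common dominating set. What you actually need is the existence of a constant-size $X \subseteq V(G)\setminus S$ dominating $S$, which is a statement about $X$, not about the lengths of the paths that certify $X$'s existence. Second, and more fundamentally, the outside-domination statement you propose to prove is strictly stronger than the conjecture's conclusion of tameness, and it is not established anywhere in the paper that it should hold for $k$-creature-free, $k$-skinny-ladder-minor-free graphs. It is entirely conceivable that the conjecture is true while your intermediate statement fails for graphs whose minimal separators contain large cliques, in which case a proof would have to find a different mechanism for trading the $\log n$ in the exponent of Lemma~\ref{quasi-poly minimal separators} for a constant --- for instance, a more clever branching algorithm whose recursion depth is bounded without outside domination. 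Committing to the outside-domination route as the unique plan leaves that possibility unexamined.
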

Conjecture~\ref{our conjecture}, if true, put together with the proof of Theorem~\ref{theorem 2} would lead to the following classification of hereditary families defined by a finite set of forbidden induced subgraphs. 
%
%
\begin{conjecture} \label{our conjecture2}
Let ${\cal F}$ be a graph family defined by a finite number of forbidden induced subgraphs. If there exists a natural number $k$ such that ${\cal F}$ forbids all $k$-theta, $k$-prism, $k$-pyramid, $k$-ladder-theta, $k$-ladder-prism, $k$-claw, and $k$-paw graphs, then ${\cal F}$ is tame. Otherwise ${\cal F}$ is feral.
\end{conjecture}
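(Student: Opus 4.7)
The plan is to reduce Conjecture~\ref{our conjecture2} to Conjecture~\ref{our conjecture} by mimicking the proof of Theorem~\ref{theorem 2}. For the ``otherwise'' direction, Lemmas~\ref{exponentially many mins seps} and~\ref{k-claws are untame} already yield that ${\cal F}$ is feral whenever one of the listed families is not forbidden for every $k$, so nothing new is required. For the tame direction, suppose ${\cal F}$ forbids all $k$-theta, $k$-prism, $k$-pyramid, $k$-ladder-theta, $k$-ladder-prism, $k$-claw and $k$-paw for some fixed $k$. As in the proof of Theorem~\ref{theorem 2}, bounded forbidden $k$-claw and $k$-paw subgraphs imply that ${\cal F}$ forbids $k'$-ladders as induced subgraphs for some $k'$; Lemma~\ref{k creature implies theta} then gives $k''$ such that every $G \in {\cal F}$ is $k''$-creature-free; and the same ``long path produces a subdivided claw or paw'' argument produces $k'''$ such that no $G \in {\cal F}$ contains a $k'''$-skinny-ladder as an induced minor. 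Applying Conjecture~\ref{our conjecture} to this subclass concludes tameness of ${\cal F}$.

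The genuine content is therefore concentrated in Conjecture~\ref{our conjecture}, which asks to replace the quasi-polynomial bound of Theorem~\ref{theorem 1} by a polynomial one. My plan is to replace the branching scheme of Lemma~\ref{quasi-poly minimal separators} by a direct enumeration driven by the constant-size dominator provided by Lemma~\ref{domination or creature/ladder}. Fix a $k$-creature-free graph $G$ with no $k$-skinny-ladder as induced minor, and let $S$ be a minimal separator with constant-size dominator $X$. Writing $X_S := X \cap S$ and $X_E := X \setminus S$, a candidate for $S$ is obtained by guessing $X$ (in $n^{O_k(1)}$ ways), the split $X = X_S \sqcup X_E$ (in $2^{O_k(1)}$ ways), and for each $v \in X_E$ the pattern $N(v) \cap S$ (in at most $n^{k+1}$ ways by Corollary~\ref{vc corollary}). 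The union $X_S \cup \bigcup_{v \in X_E}(N(v)\cap S)$ captures every vertex of $S$ that lies in $X$ or has a neighbor in $X_E$, and is enumerated polynomially.

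The main obstacle is the residual set $T := \{s \in S \setminus X_S : N(s) \cap X \subseteq X_S\}$, consisting of vertices of $S$ whose only witnesses in the dominator lie in $S$ itself. Corollary~\ref{vc corollary} is silent about such patterns, since it bounds neighborhood intersections with $S$ only for vertices \emph{outside} $S$. I see two plausible routes to closing the gap: (i) prove an ``inside'' analog of Corollary~\ref{vc corollary}, bounding $\{N(v) \cap S : v \in S,\ S \text{ minimal separator}\}$ polynomially, via a Sauer--Shelah argument where a shattering witness is turned into a $k$-creature whose side containing $v$ is routed through $v$; or (ii) strengthen Lemma~\ref{domination or creature/ladder} so that $X$ can always be chosen disjoint from $S$, generalizing the special cases handled by Theorems~\ref{theorem 3} and~\ref{theorem 4} through Corollary~\ref{outside domination}.

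The harder of the two, in my estimation, is route (i), because the $k$-twisted-ladder of Section~\ref{counter example section} shows that in creature-free graphs alone the dominator cannot in general be pushed outside $S$; one must instead squeeze more out of the no-skinny-ladder-induced-minor hypothesis. Concretely, I would attempt to extend the V.C.-dimension argument of Lemma~\ref{vc lemma} to the case $v \in S$: given $s_1,\dots,s_k \in S$ shattered by minimal separators $S_1,\dots,S_k$ all containing $v$, try to construct a $k$-creature using induced paths through the $S_i$-full components on either side of $v$, with the roles of $x_i$ and $y_i$ played by the $s_i$'s and their private separating witnesses. If a shattering instead produces a long alternating configuration rather than a creature, the aim is to feed that configuration into a Lemma~\ref{path domination of nice set}-style cleanup to manufacture a $k$-skinny-ladder induced minor, contradicting the hypothesis either way and yielding the desired polynomial bound.
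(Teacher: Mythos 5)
You should be aware that the statement you were asked to prove is \emph{Conjecture}~\ref{our conjecture2}, not a theorem: the paper itself does not prove it, and explicitly remarks in the Conclusion that ``Conjecture~\ref{our conjecture}, if true, put together with the proof of Theorem~\ref{theorem 2} would lead to the following classification.'' Your reduction is exactly that observation: the feral direction is already handled unconditionally by Lemmas~\ref{exponentially many mins seps} and~\ref{k-claws are untame}, and for the tame direction you replicate Theorem~\ref{theorem 2}'s derivation of $k''$-creature-freeness (via Lemma~\ref{k creature implies theta}, after noting that forbidding $k$-claws and $k$-paws forbids $k'$-ladders for some $k'$) and exclusion of $k'''$-skinny-ladders as induced minors, then invoke Conjecture~\ref{our conjecture} in place of Theorem~\ref{theorem 1}. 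That reduction is sound and matches the paper's own stated relationship between the two conjectures.

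The genuine gap, which you honestly flag yourself, is that Conjecture~\ref{our conjecture} remains open, and the two routes you sketch do not close it. Route (ii) cannot be salvaged as stated: Lemma~\ref{domination or creature/ladder} demonstrably cannot be strengthened to give a dominator disjoint from $S$ in general, because that would already prove Conjecture~\ref{our conjecture} via Corollary~\ref{outside domination}, and the proof of Lemma~\ref{bounded clique outside domination} shows that to push the dominator outside $S$ one needs an extra hypothesis (there, bounded clique in $S$) to terminate the recursion. Route (i) is the more promising one and correctly identifies the residual set $T$ as the obstruction; however, as you note, an ``inside'' Sauer--Shelah bound cannot follow from creature-freeness alone (the $k$-twisted ladder is the witness), so one has to combine the VC argument with the no-skinny-ladder hypothesis, and no concrete mechanism for doing this (how a failed shattering-to-creature construction produces a skinny-ladder minor) is given. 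So the proposal is a faithful restatement of the reduction the paper already records plus a plausible but incomplete plan for the open part; it does not constitute a proof of Conjecture~\ref{our conjecture2}, and it does not claim to.
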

We remark that Conjecture~\ref{our conjecture} implies Conjecture~\ref{our conjecture2}, but not the other way around. In particular Conjecture~\ref{our conjecture2} might be easier to prove.

We have so far been unsuccessful in identifying other counterexamples to Conjecture~\ref{false conjecture} that look ``substantially different'' from the $k$-twisted ladders constructed in Section~\ref{counter example section}. For this reason it is tempting to conjecture that at least for induced minor closed classes, a "clean" classification of all classes into tame or feral is possible.
\begin{conjecture}\label{con:inducedMinorcoNJ}
Every induced-minor-closed class {\cal F} is either tame or feral. 
\end{conjecture}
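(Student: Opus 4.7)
My plan is a case analysis on how $\mathcal{F}$ interacts with the two obstructions from Theorem~\ref{theorem 1}, exploiting closure under induced minors to sharpen each condition into a concrete forbidden induced subgraph. \emph{Case 1: $\mathcal{F}$ is not $k$-creature-free for any fixed $k$.} I claim $\mathcal{F}$ is feral. Given $G \in \mathcal{F}$ containing a $k$-creature $(A, B, X, Y)$, induced-minor closure lets us delete everything outside $A \cup B \cup X \cup Y$ and then contract each of the connected sets $A$ and $B$ to a single vertex, yielding $H \in \mathcal{F}$ on exactly $2k+2$ vertices. In $H$, the contracted vertices $a, b$ are joined respectively to all of $X$ and $Y$, the $x_i y_i$ matching is intact, and no other $X$--$Y$ edges exist; the $2^k$ sets $\{x_i : i \in I\} \cup \{y_i : i \notin I\}$ for $I \subseteq [k]$ are then distinct minimal $a,b$-separators, exhibiting a $c^{|V(H)|}$ separator count.

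\emph{Case 2: $\mathcal{F}$ is $k_0$-creature-free for some fixed $k_0$.} One first checks that both creature-freeness and skinny-ladder-induced-minor-freeness are preserved under induced minors --- for creatures, a purported creature in a contraction $G^{uv}$ lifts to one in $G$ by replacing the contracted vertex with $\{u,v\}$, whose anti-completeness and domination conditions are inherited. If $\mathcal{F}$ also excludes $k_1$-skinny-ladders as induced minors for some $k_1$, then Theorem~\ref{theorem 1} immediately gives that $\mathcal{F}$ is strongly-quasi-tame. Otherwise $\mathcal{F}$ admits arbitrarily large skinny-ladder induced minors, and by closure the $k$-skinny-ladders themselves lie in $\mathcal{F}$; these however have only polynomially many minimal separators, so their mere presence does not force ferality and a finer obstruction must be identified to decide between tame and feral.

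The two main obstacles are therefore (i) closing the strongly-quasi-tame versus tame gap in the first subcase of Case 2, which is precisely Conjecture~\ref{our conjecture}, and (ii) resolving the creature-free but ladder-rich subcase, where the $k$-twisted-ladder family of Section~\ref{counter example section} shows that ferality is genuinely possible. For (i), the natural strategy is to upgrade Lemma~\ref{domination or creature/ladder} so that every minimal separator $S$ admits a constant-size dominator \emph{disjoint from} $S$, after which Corollary~\ref{outside domination} yields a polynomial bound directly; induced-minor closure provides additional leverage here by allowing us to contract candidate dominators freely. For (ii), the plan is to show that any $k_0$-creature-free IM-closed class containing arbitrarily large skinny ladders must either be essentially the class of skinny ladders and their induced minors (hence tame) or produce, via iterated contraction of a non-trivial ``twist pattern,'' an arbitrarily large twisted-ladder-like structure forcing ferality. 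Classifying which twist patterns survive creature-freeness under induced-minor closure is the conceptual crux of Conjecture~\ref{con:inducedMinorcoNJ}, and I expect this to be by far the hardest part of the program.
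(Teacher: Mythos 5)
This is stated in the paper as Conjecture~\ref{con:inducedMinorcoNJ}, an open problem; the authors offer no proof, so there is nothing to compare your argument against. You have correctly recognized this and have written a research program rather than a proof, and your program is sound as far as it goes.

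Your Case 1 is correct and worth emphasizing: the observation that under induced-minor closure a single $k$-creature can be compressed to a graph on $2k+2$ vertices with $2^k$ minimal separators (contract $A$ and $B$ to single vertices $a,b$; then for each $I\subseteq[k]$ the set $\{x_i:i\in I\}\cup\{y_i:i\notin I\}$ is a minimal $a,b$-separator since $a$ is now complete to $X$ and $b$ to $Y$) is exactly what makes induced-minor closure the ``right'' hypothesis here, and it rules out the kind of tame-but-creature-rich example that Abrishami et al.\ exhibit for general hereditary classes. Your lifting argument that $k$-creature-freeness is preserved under edge contraction is also correct (and is asserted in passing in the paper after Corollary~\ref{outside domination}), though the case where the contracted vertex lands in $X$ requires a small case analysis on whether $u$, $v$, or both retain the needed adjacencies to $y_i$ and to $A$; in each subcase one of $u,v$ can play the role of $x_i$ after possibly discarding the other or absorbing it into $A$.

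The genuine gap, which you name yourself, is Case 2: once $\mathcal{F}$ excludes $k_0$-creatures, the subcase that also excludes skinny ladders reduces to the paper's Conjecture~\ref{our conjecture} (upgrading strongly-quasi-tame to tame), and the subcase that admits arbitrarily large skinny-ladder induced minors is entirely open --- the $k$-twisted-ladder family shows ferality is possible there, while the skinny ladders themselves are tame, so some classification of ``twist patterns'' compatible with creature-freeness and IM-closure would be needed. Your proposal does not close either gap and does not claim to. In short: your decomposition is the natural one, Case 1 is a complete and correct argument, and the remainder is an accurate map of what is missing rather than a proof.
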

Since removing vertices and contracting edges can not increase the number of minimal separators, Conjecture~\ref{con:inducedMinorcoNJ}, would show (in an informal sense) that both the brittleness of the boundary between tame and non-tame hereditary classes, as well as the existence of non-tame hereditary classes that are not feral is primarily due to ``number fiddling'' effects such as in the example of Abrishami et al.~\cite{abrishami2020graphs} of a tame family containing $k$-creatures for arbitrarily large $k$.

\bibliographystyle{alpha}
\bibliography{bibliography}
\end{document}